\title{Repeated Communication with Private Lying Cost}
\author{Harry PEI\footnote{Department of Economics, Northwestern University. I thank  Drew Fudenberg,
Piotr Dworczak,
 Yingni Guo, Delong Meng, Wojciech Olszewski, Teck Yong Tan, Eran Shmaya, and Bruno Strulovici for helpful comments.}}
\date{First Draft: January 28, 2020. This Draft: \today}
\begin{document}
\newtheorem{Proposition}{\hskip\parindent\bf{Proposition}}
\newtheorem{Theorem}{\hskip\parindent\bf{Theorem}}
\newtheorem*{Theorem1}{\hskip\parindent\bf{Theorem 1'}}
\newtheorem*{Theorem2}{\hskip\parindent\bf{Theorem 2'}}
\newtheorem*{Assumption2}{\hskip\parindent\bf{Assumption 2'}}
\newtheorem{Lemma}{\hskip\parindent\bf{Lemma}}[section]
\newtheorem{Corollary}{\hskip\parindent\bf{Corollary}}
\newtheorem{Definition}{\hskip\parindent\bf{Definition}}
\newtheorem{Assumption}{\hskip\parindent\bf{Assumption}}
\newtheorem{Condition}{\hskip\parindent\bf{Condition}}
\newtheorem{Conjecture}{\hskip\parindent\bf{Conjecture}}
\newtheorem*{Lemma1}{\hskip\parindent\bf{Lemma A.0}}

\maketitle
\numberwithin{equation}{section}

\noindent I study repeated communication games between a patient sender and a sequence of receivers. The sender has \textit{persistent private information} about his psychological cost of lying, and in every period, can privately observe the realization of an i.i.d. state before communication takes place. I characterize every type of sender's highest equilibrium payoff. When the highest lying cost in the support of the receivers' prior belief approaches the sender's benefit from lying, every type's highest equilibrium payoff in the repeated communication game converges to his equilibrium payoff in a one-shot Bayesian persuasion game. I also show that in every sender-optimal equilibrium, no type of sender mixes between telling the truth and lying at every history. When there exist ethical types whose lying costs outweigh their benefits, I provide necessary and sufficient conditions for all non-ethical type senders to attain their optimal commitment payoffs. I identify an \textit{outside option effect} through which the possibility of being ethical decreases every non-ethical type's payoff.\\

\noindent \textbf{Keywords:} reputation, repeated game, lack of commitment, Bayesian persuasion, communication.\\
\noindent \textbf{JEL Codes:} C73, D82, D83

\begin{spacing}{1.5}

\section{Introduction}\label{sec1}
Economists have long recognized that
informed experts' commitment power
has significant effects
on communication outcomes.
In the seminal work of Crawford and Sobel (1982),
an expert's temptation to
mislead his audience undermines his credibility, which leads to ineffective information transmission and low social welfare.
When the expert can \textit{commit} to disclosure policies, as in the model of Kamenica and Gentzkow (2011),
his messages can have more influence over others' decisions, and his payoff improves relative to the benchmark scenario without commitment.

In practice, experts face credibility issues when committing to disclosure policies.\footnote{Two exceptions to this statement include: the leading application of Kamenica and Gentzkow (2011), in which a prosecutor is required by law to disclose everything he learns from investigations, as well as pharmaceutical companies that are legally obliged to disclose the outcomes of drug trials to the FDA.}
This is especially the case when an expert has private interests on
his advisees' decisions and his optimal disclosure policy is stochastic.
In these scenarios, it is against the expert's own interest to honor his commitment after
receiving payoff-relevant private information.

A plausible microfoundation for the expert's commitment is that he communicates with multiple receivers one at a time, and each receiver observes the expert's past recommendations and compare them with past state realizations.
However, according to the results of Fudenberg, Kreps and Maskin (1990) and Fudenberg and Levine (1994),
when the expert's optimal disclosure policy is nontrivially stochastic,
his highest equilibrium payoff in the repeated game is strictly bounded below his optimal commitment payoff no matter how patient he is.
This is because
receivers \textit{cannot} perfectly monitor the expert's stochastic disclosure policies, and as a result,
inefficient punishments need to occur on the equilibrium path in order to provide the expert incentives.

I examine the extent to which a strategic expert can restore his commitment power in repeated communication games
when he has \textit{persistent private information} about his \textit{psychological cost of lying}.
In my model, a patient sender communicates with an infinite sequence of receivers, arriving one in each period and each plays the game only in the period she arrives.
The stage-game follows from the leading example in Kamenica and Gentzkow (2011).
In every period, the sender privately observes the realization of an i.i.d. state, which is either \textit{high} or \textit{low}, and
recommends either a \textit{high action} or a \textit{low action}. The receiver chooses one of the two actions after observing the sender's recommendation together with the history of states and recommendations.
The receiver wants to match her action with the state.
The sender strictly prefers the high action regardless of the state, and suffers from
a psychological cost of lying
when
his recommendation fails to match the state.

I assume each receiver strictly prefers the low action under her prior belief, and first focus on settings in which the sender's highest possible lying cost is strictly lower than his benefit from the high action. The latter implies that \textit{all} types of the sender are \textit{non-ethical}, in the sense that they have \textit{strict incentives} to mislead a receiver who takes their messages at face value. This modeling assumption contrasts to the commitment-type model of Mathevet, Pearce and Stacchetti (2019), in which the sender can commit to disclosure policies with positive probability.

Theorem \ref{Theorem1} characterizes every type of patient sender's \textit{highest equilibrium payoff}. To compute this payoff, consider an auxiliary \textit{static optimization problem}
in which a planner chooses a distribution over \textit{stage-game action profiles} to maximize this type of sender's expected payoff subject to two constraints.\footnote{Each stage-game action for the sender is a mapping from the set of states to the set of messages. Each stage-game action for the receiver is a mapping from the set of messages to the set of actions. A stage-game action profile consists of a stage-game action for the sender and a stage-game action for the receiver.}
First, the highest-cost type sender's payoff under this distribution is no more than his payoff under the full disclosure policy.
Second, the receiver's stage-game action best replies against the conditional distribution over the sender's stage-game actions.

My characterization result has three implications. First, every type's highest equilibrium payoff depends only on his true cost of lying and the highest cost of lying in the support of receivers' prior belief. Second, for every type who does not have the highest cost, his highest equilibrium payoff is \textit{strictly greater} than his highest attainable payoff in a repeated game where his lying cost is common knowledge.
This is somewhat puzzling given that the sender needs to extract information rent (or equivalently, reveal information about his persistent type) in order to obtain such high payoffs.
As the sender becomes more patient, the number of periods
he needs to extract information rent to attain a given discounted average payoff
\textit{grows without bound}. How can he reveal persistent private information
for unboundedly many times while preserving his informational advantage?
Third, as the highest lying cost in the support of receivers' prior belief converges to the sender's benefit from lying, every type of sender's highest equilibrium payoff in the repeated game converges to his optimal commitment payoff in the one-shot game.
This observation provides a strategic justification for the sender's commitment to disclosure policies in Bayesian persuasion models.

My second result clarifies the distinction between a strategic-type sender who faces high lying cost and a commitment-type sender who uses his optimal disclosure policy in every period. Theorem \ref{Theorem2} shows that whenever the sender has two or more types, in every sender-optimal equilibrium, no type mixes between telling the truth and lying (i.e., recommending the high action in both states) at all on-path histories. It then implies that no type adopts his optimal disclosure policy at every on-path history in \textit{any equilibrium} (no matter whether it is sender-optimal or not).
These conclusions extend to a type
 whose lying cost exactly offsets his benefit from the high action.

To understand why, suppose toward a contradiction that one type of sender mixes between the two stage-game actions at every history. Then telling the truth at every history and lying at every history are both his best replies. First, suppose this type is not the one with the highest lying cost, then
the highest-cost type tells the truth with probability one at every on-path history. As a result, the second-highest-cost type will be separated from the highest-cost type as soon as he lies, after which
his cost becomes the highest
in the support of receivers' posterior belief. This implies that his equilibrium payoff cannot exceed his payoff in the repeated complete information game, which contradicts the second implication of Theorem \ref{Theorem1} that the second-highest-cost type strictly benefits from incomplete information.
Second, suppose this type is the one with the highest lying cost, then lying at every history is one of his equilibrium best replies, from which he obtains his highest equilibrium payoff in the repeated incomplete information game. This suggests a \textit{lower bound} on the second-highest-cost type's  payoff by lying in every period, which one can verify that it is
strictly greater than his highest equilibrium payoff. This leads to a contradiction.

My third result incorporates the possibility that the sender is \textit{ethical} in the sense that his lying cost outweighs his benefit from the high action. Theorem \ref{Theorem3} provides a necessary and sufficient condition for all non-ethical types to attain their optimal commitment payoffs. My condition depends only on the \textit{highest} and the \textit{lowest lying costs among the ethical types}. Moreover, given the existence of at least one ethical type, and fixing one of the aforementioned variables, this condition is satisfied
if and only if the other variable \textit{falls below} a cutoff.
This result suggests that a non-ethical sender can be \textit{worse off} when receivers entertain the possibility
of an ethical-type with a high lying cost. This contrasts to situations without ethical types in which
every type of patient sender's highest equilibrium payoff strictly increases with
the highest  lying cost.

The above observation is driven by an \textit{outside option effect} that is absent in models without ethical types as well as reputation models with commitment types.
When receivers entertain the possibility
of an ethical type with a high lying cost, the ethical types whose lying costs are relatively low enjoy better outside options given that they can imitate the equilibrium strategy of this newly introduced high-cost type.
Such an improvement in outside options limits the frequency with which each ethical type can lie, and as a result, reduces non-ethical types' opportunities to lie while pooling with at least some ethical types.

I construct equilibria that exhibit \textit{slow learning} and \textit{reputation rebuilding} to establish the attainability of high payoffs in
 Theorems \ref{Theorem1} and \ref{Theorem3}.
Take Theorem \ref{Theorem1} for example,
in periods where active learning takes place,
all types of the sender mix between lying and telling the truth, with the highest-cost type telling the truth with strictly higher probability.
All other types lie with probability one if and only if
the receiver's belief attaches probability close to $1$ to the highest-cost type.
The highest-cost type's mixing during the active learning phase
allows each low-cost type  to \textit{rebuild} his reputation after milking it. It also reduces his reputational loss each time he extracts information rent,
which enables him to benefit from his persistent private information in the long run.
To provide incentives for all types of senders to mix in the active learning phase, I construct absorbing phases after which learning about the sender's type stops. The sender's action choices in the active learning phase affect
the time at which play reaches the absorbing phase, as well as his continuation payoff after learning stops.

\paragraph{Related Literature:}
My paper is related to the literature on
repeated communication games, strategic communication games with lying costs, and repeated games with incomplete information.

Repeated communication games are studied in the seminal works of
Sobel (1985) and Benabou and Laroque (1992).
Best and Quigley (2017) and Mathevet, Pearce and Stacchetti (2019)
use this framework to
rationalize the commitment assumption in Bayesian persuasion models.\footnote{Kuvalekar, Lipnowski and Ramos (2019) study repeated communication games in which receivers cannot observe the state realizations in the past. They show the equivalence between repeated communication games without feedback and one-shot communication games with capped money burning. Renault, Solan and Vieille (2013) and Margaria and Smolin (2018) focus on cases in which both players are patient. In those models, the sender can be punished by transferring payoffs to a patient receiver. This is not feasible in my model since receivers are myopic. Meng (2018) studies repeated communication games in which the receiver is \textit{patient} and \textit{can commit}. His result bounds the receiver's payoff from below when the sender has persistent private information about his preference.}

Mathevet, Pearce and Stacchetti (2019) adopt the commitment-type approach. They show that a patient sender can attain his Bayesian persuasion payoff if with positive probability, he is a \textit{commitment type} who mechanically communicates according to his optimal disclosure policy at every history. This differs from my baseline model in which all types of the sender are rational and strictly prefer to mislead receivers.
In terms of behavior, I show
that no type of rational sender communicates according to his optimal disclosure policy at every history.

Best and Quigley (2017) study repeated communication games \textit{without} persistent types. They propose a \textit{coin and cup
mechanism} that allows future receivers to perfectly monitor the sender's mixed actions, under which
the patient sender can attain his optimal commitment payoff.
In particular,
the sender has access to a
\textit{private randomization device}, the realization of which is \textit{not} observed by the current-period receiver, but is perfectly observed by the sender and will be truthfully disclosed to all future receivers.
However, the sender needs to have commitment power
since truthfully disclosing the realizations of this private randomization is against his own interest.
Compared to Best and Quigley (2017), my model focuses on situations in which
the sender
\textit{cannot commit} to act against his own interest, and consequently, the
receivers \textit{cannot perfectly monitor} the sender's mixed actions.

The fact that people face
psychological costs of lying has been established experimentally by
Gneezy (2005) and Gneezy, Kajackaite and Sobel (2018).
It has been incorporated in the strategic communication models of
Kartik, Ottaviani and Squintani (2007) and Kartik (2009).
Guo and Shmaya (2019) and Nguyen and Tan (2019)
study static communication games with lying costs in which
the sender receives private information according to a \textit{pre-committed information structure} before communicating with an uninformed receiver.
Their models nest Bayesian persuasion games of Kamenica and Gentzkow (2011), cheap talk games of Crawford and Sobel (1982), and communication games with information design in Ivanov (2010).\footnote{Another approach to bridge the gap between cheap talk games and Bayesian persuasion games is proposed by Lipnowski, Ravid and Shishkin (2019), who study static communication games in which the sender has transparent motives, and can commit to disclosure policies with positive probability.}
Their results provide a microfoundation for the sender's commitment to disclosure policies when his
lying cost is sufficiently large.

Those features contrast to my model in which the sender
automatically observes the state, in the sense that he
\textit{cannot} commit to receive coarser information.
My results suggest that non-ethical senders can attain their commitment payoffs \textit{only when} the highest lying cost belongs to some interval, not when the latter is large enough. Different from Guo and Shmaya (2019), the possibility of having a high cost of lying
may hurt a non-ethical sender through an outside option effect. In addition, the attainability of the sender's optimal commitment payoff relies on his ability to extract information rent in the long run, rather than relying entirely on the cost of lying.

My paper contributes to the study of repeated incomplete information games  pioneered by Aumann and Maschler (1995) and Hart (1985).
Shalev (1994) characterizes the set of equilibrium payoffs in private-value games with one-sided private information and no discounting.
P\c{e}ski (2014) extends Shalev's characterization to repeated games with discounting and allows for two-sided private information.
When the informed player is  patient and the uninformed player's discount factor is bounded away from one,
Cripps and Thomas (2003) show that
Shalev's result provides
a \textit{necessary condition}
for being an equilibrium payoff, but it is \textit{not sufficient} in general.

I provide  conditions that are \textit{both necessary and sufficient} for a patient sender's equilibrium payoff in a repeated communication game where receivers are impatient. My results are robust to perturbations of the receiver's discount factor. I also examine the common properties of the sender's behavior that uniformly apply across all sender-optimal equilibria. This raises novel questions provided that the existing literature on repeated games focuses mostly on equilibrium payoffs.

\section{Model}\label{sec2}
Time is discrete, indexed by $t =0,1,2,...$. A long-lived sender with discount factor $\delta \in (0,1)$ interacts with an infinite sequence of receivers, arriving one in each period and each plays the game only in the period she arrives.\footnote{My results are robust to perturbations of the receiver's discount factor. For example, when the sender communicates with one long-lived receiver whose discount factor
is strictly positive but close to zero.}

In period $t$, the realization of
$\omega_t \in \Omega \equiv \{h,l\}$ is privately observed by the sender.
The states $\{\omega_t\}_{t \in \mathbb{N}}$ are i.i.d., with
$p_h \in (0,1/2)$ the probability of $\omega_t=h$ and the value of $p_h$ being
commonly known.
The sender sends message $m_t \in M \equiv \{h,l\}$ to the period $t$ receiver. The latter takes an action $a_t \in A \equiv \{H,L\}$ after observing $m_t$ and the \textit{public history} $h^t \equiv \{a_s,\omega_s,m_s\}_{s=0}^{t-1} \in \mathcal{H}^t$.

The receiver's payoff is normalized to $0$ if $a_t=L$, her payoff is $1$ if $a_t=H$ and $\omega_t=h$, and her payoff is $-1$ if $a_t=H$ and $\omega_t=l$.
The sender's stage-game payoff is:
\begin{equation}\label{2.1}
    u_s(c,a_t,m_t,\omega_t)= \mathbf{1}\{a_t=h\} - c \mathbf{1}\{m_t \neq \omega_t\},
\end{equation}
where $c \in \mathcal{C}  \equiv \{c_1,c_2,...,c_n\} \subset [0,1)$ is interpreted as his \textit{psychological cost of lying}, and is incurred whenever the literal meaning of his message does not match the realized state.
Without loss of generality, I assume that $0 \leq c_n < c_{n-1} < ... c_2 <c_1 < 1$.
 In section \ref{sec5}, I relax the assumption that all types of the sender's lying cost are strictly less than $1$, and examine how the possibility of \textit{ethical-type senders} affects non-ethical-type sender's equilibrium payoffs.

I assume that $c$ is \textit{perfectly persistent} and is the sender's private information (or his \textit{type}).
The receivers entertain a full support prior belief $\pi \equiv (\pi_1,...,\pi_n)$,
with $\pi_j$ the probability of type $c_j$.
The distributions of
$c$ and $\{\omega_t\}_{t \in \mathbb{N}}$ are independent.
Under the assumption that $c_1<1$, all types of the sender are non-ethical in the sense that they have strict incentives to recommend the high action irrespective of the state
when facing a receiver who takes messages at their face values.

Let $\mathcal{H} \equiv \cup_{t=0}^{\infty} \mathcal{H}^t$ be the set of public histories.
The receiver's strategy is $\sigma_r : \mathcal{H} \times M \rightarrow \Delta (A)$.
Type $c$ sender's strategy is $\sigma_c: \mathcal{H} \times \Omega \rightarrow \Delta (M)$.
A strategy profile is
$\sigma \equiv \big((\sigma_c)_{c \in \mathcal{C}},\sigma_r\big)$, which
consists of a strategy for every type of the sender and a strategy for the receivers.

A \textit{Bayesian Nash Equilibrium} (BNE) is a strategy profile such that $\sigma_r$ maximizes each receiver's expected stage-game payoff at every $h^t$ that occurs with positive probability under $\sigma$, and $\sigma_c$ maximizes type $c$ sender's \textit{discounted average payoff}, given by:
\begin{equation}\label{2.2}
    \mathbb{E}^{(\sigma_c,\sigma_r)} \Big[
    \sum_{t=0}^{\infty} (1-\delta)\delta^t u_s(c,a_t,m_t,\omega_t)
    \Big],
\end{equation}
where $\mathbb{E}^{(\sigma_c,\sigma_r)}[\cdot]$ is the expectation under the probability measure induced by $(\sigma_c,\sigma_r)$.

A \textit{sequential equilibrium} consists of a strategy profile $\sigma\equiv \big((\sigma_c)_{c \in \mathcal{C}},\sigma_r\big)$ and an \textit{assessment}
$\boldsymbol{\mu}$
such that  (1) $\sigma_r$ maximizes the receiver's stage-game payoff at every history according to $\boldsymbol{\mu}$, (2)
for every $c \in\mathcal{C}$ and at every information set of type $c$ sender,
$\sigma_c$ maximizes his discounted average payoff in the continuation game
against $\sigma_r$,  and (3) there exists a sequence of completely mixed strategy profiles $\{\sigma^n\}_{n \in \mathbb{N}}$
and a sequence of assessments $\{\boldsymbol{\mu}^n\}_{n \in \mathbb{N}}$ such that
for every $n \in \mathbb{N}$,
$\boldsymbol{\mu}^n$ is derived from
$\sigma^n$
according to
Bayes Rule, and $(\sigma^n, \boldsymbol{\mu}^n) \rightarrow (\sigma,\boldsymbol{\mu})$ in the product topology.

\paragraph{Remark:}
The sender's stage-game payoff in (\ref{2.1}) embodies the \textit{non-consequentialism view} on lying, which postulates that
the sender incurs a cost of lying
whenever the literal meaning of his message fails to match the true state.
This includes situations in which his lie has caused no harm to the receivers, for example, when the receiver's action is independent of his message.
This view on lying is supported by the writings of Immanuel Kant and the recent work of Sobel (2020).\footnote{Kant
wrote in his influential article \textit{On the Supposed Right to Lie From Benevolent Motives} that \textit{To be truthful (honest) in all declarations is therefore a sacred unconditional command of reason, and not to be limited by any expediency.}
Sobel (2020) wrote that \textit{Lying depends on the existence of accepted meanings for messages, but does not require a model of how the audience
responds to messages}.}
It is adopted in the game theoretic models of Kartik, Ottaviani and Squintani (2007) and Kartik (2009).
In Appendix \ref{secD}, I extend my analysis to the \textit{consequentialism view} of lying, that
a lying cost is incurred \textit{only when} it has caused negative payoff consequences to the receiver (e.g., receiver takes the message at face value).\footnote{Martin Luther wrote \textit{...a lie out of necessity, a useful lie, a helpful lie, such lies would not be against God, he would accept them.} Gneezy (2005) shows experimentally that holding the sender's benefit from lying fixed, the propensity to lie decreases when the receiver becomes more credulous, or when the receiver's loss from the sender's lie increases.} My main takeaways remain valid under this alternative view.

\section{Results}\label{sec3}
\subsection{Highest Equilibrium Payoff}\label{sub3.1}
The sender's \textit{payoff} is an $n$-dimensional vector $v =(v_1,v_2,...,v_n)\in \mathbb{R}^n$, where $v_j$ is the discounted average payoff of type $c_j$. Let $v^* \equiv (v_1^*,...,v_n^*)$, with
\begin{equation}\label{3.1}
    v_j^* \equiv p_h \Big\{
    1+\frac{c_1-c_j}{2p_h +c_1 (1-2p_h)}
    \Big\}.
\end{equation}
Theorem \ref{Theorem1} characterizes every type of patient sender's \textit{highest equilibrium payoff}, and claims that the highest equilibrium payoffs for all types can be attained in the same equilibrium:
\begin{Theorem}\label{Theorem1}
There is no BNE such that type $c_1$ attains payoff strictly more than $v_1^*$. For every $\varepsilon>0$, there exists $\underline{\delta} \in (0,1)$ such that when $\delta >\underline{\delta}$:
\begin{enumerate}
\item There is no BNE such that type $c_j$ attains payoff more than $v_j^*+\varepsilon$ for some $j \in \{2,3,...,n\}$.
\item There exists a sequential equilibrium in which the sender attains payoff within $\varepsilon$ of $v^*$.
\end{enumerate}
\end{Theorem}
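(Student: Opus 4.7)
The plan is to prove Theorem \ref{Theorem1} in two parts: an upper bound on each type's BNE payoff and a construction of a sequential equilibrium attaining it.

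For the upper bound, the first task is to establish $v_1\leq v_1^*=p_h$. The heuristic is that type $c_1$, having the strictly highest cost and $c_1<1$, cannot extract any rent from being confused with lower-cost types: lower-cost types only make receivers more skeptical, and full disclosure is the best pure-strategy Stackelberg commitment available to him. To formalize this I would extract from an arbitrary BNE $\sigma$ the time-averaged (discounted) joint distribution $\nu$ over pure stage-game action profiles induced by play, verifying that (a) every receiver stage-game action in the support of $\nu$ best-replies against the conditional distribution over sender stage-game actions, and (b) a direct check of the four pure sender actions, together with a Fudenberg--Levine-style imperfect-monitoring argument forbidding mixed-strategy commitment gains, yields $v_1(\nu)\leq p_h$. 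For type $c_j$ with $j\geq 2$ I would then solve the auxiliary LP described after Theorem \ref{Theorem1}, whose constraints are exactly (a) and $v_1(\nu)\leq p_h$. I expect the LP optimum to be supported on the three profiles (truth, best-reply), (always-high, always-low), (always-high, best-reply), with weights pinned down by the tight receiver best-reply constraint and the tight $v_1=p_h$ constraint, yielding value $v_j^*$. The slack $\varepsilon$ absorbs the $O(1-\delta)$ discrepancy between time averages and discounted averages.

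For the attainability part, I would construct a sequential equilibrium approximating $v^*$ along the lines sketched in the introduction. Play is organized into repeated blocks, each consisting of an \emph{active learning phase} followed by an \emph{absorbing phase}. In the active phase, every type randomizes between truth-telling and lying at every history, with type $c_1$ assigning the strictly highest probability to truth-telling, and the posterior over types drifts slowly as a martingale. Once the posterior enters one of several pre-specified regions, the absorbing phase begins and continuation play implements the appropriate static mixture identified in the LP. The active-phase randomization probabilities would be chosen to meet three requirements: (i) each lower-cost type is kept indifferent between truth-telling and lying by the wedge between the continuation payoffs of the two absorbing regions; (ii) the exit-probability distribution over absorbing regions delivers the target time-averaged payoff $v^*$; and (iii) the per-period posterior drift is small enough that, as $\delta\to 1$, each low-cost type has unboundedly many opportunities to extract information rent before his reputation is absorbed. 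The active-phase mixing by type $c_1$ is what lets each low-cost type rebuild his reputation after milking it.

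The main obstacle will be the bound $v_1\leq p_h$. Receivers' equilibrium actions depend on posteriors over \emph{all} types and on all types' continuation strategies, so no immediate complete-information reduction is available; one must rule out the possibility that a carefully chosen cross-type pooling lets the highest-cost type exceed $p_h$ via mixed-strategy commitments that would not be sustainable in the complete-information repeated game. The time-average-to-LP reduction is what makes this tractable: it isolates the receiver best-reply condition as a static constraint and reduces the question to checking pure sender actions, at which point $c_1<1$ does the work. A secondary obstacle lies in the construction, since simultaneously supporting indifference for every lower-cost type while generating the correct exit distribution over multiple absorbing regions requires a calibration via a fixed-point argument whose tightness grows with the number of types $n$.
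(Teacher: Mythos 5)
Your overall architecture matches the paper's: reduce each type's discounted payoff to an occupation measure over stage-game action profiles, show that measure satisfies the receiver best-reply constraint and the constraint that type $c_1$'s payoff under it is at most $p_h$, identify the LP optimum supported on $(\mathbf{a}^H,\mathbf{b}^T)$, $(\mathbf{a}^L,\mathbf{b}^T)$, $(\mathbf{a}^L,\mathbf{b}^N)$, and attain it with a slow-learning/reputation-rebuilding construction. But there is a genuine gap exactly where you locate the ``main obstacle,'' and your proposed resolution of it does not work. You claim that the static reduction plus ``a direct check of the four pure sender actions'' yields $v_1(\nu)\le p_h$, with $c_1<1$ doing the work. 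It cannot: the occupation measure $\rho^*(\mathbf{a}^L,\mathbf{b}^T)+(1-\rho^*)(\mathbf{a}^H,\mathbf{b}^T)$ satisfies the receiver best-reply constraint with equality and gives type $c_1$ payoff $p_h+p_h(1-c_1)>p_h$, so the static constraints alone admit the Bayesian-persuasion payoff for type $c_1$ and no check of pure actions rules it out. The bound $v_1\le p_h$ is inherently dynamic, and the complete-information Fudenberg--Kreps--Maskin deviation (``always send the message in your support that induces $L$'') does not close the argument here, because at histories where no such message exists for type $c_1$, the receiver's willingness to trust must be sustained by some \emph{other, lower-cost} type truthfully reporting state $l$. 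The paper's Proposition 1 handles this by induction on the support of the receiver's posterior: at the first period in which the modified best reply earns a positive payoff in state $l$, some type $c_\tau<c_i$ separates by sending the other message, the induction hypothesis bounds that type's continuation by $p_h$ at the resulting smaller-support history, and type $c_\tau$'s own incentive constraint at that period, combined with the fact that the highest-cost type's stage payoff is pointwise weakly lower, transfers the bound back to type $c_i$. That cross-type incentive step is the missing ingredient; without it the upper bound for every $j$ collapses, since constraint $v_1(\nu)\le p_h$ is what pins down $v_j^*$.

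Two smaller points. First, the receiver best-reply property of the occupation measure holds only approximately at fixed $\delta$; the paper needs Gossner's entropy bound to control the number of periods in which the receiver's one-step-ahead prediction of the sender's mixed action is wrong, and then a separate continuity argument (nontrivial because the $\varepsilon$-relaxed feasible set is not convex) to pass from the relaxed LP value to $v_j^*$. Your ``$\varepsilon$ absorbs the $O(1-\delta)$ discrepancy'' does not address either issue. Second, in the construction, your requirements (i)--(iii) are the right ones, but the paper additionally needs a \emph{rebounding} phase in which learning temporarily stops and $(\mathbf{a}^L,\mathbf{b}^N)$ is played: at some on-path histories type $c_1$'s promised continuation value hits his minmax payoff $0$ while the other types' promised values cannot yet be delivered by a no-learning continuation, and without this phase the promise-keeping argument (that play reaches the absorbing phase with probability one) fails.
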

Theorem \ref{Theorem1} suggests that $v_j^*$ is type $c_j$ patient sender's \textit{highest equilibrium payoff} in the repeated communication game.
The two statements use different solution concepts to ensure that first, the payoff upper bounds in statement 1 apply under weak solution concepts, i.e., it applies to a broader set of outcomes. Second, the equilibria that approximately attain $v^*$
survive demanding refinements such as sequential equilibrium. Therefore, they are not driven by unreasonable beliefs off the equilibrium path.

The formula for type $c_j$'s highest equilibrium payoff has three implications.
First, every type of sender's highest equilibrium payoff depends \textit{only} on the receiver's prior belief about the i.i.d. state, his own cost of lying $c_j$, and the highest lying cost in the support of the receivers' prior belief $c_1$.
It \textit{does not} depend on the other types in the support of $\pi$ and the probability of each type.

Second, the type that has the highest lying cost
\textit{cannot} receive payoff that is strictly greater than $p_h$.
According to Fudenberg, Kreps and Maskin (1990) and Fudenberg and Levine (1994), $p_h$ is every type of sender's highest equilibrium payoff in a repeated game where his lying cost is common knowledge. Intuitively, this is because type $c_1$ is the \textit{most ethical type}, and as a result, he has no good candidate to imitate in the repeated game with persistent private information.

Third, $v_j^* >p_h$ for every $j \geq 2$. In another word, every type of sender
except for type $c_1$ \textit{strictly benefits} from persistent private information. This is somewhat puzzling since
a type of sender
obtaining discounted average payoff \textit{strictly greater} than $p_h$ requires this type to extract information rent, i.e., lying while receiving a receiver's trust. Each receiver's myopic incentive requires this type of sender to behave
differently from the other types, and as a result, extracting information rent reveals information about the sender's lying cost to future receivers and undermines the sender's informational advantage. As $\delta \rightarrow 1$, the number of periods in which the sender needs to extract information rent  (to attain a given discounted average payoff) grows without bound.
This raises a paradox that the sender can reveal information about his persistent type
 for unbounded number of periods while preserving his informational advantage in order to extract information rent in the future.
I explain how to construct equilibrium to resolve this conceptual puzzle in section \ref{sub4.3}, with technical details relegated to Appendix \ref{secA}.

Finally, when the highest lying cost $c_1$ converges to $1$,
$v_j^*$ converges to
\begin{equation}\label{3.2}
v_j^{**} \equiv p_h+p_h(1-c_j), \textrm{ for every } j \in \{1,2,...,n\},
\end{equation}
where $v_j^{**}$ is type $c_j$ sender's payoff in a static \textit{Bayesian persuasion} game taking his psychological cost of lying into account. Intuitively, when the sender can commit to communication rule $\boldsymbol{\alpha}:\Omega \rightarrow \Delta (M)$,
his optimal commitment requires him to send message $h$ with probability $1$ when the state is $h$, and to send message $h$
with probability $\frac{p_h}{1-p_h}$ when the state is $l$.
This disclosure policy
makes the receiver indifferent between actions $H$ and $L$ upon receiving message $h$.


The above implication of
Theorem \ref{Theorem1} provides a microfoundation for the sender's commitment to information structures in Bayesian persuasion models.
In environments where all types of the sender
are rational and have strict incentives to mislead receivers,
as long as there exists one type of the sender whose lying cost is close to his benefit from the receiver's high action,
all types can (approximately) attain their optimal commitment payoffs in a repeated communication game.

Compared to commitment-type models which assume that with positive probability, the sender is a \textit{commitment type} that adopts his optimal information disclosure policy in every period,
my approach addresses
critiques on commitment-type senders who mechanically communicate according to some stochastic disclosure policies in every period, given that
whether
such commitment behaviors can arise from maximizing reasonable payoff functions is somewhat questionable.

Theorem \ref{Theorem1} also characterizes
the extent to which
the sender's commitment power
can be \textit{partially restored} via persistent private information when he repeatedly communicates with multiple receivers over time.
My result provides a tractable formula for every type's highest equilibrium payoff, which is between his highest payoff in the repeated complete information game and his payoff in the one-shot Bayesian persuasion game.
My formula depends only on primitives that have clear economic interpretations and clarifies the role of persistent private information. It bridges the gap between existing results on repeated complete information games, Bayesian persuasion games, and reputation games with commitment types.

\subsection{Equilibrium Behavior}\label{sub3.2}
My next result clarifies the distinction between a strategic-type sender who faces high lying cost and a commitment-type who
mechanically uses his optimal disclosure policy at every history.

Let $\mathbf{a}: \Omega \rightarrow M$ be a \textit{pure stage-game action} for a given type of sender, and let $\mathbf{b}: M \rightarrow A$ be a \textit{pure stage-game action} for a receiver. Let $\mathbf{A}$ and $\mathbf{B}$ be the sets of $\mathbf{a}$ and $\mathbf{b}$, respectively.
Let
$\mathbf{a}^H$ be the \textit{honest strategy} and let $\mathbf{a}^L$
be the \textit{lying strategy} for the sender:
\begin{equation}\label{sender}
\mathbf{a}^H (\omega) \equiv \left\{ \begin{array}{ll}
h & \textrm{ if } \omega=h\\
l & \textrm{ if } \omega=l,
\end{array} \right.
\quad
\mathbf{a}^L (\omega) \equiv \left\{ \begin{array}{ll}
h & \textrm{ if } \omega=h\\
h & \textrm{ if } \omega=l.
\end{array} \right.
\end{equation}
Let $\mathbf{b}^T$ be the \textit{trusting strategy} and let $\mathbf{b}^N$
be the \textit{non-trusting strategy} for the receiver:
\begin{equation}\label{receiver}
\mathbf{b}^T (m) \equiv \left\{ \begin{array}{ll}
H & \textrm{ if } m=h\\
L & \textrm{ if } m=l,
\end{array} \right.
\quad \mathbf{b}^N (m) \equiv \left\{ \begin{array}{ll}
L & \textrm{ if } m=h\\
L & \textrm{ if } m=l.
\end{array} \right.
\end{equation}

Abusing notation, let $\sigma_{c_j}: \mathcal{H} \rightarrow \Delta (\mathbf{A})$ be type $c_j$ sender's strategy,
and let $\sigma_{r}: \mathcal{H} \rightarrow \Delta (\mathbf{B})$ be
the receiver's strategy.
My next result shows that in every \textit{sender-optimal equilibrium},
no type of the sender mixes between
$\mathbf{a}^H$ and $\mathbf{a}^L$ at all on-path histories.
\begin{Theorem}\label{Theorem2}
Suppose $n \geq 2$. For every small enough $\varepsilon>0$, there exists $\underline{\delta} \in (0,1)$ such that when $\delta >\underline{\delta}$, for every BNE in which the sender attains payoff greater than $(v_1^*-\varepsilon,...,v_n^*-\varepsilon)$,
no type of the sender plays both $\mathbf{a}^H$ and $\mathbf{a}^L$ with positive probability at all on-path histories.
\end{Theorem}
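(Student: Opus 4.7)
The plan is to argue by contradiction. Suppose there is a BNE attaining payoffs within $\varepsilon$ of $v^*$ in which some type $c_j$ plays both $\mathbf{a}^H$ and $\mathbf{a}^L$ with positive probability at every on-path history. I would split on whether $j\ge 2$ or $j=1$.

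For the case $j\ge 2$, the key preliminary step is to show that the highest-cost type $c_1$ plays $\mathbf{a}^H$ with probability one at every on-path history, via a single-crossing argument: against a fixed receivers' strategy, every sender strategy $\sigma$ yields type $c$ the payoff $V_c(\sigma)=V_{c_j}(\sigma)-(c-c_j)\,F(\sigma)$, where $F(\sigma)$ denotes the expected discounted frequency of lies, so $c_j$'s indifference, combined with the extra period-$t$ lying mass $(1-p_h)$ carried by $\mathbf{a}^L$ over $\mathbf{a}^H$, turns into a strict preference of $c_1$ for $\mathbf{a}^H$ (the formal step uses $c_j$'s optimal continuations after each pure start as revealed-preference benchmarks). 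Given this, I would derive the contradiction from type $c_2$'s payoff. Since $v_2^*>p_h$ strictly, for small $\varepsilon$ type $c_2$ must send $m=h$ in state $l$ at some on-path history with positive probability; let $\tau$ be the first such random time. Before $\tau$, $c_2$'s messages coincide with truthful reports, so his per-period payoff is at most $p_h$. At $\tau$ the realized outcome $(\omega_\tau,m_\tau)=(l,h)$ has probability zero under $c_1$'s truthful strategy, so receivers' posterior drops $c_1$ and $c_2$ becomes the highest-cost type in the support of the continuation. The continuation of a BNE at an on-path history is a BNE of the subgame, so the exact part of Theorem \ref{Theorem1} bounds $c_2$'s continuation payoff by $p_h$. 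Summing the pre- and post-$\tau$ bounds yields $v_2\le p_h$, contradicting $v_2\ge v_2^*-\varepsilon$.

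For the case $j=1$, I would show that ``always play $\mathbf{a}^L$'' is a best reply for $c_1$ yielding exactly his equilibrium payoff. At every step, $\mathbf{a}^L$ lies in the support of $c_1$'s mixing, so the reached history has positive probability and remains on-path, where by assumption $c_1$ continues to mix; induction then gives that the value of this pure strategy equals $v_1\ge p_h-\varepsilon$. Because $\mathbf{a}^L$ is type-independent and sends the same message in each state, if $c_2$ imitates it the induced distribution over public histories is identical, so $c_2$'s deviation payoff equals $v_1+(c_1-c_2)(1-p_h)$. A direct calculation using $p_h<1/2$ shows $(1-p_h)>p_h/[2p_h+c_1(1-2p_h)]$, which rewrites as $(c_1-c_2)(1-p_h)>v_2^*-p_h$, so this deviation payoff strictly exceeds $v_2^*+\varepsilon$ for sufficiently small $\varepsilon$, contradicting the upper bound on $v_2$ from Theorem \ref{Theorem1}.

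The hardest step will be making the single-crossing argument in Case 1 fully rigorous: although each type's payoff under any fixed sender strategy is linear in the lying cost, the two pure period-$t$ starts $\mathbf{a}^H$ and $\mathbf{a}^L$ lead to different public histories and hence to different type-specific optimal continuations, so establishing strict preference of $c_1$ for $\mathbf{a}^H$ requires carefully chaining revealed-preference inequalities across the two types rather than a direct sign computation. Everything else reduces to invoking Theorem \ref{Theorem1} in an on-path continuation subgame and performing the short algebraic check in Case 2.
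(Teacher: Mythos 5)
Your overall architecture is the paper's: first pin down that every type with cost above $c_j$ reports truthfully at every on-path history, then split on $j\geq 2$ versus $j=1$, and in the latter case transfer a lower bound on the discounted frequency of $(\mathbf{a}^L,\mathbf{b}^T)$ from type $c_1$'s payoff to type $c_2$'s, contradicting the upper bound $v_2^*$ from Theorem \ref{Theorem1}. Your Case $j=1$ is correct as written; the identity $V_{c_2}=V_{c_1}+(c_1-c_2)(1-p_h)$ under ``always $\mathbf{a}^L$'' is exactly the paper's computation with the occupation measure $Q_L$ of $(\mathbf{a}^L,\mathbf{b}^T)$, and the algebraic inequality $(1-p_h)\bigl(2p_h+c_1(1-2p_h)\bigr)>p_h$ does reduce to $(1-2p_h)\bigl(p_h+c_1(1-p_h)\bigr)>0$.

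Two steps in Case $j\geq 2$ need repair. First, the continuation bound after $c_2$'s first lie cannot be obtained by declaring the continuation ``a BNE of the subgame'' and invoking the exact part of Theorem \ref{Theorem1}: that statement is proved for a full-support prior, whereas after $\tau$ type $c_1$ has zero posterior probability yet remains in the type space, and the paper explicitly cautions that a zero-probability type is not equivalent to an excluded type (it can reappear at off-path histories and distort continuation incentives). What you need is precisely Proposition \ref{Prop1} --- at every on-path history the highest-cost type in the support of the posterior has continuation payoff at most $p_h$ --- which the paper proves by induction on the size of the support; cite that rather than Theorem \ref{Theorem1}. Second, your single-crossing step is local and, as you acknowledge, does not close: the continuation lying frequencies after the $\mathbf{a}^H$ and $\mathbf{a}^L$ starts need not be ordered, so the extra period-$t$ lying mass alone does not give $F(\sigma^H)<F(\sigma^L)$. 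The clean argument is global. Since $c_j$ mixes over $\{\mathbf{a}^H,\mathbf{a}^L\}$ at every on-path history, the pure strategy ``always $\mathbf{a}^H$'' is a best reply for $c_j$ with lying frequency $F=0$; writing $V_c(\sigma)=V_{c_j}(\sigma)-(c-c_j)F(\sigma)$, any $\sigma$ with $F(\sigma)>0$ gives type $c_1$ strictly less than $V_{c_1}(\textrm{always-}\mathbf{a}^H)=V_{c_j}(\textrm{always-}\mathbf{a}^H)$, so every best reply of $c_1$ --- in particular $\sigma_{c_1}$ --- has zero lying frequency and hence plays $\mathbf{a}^H$ with probability one at every history it reaches. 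This is the ``separability'' observation the paper appeals to. With these two repairs your proof coincides with the paper's.
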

\begin{proof}[Proof of Theorem 2:] Suppose toward a contradiction that there exists a type $c_j$ that
plays both $\mathbf{a}^H$ and $\mathbf{a}^L$ with positive probability at every on-path history.
Then playing $\mathbf{a}^H$ at every on-path history and playing $\mathbf{a}^L$ at every on-path history are both his best replies against $\sigma_r$. This further implies that for every $i<j$, type $c_i$ plays $\mathbf{a}^H$ with probability $1$ at every on-path history.\footnote{Different from the binary action game in Pei (2019), it is \textit{not} true that for every $k<j$, type $c_k$ plays $\mathbf{a}^L$ with probability $1$ at every on-path history. This is because the sender has other stage-game actions, such as lying in both states, from which case he suffers strictly higher lying cost compared to playing $\mathbf{a}^L$.}
The above statement uses the observation
that the sender's benefit from changing his stage-game action is a separable function with respect to his type and the receiver's stage-game action.

I consider two cases separately.
First, if $j \geq 2$, then type $c_1$ plays $\mathbf{a}^H$ with probability $1$ at every on-path history. Therefore, type $c_2$ separates from type $c_1$ the first time he sends message $h$ in state $l$, after which
he becomes the highest-cost type in the support of the receivers' posterior belief.
According to Proposition \ref{Prop1}, his continuation value is no more than $p_h$.
As a result, type $c_2$'s expected payoff in period $0$ is no more than $(1-\delta)+\delta p_h$, which
is strictly lower than $v_2^*$ as $\delta \rightarrow 1$. This contradicts the presumption that type $c_2$'s equilibrium payoff exceeds $v_2^*-\varepsilon$.

Second, if $j=1$, then type $c_1$ finds it optimal to play $\mathbf{a}^L$ in every period. Since the sender's equilibrium payoff is within $\varepsilon$ of $v^*$, type $c_1$'s payoff is at least $v_1^*-\varepsilon$ by playing $\mathbf{a}^L$ in every period, and type $c_2$'s payoff from doing so is no more than $v_2^*+\varepsilon$.
Since $p_h<1/2$, the receiver's stage-game action of playing $H$ following every message is strictly suboptimal, and cannot be played at any on-path history. Among the remaining three receiver stage-game actions, the sender's stage-game payoff is $1-(1-p_h)c$ under $(\mathbf{a}^L,\mathbf{b}^T)$, is $-(1-p_h)c$
under $(\mathbf{a}^L,\mathbf{b}^N)$ and $(\mathbf{a}^L,\mathbf{b}^O)$, where
\begin{displaymath}
\mathbf{b}^O (m) \equiv \left\{ \begin{array}{ll}
H & \textrm{ if } m=l\\
L & \textrm{ if } m=h.
\end{array} \right.
\end{displaymath}
Let $Q_L$ be the occupation measure of $(\mathbf{a}^L,\mathbf{b}^T)$ when the sender plays $\mathbf{a}^L$ in every period and the receiver plays according to $\sigma_r$. Type $c_1$'s equilibrium payoff is:
\begin{equation*}
    Q_L \Big( 1-(1-p_h)c_1 \Big) -(1-Q_L)(1-p_h)c_1.
\end{equation*}
The presumption that type $c_1$'s equilibrium payoff is more than $p_h-\varepsilon$ yields a lower bound on $Q_L$, which is:
\begin{equation}\label{QL}
    Q_L \geq p_h +(1-p_h) c_1-\varepsilon.
\end{equation}
Type $c_2$'s payoff
by playing $\mathbf{a}^L$ in every period is
    $Q_L \big( 1-(1-p_h)c_2 \big) -(1-Q_L)(1-p_h)c_2$.
Plugging in (\ref{QL}), one can obtain that type $c_2$'s equilibrium payoff is at least:
\begin{equation}\label{QL2}
    p_h+(1-p_h) (c_1-c_2) - \varepsilon
\end{equation}
Given that $p_h<1/2$, the lower bound on type $c_2$'s equilibrium payoff (\ref{QL2}) is strictly greater than $v_2^*+\varepsilon$ when $\varepsilon$ is small enough.
This contradicts statement 1 of Theorem \ref{Theorem1} that type $c_2$'s equilibrium payoff when $\delta$ is large enough cannot exceed $v_2^*+\varepsilon$.
\end{proof}
Since the sender's optimal disclosure policy in a static Bayesian persuasion game is to play $\mathbf{a}^L$ with probability $\rho^* \equiv \frac{p_h}{1-p_h}$, and $\mathbf{a}^H$ with probability $1-\rho^*$, a direct implication of Theorem \ref{Theorem2} is that no matter how large the sender's lying cost is,
there exists no BNE in which he communicates according to his optimal disclosure policy at all on-path histories.

Formally, I call $\boldsymbol{\alpha} \in \Delta (\mathbf{A})$ an $\varepsilon$-optimal disclosure policy if $\mathbf{b}^T$ is a \textit{strict best reply} against $\boldsymbol{\alpha}$, and $\boldsymbol{\alpha}$ belongs to an $\varepsilon$-neighborhood of $\rho^* \mathbf{a}^L +(1-\rho^*) \mathbf{a}^H$.\footnote{$\rho^* \mathbf{a}^L +(1-\rho^*) \mathbf{a}^H$ is \textit{not} an $\varepsilon$-optimal disclosure policy since it violates the first requirement. This is because according to Fudenberg and Levine (1992) and Gossner (2011), a long-run player's guaranteed equilibrium payoff is his payoff from playing his commitment action and his opponents play the best response that \textit{minimizes} his payoff.}
\begin{Corollary}
For every small enough $\varepsilon>0$, there exists $\underline{\delta} \in (0,1)$ such that when $\delta >\underline{\delta}$, in every BNE, no type of the sender plays  an $\varepsilon$-optimal disclosure policy at every on-path history.
\end{Corollary}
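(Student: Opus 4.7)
The plan is to reach a contradiction by combining a Fudenberg-Levine/Gossner-style reputation lower bound with the upper bound from Theorem~\ref{Theorem1}. Suppose, toward a contradiction, that in some BNE there exists a type $c_j$ whose play at every on-path history is an $\varepsilon$-optimal disclosure policy $\boldsymbol{\alpha}_{h^t}$, possibly varying with $h^t$ but always lying within $\varepsilon$ of $\rho^*\mathbf{a}^L + (1-\rho^*)\mathbf{a}^H$ and always admitting $\mathbf{b}^T$ as a strict best reply.

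The main step is to show that, for $\delta$ close to $1$, type $c_j$'s equilibrium payoff is close to the Bayesian-persuasion payoff $v_j^{**}=p_h+p_h(1-c_j)$ of (\ref{3.2}). The argument follows the reputation logic of Fudenberg and Levine (1992) and Gossner (2011). Because $\mathbf{b}^T$ is a strict best reply against every $\varepsilon$-optimal policy and the set of such policies is compact, there is a uniform positive lower bound on the relative entropy between any such policy and any receiver belief under which $\mathbf{b}^T$ fails to be a best reply. Since type $c_j$ has prior probability $\pi_j>0$ and his equilibrium play at every on-path history lies in the $\varepsilon$-optimal set, a martingale argument on log-likelihood ratios bounds the expected number of periods in which the equilibrium receiver plays anything other than $\mathbf{b}^T$ by a constant independent of $\delta$. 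Discounting then gives
\begin{equation*}
v_j \;\geq\; v_j^{**} - O(\varepsilon) - o_\delta(1),
\end{equation*}
where $v_j$ denotes type $c_j$'s equilibrium payoff.

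On the other hand, statement~1 of Theorem~\ref{Theorem1} bounds $v_j$ from above by $v_j^* + \varepsilon$ for $\delta$ close to $1$. A direct calculation using (\ref{3.1}) and (\ref{3.2}) yields
\begin{equation*}
v_j^{**} - v_j^* \;=\; \frac{p_h(1-c_1)\bigl(2p_h + c_j(1-2p_h)\bigr)}{2p_h + c_1(1-2p_h)} \;>\; 0,
\end{equation*}
since $c_1<1$ and $p_h<1/2$. For all sufficiently small $\varepsilon$ the lower bound $v_j^{**} - O(\varepsilon) - o_\delta(1)$ then exceeds the upper bound $v_j^* + \varepsilon$ once $\delta$ is large enough, which is the desired contradiction.

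The main obstacle is the reputation step. Unlike in the canonical Fudenberg-Levine framework, type $c_j$ is strategic rather than a commitment type, and the disclosure policy he uses may vary across histories. The argument still goes through because, in any BNE, type $c_j$'s strategy is correctly anticipated by the receivers, so his prior weight $\pi_j$ plays the role of the commitment prior in the standard argument; the strictness of $\mathbf{b}^T$ as a best reply, together with the compactness of the $\varepsilon$-optimal set, provides the uniform entropy margin needed to make the receiver-deviation count independent of $\delta$. A secondary point worth verifying is that the $\varepsilon$-neighborhood is small enough for $\boldsymbol{\alpha}_{h^t}$ to place strictly positive weight on both $\mathbf{a}^H$ and $\mathbf{a}^L$, so that the entropy-margin $\gamma$ can be taken uniform over the neighborhood, ensuring the Gossner-type bound is quantitative enough for the $\varepsilon$ and $\delta$ regimes claimed in the statement.
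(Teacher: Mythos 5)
Your argument is correct in its overall structure, and its final step is genuinely different from the paper's. Both proofs share the same first move: if some type plays an $\varepsilon$-optimal disclosure policy at every on-path history, a Fudenberg--Levine/Gossner prediction-error bound forces the receivers to play $\mathbf{b}^T$ in all but a $\delta$-independent expected number of periods, so that type's payoff (and, in the paper's version, every type's payoff from imitating him) is pushed up toward $v_j^{**}$. Where you diverge is in how the contradiction is closed. The paper only extracts the weaker conclusion that all types' payoffs exceed $v_j^*-\varepsilon$ and then invokes Theorem~\ref{Theorem2}: in any such near-optimal equilibrium no type can mix between $\mathbf{a}^H$ and $\mathbf{a}^L$ at every on-path history, which an $\varepsilon$-optimal policy by definition does. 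You instead push the reputation bound all the way to $v_j^{**}-O(\varepsilon)-o_\delta(1)$ and contradict the upper bound $v_j^*+\varepsilon$ of Theorem~\ref{Theorem1} directly, using the explicit gap $v_j^{**}-v_j^*=p_h(1-c_1)\bigl(2p_h+c_j(1-2p_h)\bigr)/\bigl(2p_h+c_1(1-2p_h)\bigr)>0$ (your algebra checks out). This buys you two things: the proof bypasses Theorem~\ref{Theorem2} entirely, resting only on the payoff characterization, and it handles $n=1$ and $n\geq 2$ uniformly (the paper treats $n=1$ as a separate case, where the receiver's belief coincides with the sender's strategy and the bound is immediate), at the cost of needing the quantitatively sharper conclusion from the reputation step.

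One caveat on that reputation step: your appeal to ``compactness'' of the $\varepsilon$-optimal set to get a uniform entropy margin is not literally right. The set is cut out by the \emph{strict}-best-reply requirement, which is an open condition; policies with $\Pr(\mathbf{a}^L)=\rho^*-\gamma_t$ are $\varepsilon$-optimal for every $\gamma_t>0$, and their strictness margins vanish as $\gamma_t\downarrow 0$, so the Pinsker-type bound on the number of non-trust periods, which scales like $\gamma_t^{-2}$, is not uniform over the set. To make the step airtight you would either need to fix a strictness margin in the definition of the deviating policies or argue history-by-history that the accumulated prediction-error budget still controls the discounted measure of non-trust periods. You flag this yourself at the end, which is the right instinct; note, though, that the paper's own proof invokes ``approximately $v_j^{**}$'' from Gossner (2011) with exactly the same gloss, so this is a shared imprecision rather than a defect specific to your route.
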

Corollary 1 applies to any number of types.
 Its conclusion contrasts to the commitment-type sender in Mathevet, Pearce and Stacchetti (2019) who mechanically communicates according to his optimal disclosure policy at every on-path history.
\begin{proof}[Proof of Corollary 1:] First, consider the case in which $n=1$. If the sender uses an
$\varepsilon$-optimal disclosure policy at every on-path history, then by definition, the receiver has a strict incentive to play $\mathbf{b}^T$ at every on-path history. By playing $\mathbf{a}^L$ in every period, the sender obtains discounted average payoff $p+(1-p)c$, which is strictly greater than $p$ and leads to a contradiction.

Next, consider the case in which $n \geq 2$. Suppose toward a contradiction that there exists a type $c_k  \in C$ who plays an $\varepsilon$-optimal disclosure policy at every on-path history in equilibrium. Given that the public history
can statistically identify the sender's stage-game action $\mathbf{a} \in \mathbf{A}$,
the results in Fudenberg and Levine (1992) and Gossner (2011)
imply that for every $j \in \{1,2,...,n\}$, type $c_j$ can guarantee payoff approximately $v_j^{**}$ by playing the equilibrium strategy of type $c_k$. By definition, $v_j^{**}>v_j^*-\varepsilon$ for every $j$.
This contradicts the conclusion of Theorem \ref{Theorem2}, since every $\varepsilon$-optimal disclosure policy attaches positive probabilities to $\mathbf{a}^H$ and $\mathbf{a}^L$, and the sender's equilibrium payoff strictly exceeds $(v_1^*-\varepsilon,...,v_n^*-\varepsilon)$ when $\delta$ is large enough.
\end{proof}
 My proofs of Theorem \ref{Theorem2} and Corollary 1 suggest that
both conclusions extend to a type of sender whose cost of lying exactly offsets his benefit from the receiver's high action, i.e., $c_1=1$.
Intuitively, this is because a strategic-type sender who has lying cost $1$
strictly benefits from the receiver's high action, which differs from
a commitment type who does not care about payoffs. As a result,
the strategic-type sender's indifference between $\mathbf{a}^H$ and $\mathbf{a}^L$ at a given history $h^t$ introduces constraints on the receiver's strategies in the continuation game. This in turn leads to constraints on other types of senders' incentives and payoffs.
\section{Proof of Theorem 1}\label{sec4}
In section 4.1, I construct a constrained optimization problem with $v_j^*$ equals its optimal value. Then I map the outcomes of the repeated communication game to this constrained optimization problem, according to which the objective function coincides with type $c_j$ sender's discounted average payoff. In sections 4.2 and 4.3, I show that the constraints in the optimization problem are necessary for any equilibrium outcome of the repeated game. In section 4.4, I construct sequential equilibrium in which the patient sender approximately attains payoff $v^*$.
\subsection{Payoff Upper Bound: $v_j^*$ as a Constrained Optimization Problem}
Recall the definitions of pure stage-game actions for the sender and the receiver.
Let $u_s(c,\mathbf{a},\mathbf{b})$ and $u_r(\mathbf{a},\mathbf{b})$ be the sender's and the receiver's stage-game payoff functions, respectively. Both are naturally extended to mixed actions. Lemma \ref{L4.1} relates $v_j^*$ to a constrained optimization problem defined via the stage game:
\begin{Lemma}\label{L4.1}
For given $j \in \{1,2,...,n\}$, the value of the following constrained optimization problem is $v_j^*$:
\begin{equation}\label{3.3}
    \max_{\gamma \in \Delta (\mathbf{A} \times \mathbf{B})} \sum_{(\mathbf{a},\mathbf{b}) \in \mathbf{A} \times \mathbf{B}} \gamma (\mathbf{a},\mathbf{b}) u_s(c_j,\mathbf{a},\mathbf{b}),
\end{equation}
subject to:
\begin{equation}\label{3.4}
 \sum_{(\mathbf{a},\mathbf{b}) \in \mathbf{A} \times \mathbf{B}} \gamma (\mathbf{a},\mathbf{b}) u_s(c_1,\mathbf{a},\mathbf{b}) \leq p_h,
\end{equation}
and for every $\mathbf{b} \in \mathbf{B}$ that
the marginal distribution of
$\gamma$ on $\mathbf{B}$ attaches positive probability to,
\begin{equation}\label{3.5}
 \mathbf{b} \in \arg\max_{\mathbf{b'} \in \mathbf{B}}   u_r(\gamma(\cdot|\mathbf{b}),\mathbf{b'}),
\end{equation}
where $\gamma(\cdot|\mathbf{b}) \in \Delta (\mathbf{A})$ is the distribution
\textit{conditional on} the receiver's stage-game action being $\mathbf{b}$.
\end{Lemma}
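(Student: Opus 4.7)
}
My plan is to translate (\ref{3.3})--(\ref{3.5}) into a small linear program over the mass that $\gamma$ places on pure action profiles, and solve it in closed form. I decompose the sender's stage payoff as $u_s(c,\mathbf{a},\mathbf{b}) = h(\mathbf{a},\mathbf{b}) - c\,\ell(\mathbf{a})$, where $h(\mathbf{a},\mathbf{b})\in[0,1]$ is the probability that the receiver plays $H$ under $(\mathbf{a},\mathbf{b})$ and $\ell(\mathbf{a})\in\{0,p_h,1-p_h,1\}$ is the lying frequency of $\mathbf{a}$. Writing $\alpha \equiv \mathbb{E}_{\gamma}[h]$ and $\beta \equiv \mathbb{E}_{\gamma}[\ell]$, constraint (\ref{3.4}) reads $\alpha - c_1\beta \le p_h$, the objective (\ref{3.3}) equals $\alpha - c_j\beta$, and combining the two yields $\alpha - c_j\beta \le p_h + (c_1-c_j)\beta$. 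Hence the value of the program equals $p_h + (c_1-c_j)\beta^{\star}$, where $\beta^{\star}$ is the largest $\beta$ such that some $\gamma$ makes (\ref{3.4}) tight while respecting (\ref{3.5}).

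The next step restricts the support of $\gamma$ to the four profiles in $\{\mathbf{a}^H,\mathbf{a}^L\}\times\{\mathbf{b}^T,\mathbf{b}^N\}$. Because $p_h<1/2$, Bayes plausibility forces the average posterior on $\omega=h$ to be $p_h<1/2$, so the always-$H$ receiver strategy cannot be a best reply. For the two remaining sender strategies (the inverted map $h\mapsto l$, $l\mapsto h$ and the always-$l$ map) and for $\mathbf{b}^O$, I argue that any mass $\gamma$ places on profiles using them can be shifted to suitable mixtures of the four profiles above in a way that keeps $\alpha - c_1\beta$ fixed and (\ref{3.5}) intact while weakly increasing $\beta$; by the previous paragraph this weakly improves the objective. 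The case-by-case verification, which must be calibrated simultaneously to preserve (\ref{3.5}) on every $\mathbf{b}$ in use and to track a given $\alpha - c_1\beta$ level, is the main technical hurdle of the lemma.

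With the support restricted, parameterize $\gamma$ by nonnegative weights $(x_1,x_2,x_3,x_4)$ summing to $1$ on $(\mathbf{a}^L,\mathbf{b}^T)$, $(\mathbf{a}^H,\mathbf{b}^T)$, $(\mathbf{a}^L,\mathbf{b}^N)$, and $(\mathbf{a}^H,\mathbf{b}^N)$. Bayes' rule applied to $\gamma(\cdot|\mathbf{b}^T)$ yields the IC $(1-2p_h)x_1 \le p_h x_2$; the IC for $\mathbf{b}^N$ is automatic when $x_4=0$, and setting $x_4=0$ is optimal since $(\mathbf{a}^H,\mathbf{b}^N)$ contributes zero to the objective while only tightening the other inequalities. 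Constraint (\ref{3.4}) rearranges to $x_1(1-p_h)(1-c_1) \le x_3[p_h+(1-p_h)c_1]$. Setting $x_3$ to its binding lower bound turns the objective into $p_h + x_1(1-p_h)(c_1-c_j)/[p_h+(1-p_h)c_1]$, which is linear and increasing in $x_1$. Pushing $x_1$ up to where the $\mathbf{b}^T$-IC binds gives $x_1 = p_h[p_h+(1-p_h)c_1]/[(1-p_h)D]$ with $D \equiv 2p_h + c_1(1-2p_h)$, and substituting yields exactly $p_h + p_h(c_1-c_j)/D = v_j^*$. Feasibility of the constructed $\gamma$ is immediate from the construction, establishing that the program's value equals $v_j^*$.
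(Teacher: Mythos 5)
Your closed-form computation over the restricted support $\{\mathbf{a}^H,\mathbf{a}^L\}\times\{\mathbf{b}^T,\mathbf{b}^N\}$ is correct: the incentive constraint for $\mathbf{b}^T$ is $(1-2p_h)x_1\le p_h x_2$, constraint (\ref{3.4}) rearranges as you say, and the vertex you identify makes both bind and delivers exactly $v_j^*$. (Indeed your $(x_1,x_2,x_3)$, and not the distribution $\gamma^*$ displayed after the lemma, is the optimizer consistent with formula (\ref{3.1}): e.g.\ for $p_h=1/4$, $c_1=1/2$, $c_2=0$ your point gives type $c_2$ exactly $v_2^*=5/12$, whereas the paper's $\gamma^*$ gives $3/8$ and leaves (\ref{3.4}) slack.) The difficulty is entirely in the step you flag as the main hurdle and then only assert: reducing the support to those four profiles.

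That reduction cannot be carried out, because the program as literally stated over all of $\Delta(\mathbf{A}\times\mathbf{B})$ has value strictly above $v_j^*$. Let $\mathbf{a}^{ll}$ denote the sender action sending message $l$ in both states and $\mathbf{a}^{I}$ the inverted action ($h\mapsto l$, $l\mapsto h$), and recall $\mathbf{b}^O$ from the proof of Theorem \ref{Theorem2}. Take $\gamma(\mathbf{a}^{ll},\mathbf{b}^O)=\rho^*$ and $\gamma(\mathbf{a}^{I},\mathbf{b}^O)=1-\rho^*$. Conditional on $\mathbf{b}^O$, message $h$ reveals state $l$ and message $l$ leaves the receiver exactly indifferent, so $\mathbf{b}^O$ is a best reply and (\ref{3.5}) holds; this is the optimal persuasion split with the message labels swapped. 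Type $c$'s payoff is $\rho^*(1-cp_h)+(1-\rho^*)(p_h-c)=2p_h-c(1-p_h)$: because the lying cost is non-consequentialist, relabelling makes the lie occur in state $h$ (probability $p_h$) rather than state $l$ (probability $1-p_h$), so it is cheaper than under $\mathbf{a}^L$. With $p_h=1/4$, $c_1=1/2$, $c_2=0$, type $c_1$ gets $1/8\le p_h$, so (\ref{3.4}) holds (with slack), while type $c_2$ gets $1/2>v_2^*=5/12$. Your proposed exchange (hold $\alpha-c_1\beta$ fixed, preserve (\ref{3.5}), weakly raise $\beta$) cannot absorb this mass: this bundle has $\beta=3/4$ and $\alpha-c_1\beta=1/8$, but among your four profiles $\beta=3/4$ forces all mass onto $(\mathbf{a}^L,\mathbf{b}^N)$, where $\alpha-c_1\beta=-3/8$; and by your own LP no four-profile feasible $\gamma$ reaches objective $1/2$. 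So the gap is not a missing verification but a genuine obstruction: proving the lemma requires some additional argument or restriction that excludes relabelled-message profiles, which constraints (\ref{3.4})--(\ref{3.5}) alone do not supply.
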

Let
\begin{equation}\label{rho}
\rho^* \equiv \frac{p_h}{1-p_h},
\end{equation}
and recall the definitions of $\mathbf{a}^H$, $\mathbf{a}^L$, $\mathbf{b}^T$, and $\mathbf{b}^N$ in
(\ref{sender}) and (\ref{receiver}), respectively.
The following distribution over stage-game action profiles, denoted by $\gamma^*$, attains the optimal value, with:
\begin{equation*}
    \gamma^*(\mathbf{a}^L,\mathbf{b}^N)=\frac{\rho^*(1-c_1)}{\rho^* (1-c_1)+c_1},
\end{equation*}
\begin{equation*}
    \gamma^* (\mathbf{a}^L,\mathbf{b}^T)=\frac{\rho^*c_1}{\rho^* (1-c_1)+c_1},
\end{equation*}
and
\begin{equation*}
    \gamma^* (\mathbf{a}^H,\mathbf{b}^T)=\frac{(1-\rho^*)c_1}{\rho^* (1-c_1)+c_1}.
\end{equation*}
Players' stage-game payoffs under these stage-game action profiles are given by:
\begin{center}
\begin{tabular}{| c | c | c |}
  \hline
  $-$ & $\mathbf{b}^T$ & $\mathbf{b}^N$ \\
  \hline
  $\mathbf{a}^H$ & $p_h,p_h$ & $0,0$ \\
  \hline
  $\mathbf{a}^L$ & $p_h+(1-c)(1-p_h),2p_h-1$ & $-c(1-p_h),0$ \\
  \hline
\end{tabular}
\end{center}
and a graphical illustration of the two constraints as well as $v_j^*$ can be found in Figure 1.
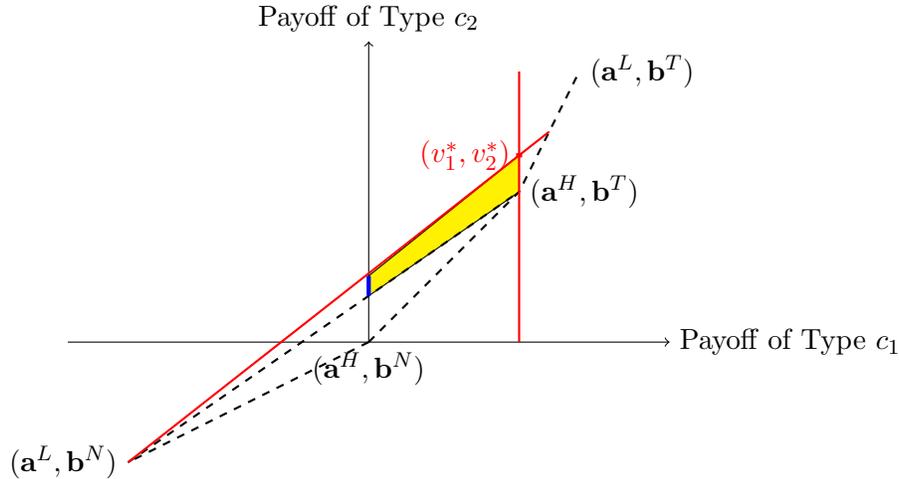
\begin{figure}\label{Figure1}
\begin{center}
\begin{tikzpicture}[scale=0.4]
\draw [fill=yellow] (0,1.53846)--(0,2.2)--(5,6.214)--(5,5)--(0,1.53846);
\draw [->] (-10,0)--(10,0)node[right]{Payoff of Type $c_1$};
\draw [->] (0,0)--(0,10)node[above]{Payoff of Type $c_2$};
\draw [dashed, thick] (0,0)--(5,5)node[right]{$(\mathbf{a}^H,\mathbf{b}^T)$}--(7,9)node[right]{$(\mathbf{a}^L,\mathbf{b}^T)$};
\draw [dashed, thick] (-8,-4)--(5,5);
\draw [dashed, thick] (0,0)node[below]{$(\mathbf{a}^H,\mathbf{b}^N)$}--(-8,-4)node[left]{$(\mathbf{a}^L,\mathbf{b}^N)$};
\draw [red, thick] (-8,-4)--(6,7);
\draw [red, thick] (5,0)--(5,9);
\draw [ultra thick, red] (4.9,6.214)--(5.1,6.214)node[left]{$(v_1^*,v_2^*)$};
\draw [ultra thick, blue] (0,1.53846)--(0,2.2);
\end{tikzpicture}
\caption{The sender's highest equilibrium payoff vector $(v_1^*,v_2^*)$ when there are two types. The two red lines capture the two constraints, and the intersection between them is $v^*$.}
\end{center}
\end{figure}

Next, I relate this constrained optimization problem to outcomes in the repeated communication game.
Recall that
$\sigma_c : \mathcal{H} \rightarrow \Delta (\mathbf{A})$ is type $c$ sender's strategy, and
$\sigma_r : \mathcal{H} \rightarrow \Delta (\mathbf{B})$ is the receiver's strategy.
For any given \textit{strategy profile} $\sigma \equiv \big((\sigma_{c})_{c \in \mathcal{C}}, \sigma_r \big)$,
let
\begin{equation}\label{3.6}
    \gamma^j(\mathbf{a},\mathbf{b}) \equiv  \mathbb{E}^{(\sigma_{c_j},\sigma_r)} \Big[
    \sum_{t=0}^{\infty} (1-\delta)\delta^t \mathbf{1} \{
(\mathbf{a}_t,\mathbf{b}_t) =   (\mathbf{a},\mathbf{b})
     \}
    \Big], \textrm{ for every } (\mathbf{a},\mathbf{b}) \in \mathbf{A} \times \mathbf{B}.
\end{equation}
This defines a distribution over stage-game action profiles, denoted by $\gamma^j \in \Delta (\mathbf{A} \times \mathbf{B})$.
By construction, type $c_j$ sender's discounted average payoff in the repeated game under strategy profile $\sigma$ equals his expected stage-game payoff under distribution $\gamma^j$, which is the objective function (\ref{3.3}) once replacing $\gamma$ with $\gamma^j$.
Therefore, as long as $\gamma^j$ satisfies constraints (\ref{3.4}) and (\ref{3.5}),
type $c_j$ sender's discounted average payoff in the repeated game under strategy profile $\sigma$ cannot exceed $v_j^*$.

\subsection{Necessity of Constraint (\ref{3.4})}\label{sub4.1}
The necessity of constraint (\ref{3.4}) is implied by type $c_1$'s equilibrium payoff being no more than $p_h$.
This is because
the left-hand-side of (\ref{3.4}) is type $c_1$'s payoff by deviating type $c_j$'s equilibrium strategy $\sigma_{c_j}$. The latter cannot exceed $p_h$ if type $c_1$'s equilibrium payoff is at most $p_h$.
\begin{Proposition}\label{Prop1}
For every Bayesian Nash Equilibrium $\sigma$ and for every $h^t$ that occurs with positive probability under $\sigma$, if $c_i$ is the highest-cost type in the support of the receiver's belief at $h^t$, then type $c_i$'s continuation payoff at $h^t$ is at most $p_h$.
\end{Proposition}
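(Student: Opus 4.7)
The plan is to prove the bound via a Fudenberg-Kreps-Maskin (1990) / Fudenberg-Levine (1994) style argument, extended to incomplete information by exploiting monotone sender incentives in type. The complete-information upper bound for a patient type $c_i$ sender against myopic receivers equals $\max_{\mathbf{a} \in \mathbf{A}} \min_{\mathbf{b} \in \mathrm{BR}(\mathbf{a})} u_s(c_i, \mathbf{a}, \mathbf{b})$, and in this stage game this value is $p_h$, attained by $(\mathbf{a}^H, \mathbf{b}^T)$: although committing to an interior mix of $\mathbf{a}^H$ and $\mathbf{a}^L$ would yield as much as $p_h(2-c_i)$ under the receiver's favorable best response, the indifference boundary at $\rho = \rho^* = p_h/(1-p_h)$ is strictly below one, and the receiver's adversarial best response at this boundary is $\mathbf{b}^N$, driving the pure-commitment value down to $p_h$.

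To lift this bound to the incomplete-information setting, I would work with the discounted occupation measure $\gamma^i$ on stage-game action profiles induced by $\sigma$ conditional on type $c_i$ as in equation (\ref{3.6}), so that $V_i = \sum_{(\mathbf{a},\mathbf{b})} \gamma^i(\mathbf{a},\mathbf{b})\, u_s(c_i,\mathbf{a},\mathbf{b})$. Two ingredients are needed. First, an invariance claim: along every on-path continuation of $h^t$ under type $c_i$'s own play, $c_i$ remains the highest-cost type in the receiver's posterior support, since Bayesian updating cannot add higher-cost types and cannot drop $c_i$ (whose own play is consistent with the realized public history). Second, a single-crossing claim based on the identity
\begin{equation*}
u_s(c, \mathbf{a}^L, \mathbf{b}) - u_s(c, \mathbf{a}^H, \mathbf{b}) = (1-p_h)\bigl[\mathbf{1}\{\mathbf{b}(h)=H\} - \mathbf{1}\{\mathbf{b}(l)=H\} - c\bigr],
\end{equation*}
which is strictly decreasing in $c$ and separable between type and receiver action; combined with the myopic receiver's best-response condition (for $\mathbf{b}(h) = H$ to be played with positive probability, the posterior-weighted average rate of sending $h$ in state $l$ must not exceed $\rho^*$), this forces type $c_i$'s own rate of playing $\mathbf{a}^L$ to be bounded by $\rho^*$ at every on-path history where the receiver trusts message $h$ with positive probability.

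The final step is to aggregate the per-history best-response constraints into a stage-game-type constraint on the marginal of $\gamma^i$ and then to maximize $V_i$ subject to it, recovering the pure-action Stackelberg value $p_h$. The main obstacle I expect is step two: because type $c_i$'s willingness to play $\mathbf{a}^L$ at any particular history reflects continuation values as well as stage payoffs, the stage-game single crossing does not directly deliver per-history monotonicity of play across types. The resolution, in the spirit of Fudenberg and Levine (1994) and Pei (2019), is to compare the aggregate discounted distributions $\gamma^i$ rather than per-period behavior: the receiver's myopic best response holds at every on-path history and therefore aggregates into a clean static constraint on $\gamma^i$, while the separable single crossing in stage payoffs aggregates into a monotone ordering of types' aggregate lying rates, which together pin down $V_i \leq p_h$.
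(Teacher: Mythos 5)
There is a genuine gap in the second and third steps. The constraint you propose to aggregate into---that every receiver action in the support of the marginal of $\gamma^i$ best replies to the corresponding conditional distribution over sender actions, together with type $c_i$'s aggregate lying rate being at most $\rho^*$ conditional on trust---does \emph{not} pin down $V_i \leq p_h$. Maximizing $u_s(c_i,\cdot)$ over occupation measures subject only to that best-response constraint yields $p_h + \rho^*(1-c_i)(1-p_h) = p_h(2-c_i)$, the optimal commitment payoff, attained by putting mass $\rho^*$ on $(\mathbf{a}^L,\mathbf{b}^T)$ and $1-\rho^*$ on $(\mathbf{a}^H,\mathbf{b}^T)$: nothing in the receiver's best-reply condition forces her to break the tie at $\rho^*$ adversarially, and in equilibrium she is free to trust at the indifference point. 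The gap between $p_h(2-c_i)$ and $p_h$ is exactly the content of the proposition, and closing it requires the \emph{sender's own} incentive constraint, not the receiver's. Concretely, the argument the paper uses (following Fudenberg--Kreps--Maskin) is that type $c_i$ has a best reply $\widetilde{\sigma}_{c_i}$ that, at every history, sends a message inducing $L$ whenever such a message is in the support of his equilibrium strategy; under $\widetilde{\sigma}_{c_i}$ his stage payoff is at most $p_h$ in every period, and by optimality this bounds his equilibrium payoff. Your occupation-measure aggregation of receiver constraints never invokes this deviation, so it cannot recover the ``worst action in the support'' logic that drives the bound down to $p_h$.

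The second missing piece is what happens when the modified best reply \emph{cannot} avoid being trusted: under incomplete information, type $c_i$ may play $\mathbf{a}^L$ with probability one at a history where the receiver still plays $H$ after message $h$, because lower-cost types in the support send $l$ in state $l$ with enough probability to provide cover. Your single-crossing identity does not rule this out (as you yourself note, continuation values can reverse the stage-game ordering), and the aggregate monotonicity of lying rates does not help because the problematic histories are precisely those where no $L$-inducing message is available to $c_i$. The paper handles this by induction on the size of the support of the receiver's belief: at the first such history, some type $c_j<c_i$ separates by sending $l$, the support strictly shrinks at the separating continuation history, the induction hypothesis bounds the continuation payoff of the highest-cost type there by $p_h$, and chaining type $c_j$'s indifference condition with the payoff comparison $u_s(c_j,\cdot)\geq u_s(c_i,\cdot)$ bounds type $c_i$'s continuation value from lying. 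Your proposal contains neither the modified-best-reply construction nor this induction, and the paper explicitly cautions that even the one-type base case cannot simply cite the complete-information folk-theorem bound, because zero-probability types can reappear off path and distort continuation incentives.
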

A caveat is that even when
there is \textit{only one type} in the support of receivers' belief, the conclusion of Proposition \ref{Prop1} \textit{does not} follow from the folk theorem results in Fudenberg, Kreps and Maskin (1990) and Fudenberg and Levine (1994). This is because a type that occurs with zero probability at a given history is \textit{not} equivalent to a type that is excluded from the type space.
In particular, zero probability types
may occur with strictly positive probability at some future off-path histories. This will in turn affect players' incentives and payoffs at on-path histories.\footnote{Osborne and
Rubinstein (1990) present an example in which types that are included
in the type space but occur with zero prior probability
are different from types that are excluded from the type space. They introduce a refinement called ``\textit{never dissuaded once convinced}'' to rule out such differences in outcomes. Madrigal, Tan and Werlang (1987) construct a finite extensive form game in which there exists no equilibrium that satisfies the requirement ``\textit{the support of beliefs at an information set be contained in the supports of beliefs at preceding information sets}''.}
\begin{proof}[Proof of Proposition 1:]
Let $C(h^t)$ be the support of the receiver's posterior belief after observing $h^t$ but before observing $m_t$. I show Proposition \ref{Prop1} by induction on $|C(h^t)|$, namely, the number of types in the support of the receiver's belief. My proof consists of two steps.
\paragraph{Step 1:} Suppose $|C(h^t)|=1$, then
$C(h^s)=C(h^t)$
for every $h^s$ such that $h^s \succ h^t$ and $h^s$ occurs with positive probability under $\sigma$.
Let $c_i$ be the only type in $C(h^t)$.
Strategy $\widetilde{\sigma}_{c_i}: \mathcal{H} \times \Omega \rightarrow \Delta (M)$ defined below also best replies against the receiver's equilibrium strategy $\sigma_r$:
\begin{equation}\label{3.7}
    \widetilde{\sigma}_{c_i} (h^s)(\omega) \equiv \left\{ \begin{array}{ll}
m & \textrm{ if }  \sigma_{c_i}(h^s)(\omega) \textrm{ attaches positive probability to } m \textrm{ and } \sigma_r (h^s)(m)=L\\
\sigma_{c_i}(h^t)(\omega) & \textrm{ otherwise}.
\end{array} \right.
\end{equation}
If both messages induce action $L$ for sure and both messages are sent with positive probability by $\sigma_{c_i}$ at an information set, then
pick any message for the sender.

By definition, type $c_i$'s payoff under $(\widetilde{\sigma}_{c_i}, \sigma_r)$ equals his continuation payoff at $h^t$.
If type $\theta_i$ plays according to $\widetilde{\sigma}_{\theta_i}$ against $\sigma_2$,
then his expected payoff at every on-path history following $h^t$ cannot
exceed $p_h$.
Therefore, type $c_i$'s  discounted average payoff at $h^t$ cannot exceed $p_h$.

\paragraph{Step 2:} I show that if the conclusion holds for histories where $|C(h^t)|\leq k$, then it also holds for histories where $|C(h^t)|=k+1$.

Let $\mathcal{H}^{\sigma}$ be the set of public histories that occur with positive probability under $\sigma$.
Let $c_i$ be the type that has the highest lying cost in $C(h^t)$.
Given type $c_i$'s equilibrium strategy $\sigma_{c_i}$,
and recall the definition of $\widetilde{\sigma}_{c_i}$
in (\ref{3.7}), which is one of type $c_i$'s best replies against the receiver's equilibrium strategy $\sigma_r$.
Let $\mathcal{H}^{(\widetilde{\sigma}_{c_i},\sigma_r)}$ be the set of histories that occur with positive probability under
$(\widetilde{\sigma}_{c_i},\sigma_r)$, which I partition into two subsets:
\begin{enumerate}
  \item Outcome $(\omega=l,a=H)$ has never occurred before.
  \item Outcome $(\omega=l,a=H)$ has occurred before.
\end{enumerate}
Suppose type $c_i$ sender plays according to $\widetilde{\sigma}_{c_i}$,
at every history $h^t$ that belongs to the first subset, he has never received positive stage-game payoff when $\omega_s=l$.
At every history $h^t$ that belongs to the second subset, but its immediate predecessor $h^{t-1}$ belongs to the first subset,
according to the definition of $\widetilde{\sigma}_{c_i}$,
\begin{itemize}
  \item there exists a message that induces action $H$ with positive probability at $h^{t-1}$,
  \item type $c_i$ sends that message at $h^{t-1}$ with probability $1$.
\end{itemize}
The receiver's incentive to play $H$ after receiving the aforementioned message at $h^{t-1}$
implies the existence of type $c_j \in C(h^{t-1})$ with $c_j \neq c_i$ that sends the other message, denoted by $m'$, with positive probability when $\omega_{t-1}=l$ at $h^{t-1}$. Since $c_i$ is the type with the highest lying cost in set $C(h^{t-1})$, we have $c_j < c_i$. Moreover, at history
$(h^{t-1},(l,m',L))$, type $\theta_i$ occurs with probability $0$, so $|C(h^{t-1},(l,m',L))| \leq |C(h^{t-1})|-1$.
According to the induction hypothesis, there exists $c_{\tau} \in C(h^{t-1},(l,m',L))$ such that type $c_{\tau}$'s continuation payoff at $(h^{t-1},(l,m',L))$ is no more than $p_h$. Given type $c_{\tau}$'s incentive to send message $l$ at $h^{t-1}$ when $\omega_t=l$, we have:
\begin{equation*}
\delta p_h \geq \delta V_{c_{\tau}}(h^{t-1},(l,m',L))
\geq (1-\delta) \big(\Pr(a_{t-1}=H|h^{t-1},m_{t-1}\neq m') - c_{\tau} \big)
\end{equation*}
\begin{equation*}
+\delta \Pr(a_{t-1}=H|h^{t-1},m_{t-1}\neq m') V_{c_{\tau}}(h^{t-1},(l,h,H))
+\delta \Pr(a_{t-1}=L|h^{t-1},m_{t-1}\neq m') V_{c_{\tau}}(h^{t-1},(l,h,L)).
\end{equation*}
Since type $c_{\tau}$'s stage-game payoff is no less than type $c_i$'s stage-game payoff at every history, type $c_i$'s continuation payoff by not sending message $m'$
when $\omega_{t-1}=l$ at history $h^{t-1}$ is no more than $\delta p_h$. Combining the conclusions at the two subsets of histories, we know that type $c_i$'s continuation payoff is no more than $p_h$ at every $h^t$ with $|C(h^t)| \leq k+1$.

Since the number of types is finite, the above induction argument implies that at every on-path history $h^t$, the highest-cost type in the support of the receivers' posterior belief at $h^t$ receives continuation payoff no more than $p_h$.
\end{proof}

\subsection{Necessity of Constraint (\ref{3.5})}\label{sub4.2}
I establish the necessity of (\ref{3.5}) in two steps.
Proposition \ref{Prop2} shows that $\gamma^j$
satisfies an $\varepsilon$-relaxed version of constraint (\ref{3.5})
when $\delta$ is above some cutoff. Proposition \ref{Prop3} shows that the value of the $\varepsilon$-constrained program converges to the value of the original program as $\varepsilon$ vanishes to $0$.
\begin{Proposition}\label{Prop2}
For every $\varepsilon>0$, there exists $\underline{\delta} \in (0,1)$ such that when $\delta > \underline{\delta}$,
for every Bayesian Nash Equilibrium and every $j \in \{1,2,...,n\}$, if $\gamma^j$ attaches probability more than $\varepsilon$ to
$\mathbf{b}$, then $\mathbf{b}$ is an $\varepsilon$-best reply against $\gamma^j(\cdot|\mathbf{b})$.
\end{Proposition}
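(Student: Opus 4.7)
The plan is to apply a standard merging argument, in the spirit of Fudenberg and Levine (1992) and Gossner (2011), to show that along type $c_j$'s path of play, the receiver's equilibrium period-$t$ best-reply is---in a discounted-average sense---an approximate best-reply against type $c_j$'s own stage-game distribution, not merely against the type-averaged distribution to which she actually best-responds. The point is that although the receiver in period $t$ optimizes against her posterior-weighted forecast, the informational content of the history forces that forecast to merge with type $c_j$'s true play in all but a bounded number of periods, whose discounted mass vanishes as $\delta\to 1$.

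Concretely, fix a BNE $\sigma$ and a type $c_j$. At each history $h^t$ let $\bar\sigma(h^t)\equiv\sum_{i=1}^{n}\mu_i(h^t)\sigma_{c_i}(h^t)\in\Delta(\mathbf{A})$ be the aggregate stage-game action as seen by the receiver, where $\mu_i(h^t)$ is her posterior on $c_i$. For $\alpha\in\Delta(\mathbf{A})$, let $\tau(\alpha)\in\Delta(\Omega\times M)$ denote the induced joint distribution of $(\omega,m)$; since $u_r(\alpha,\mathbf{b})$ depends on $\alpha$ only through $\tau(\alpha)$, the equilibrium $\sigma_r(h^t)$ is a best-reply against $\bar\sigma(h^t)$ in the sense of constraint~(\ref{3.5}). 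The bound $\Pr_{c_j}(h^t)/\Pr(h^t)\leq 1/\pi_j$ on the log-likelihood ratio gives the familiar entropy inequality
\begin{equation*}
\mathbb{E}^{\sigma_{c_j}}\Big[\sum_{t=0}^{\infty}K_t\Big]\leq -\log\pi_j,\qquad K_t\equiv\mathrm{KL}\!\left(\tau(\sigma_{c_j}(h^t))\,\big\|\,\tau(\bar\sigma(h^t))\right).
\end{equation*}

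Split periods into \emph{good} periods with $K_t\leq\eta$ and \emph{bad} periods with $K_t>\eta$. Applying Markov's inequality to the discounted version of the bound above, the expected discounted bad-period weight under type $c_j$'s strategy is at most $(1-\delta)(-\log\pi_j)/\eta$, which vanishes as $\delta\to 1$. On good periods, Pinsker's inequality yields total-variation distance $\leq\sqrt{\eta/2}$ between $\tau(\sigma_{c_j}(h^t))$ and $\tau(\bar\sigma(h^t))$, so any $\mathbf{b}$ in the support of $\sigma_r(h^t)$---being a best-reply against $\bar\sigma(h^t)$---is an $O(\sqrt\eta)$-best-reply against $\sigma_{c_j}(h^t)$. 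Decomposing $\gamma^j(\mathbf{a},\mathbf{b})$ into good- and bad-period contributions, the hypothesis $\gamma^j(\cdot,\mathbf{b})>\varepsilon$ implies that the bad-period mass inside the conditional $\gamma^j(\cdot|\mathbf{b})$ is at most $(1-\delta)(-\log\pi_j)/(\varepsilon\eta)$, while the good-period part is a convex combination of the $\sigma_{c_j}(h^t)$ against each of which $\mathbf{b}$ is an $O(\sqrt\eta)$-best-reply. Choosing $\eta$ small enough that $O(\sqrt\eta)$ is a fraction of $\varepsilon$, and then $\delta$ close enough to $1$ that $(1-\delta)(-\log\pi_j)/(\varepsilon\eta)\ll\varepsilon$, yields the claimed $\varepsilon$-best-reply property.

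The main obstacle is precisely the interaction between these two thresholds: bad-period mass is amplified by a factor $1/\varepsilon$ when one passes to the conditional distribution $\gamma^j(\cdot|\mathbf{b})$, so $\eta$ must be chosen small (as a function of $\varepsilon$) before $\delta$ can be chosen close to $1$ (as a function of $\varepsilon$ and $\min_i\pi_i$). A secondary subtlety is that the receiver observes only $(\omega_t,m_t)$ rather than $\mathbf{a}_t$, so the merging bound identifies $\sigma_{c_j}$ and $\bar\sigma$ only through $\tau(\cdot)$; this is harmless because $u_r$ depends on $\alpha\in\Delta(\mathbf{A})$ only through $\tau(\alpha)$, which is what allows the good-period Pinsker bound to translate directly into a payoff bound on the receiver side.
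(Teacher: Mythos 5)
Your argument is correct and follows essentially the same route as the paper's proof: the Gossner-style entropy bound on cumulative one-step-ahead prediction errors, a Markov/Pinsker step to isolate a vanishing discounted mass of badly-predicted periods, and the key observation that conditioning on $\mathbf{b}$ inflates that bad mass by at most a factor $1/\varepsilon$ (which the paper implements via the Hausdorff-distance inequality leading to its bound on $\beta^j(\mathbf{b})$). Your remark that merging pins down $\sigma_{c_j}(h^t)$ only through the induced distribution on $\Omega\times M$ --- harmless because $u_r$ depends on the sender's mixed stage-game action only through that distribution --- is, if anything, slightly more careful than the paper's assertion that the public signal statistically identifies the sender's stage-game action.
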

\begin{proof}[Proof of Proposition 2:]
For every $h^{\tau} \in \mathcal{H}^{\sigma}$, let $\sigma_{c_j}(h^{\tau}) \in \Delta (\mathbf{A})$ be the distribution over stage-game pure actions prescribed by $\sigma_{c_j}$ at $h^{\tau}$, and let $\alpha(h^{\tau}) \in \Delta (\mathbf{A})$ be the receiver's belief about the sender's stage-game pure actions at $h^{\tau}$.
Since future receivers can perfectly observe the past state realizations and the sender's messages, the public signals can \textit{statistically identify} the sender's pure stage-game action $\mathbf{a} \in \mathbf{A}$,
Gossner (2011)'s result provides an upper bound on the expected sum of the receivers' \textit{one-step ahead prediction errors}, measured by the Kullback-Leibler divergence (KL divergence for short) between
$\sigma_{c_j}(h^{\tau})$ and $\alpha(h^{\tau})$:
\begin{equation}\label{3.8}
  \mathbb{E}^{(\sigma_{c_j},\sigma_r)} \Big[   \sum_{\tau=0}^{+\infty} d(\sigma_{c_j}(h^{\tau})||\alpha(h^{\tau}))
    \Big]
    \leq -\lambda \log \pi_j,
\end{equation}
where $d(\cdot||\cdot)$ is the KL-divergence,
$\pi_j$ is the prior probability of type $c_j$, and
$\lambda>0$ is a constant.

Inequality (\ref{3.8}) implies that for every $\xi >0 $, the expected number of periods such that $ d(\sigma_{c_j}(h^{\tau})||\alpha(h^{\tau})) > \xi$ is no more than
\begin{equation}\label{3.9}
    T(\xi) \equiv \Big\lceil \lambda \frac{-\log \pi_j}{\xi}\Big\rceil.
\end{equation}
Let $\sigma_r(h^{\tau}) \in \Delta (\mathbf{B})$ be the distribution over receiver's pure stage-game action prescribed by $\sigma_r$ at $h^{\tau}$.
Let $\mathbf{B}^{\sigma_r}(h^{\tau})$ be the support of $\sigma_r(h^{\tau})$.
Since the receiver plays a stage-game best reply against her expectation over the sender's stage-game action, we have:
\begin{eqnarray}\label{3.10}
& & \mathbb{E}^{(\sigma_{c_j},\sigma_r)} \Big[
    \sum_{\tau=0}^{\infty}
    (1-\delta)\delta^{\tau} \mathbf{1} \Big\{
  \mathbf{b} \in   \mathbf{B}^{\sigma_r}(h^{\tau}) \textrm{ but } \mathbf{b} \textrm{ does not best reply to any } \alpha \textrm{ with }
     ||\alpha-\sigma_{c_j}(h^{\tau})||
     \leq \sqrt{2\xi}
    \Big\}
    \Big]
     {}
\nonumber\\
&\leq& {}
    \mathbb{E}^{(\sigma_{c_j},\sigma_r)} \Big[
    \sum_{\tau=0}^{\infty}
    (1-\delta)\delta^{\tau} \mathbf{1} \Big\{
    \mathbf{b} \in   \mathbf{B}^{\sigma_r}(h^{\tau}) \textrm{ but } \mathbf{b} \textrm{ does not best reply to any } \alpha \textrm{ with } d(\sigma_{c_j}(h^{\tau})||\alpha) \leq \xi
    \Big\}
    \Big]
    {}
\nonumber\\
&\leq& {}
    \mathbb{E}^{(\sigma_{c_j},\sigma_r)} \Big[
    \sum_{\tau=0}^{\infty}
    (1-\delta)\delta^{\tau} \mathbf{1} \Big\{
   d(\sigma_{c_j}(h^{\tau})||\alpha(h^{\tau})) > \xi
    \Big\}
    \Big]
\leq
    1-\delta^{ T(\xi)}.
\end{eqnarray}
The first inequality comes from the Pinsker's inequality, the second inequality holds since $\mathbf{b}$ best replies against
$\alpha(h^{\tau})$, and the third inequality comes from (\ref{3.8}) and (\ref{3.9}).

Recall the definition of $\gamma^j \in \Delta (\mathbf{A} \times \mathbf{B})$.
Let $\beta^j$ be the marginal distribution of $\gamma^j$ on $\mathbf{B}$, and let $\gamma^j(\cdot|\mathbf{b})$ be the distribution over
$\mathbf{A}$ conditional on $\mathbf{b}$ under joint distribution $\gamma^j$.
Let $\mathcal{A} (\mathbf{b}) \subset \Delta (\mathbf{A})$ be the set of sender's mixed stage-game actions that $\mathbf{b}$ best replies against.
Consider any $\mathbf{b} \in \mathbf{B}$ with the property that the Hausdorff distance between
$\gamma^j(\cdot|\mathbf{b})$
and set $\mathcal{A} (\mathbf{b})$
is more than $\varepsilon$. I denote this distance by $D$.
For every $\eta>0$, let
$\mathcal{A}^{\eta} (\mathbf{b})$ be the set of elements in $\Delta (\mathbf{A})$ whose Hausdorff distance to $\mathcal{A}(\mathbf{b})$ is no more than $\eta$.
Since the Hausdorff distance between any two points in $\Delta (\mathbf{A})$ is at most $1$, for any distribution over the sender's mixed stage-game actions $\rho \in \Delta (\Delta (\mathbf{A}))$ that has countable support $\{\alpha^i\}_{i \in \mathbb{N}}$, and satisfies:
    $\sum_{i \in \mathbb{N}} \rho(\alpha_1^i) \alpha^i =\alpha^j(\cdot|\mathbf{b})$,
we have:
\begin{equation}\label{3.11}
    \sum_{\alpha^i \notin \mathcal{A}^{\eta} (\mathbf{b}) } \rho(\alpha^i) \geq \frac{D-\eta}{1+D-\eta}.
\end{equation}
Therefore:
\begin{eqnarray}\label{3.12}
& &  \mathbb{E}^{(\sigma_{c_j},\sigma_r)} \Big[
    \sum_{\tau=0}^{\infty}
    (1-\delta)\delta^{\tau} \mathbf{1} \Big\{
     \mathbf{b} \in   \mathbf{B}^{\sigma_r}(h^{\tau}) \textrm{ but } \mathbf{b} \textrm{ does not best reply to any } \alpha \textrm{ with }
     ||\alpha-\sigma_{c_j}(h^{\tau})||
     \leq \eta
    \Big\}
    \Big]      {}
\nonumber\\
&\geq& {} \frac{\beta^j(\mathbf{b})(D-\eta)}{1+D-\eta}.
\end{eqnarray}
Pick $\eta \equiv\frac{D}{2}$ and $\xi \equiv \frac{D^2}{8}$, we have $\sqrt{2 \xi}=\frac{D}{2}$. Therefore,
(\ref{3.10}) and (\ref{3.12}) together imply that
for every strategy profile that is an equilibrium under
 discount factor $\delta$, we have:
\begin{equation}\label{3.13}
  \beta^j(\mathbf{b}) \leq  \Big(1-\delta^{T(\frac{D^2}{8})} \Big) \frac{1+D/2}{D/2}.
\end{equation}
Since $D \geq \varepsilon$, there exists $\underline{\delta} \in (0,1)$ such that
the RHS of (\ref{3.13}) is less than
$\varepsilon$
for every $\delta \in (\underline{\delta},1)$. That is to say, for every $\mathbf{b} \in \mathbf{B}$ such that
$\mathbf{b}$ is not an $\varepsilon$-best reply against
$\gamma^j(\cdot|\mathbf{b})$, the marginal distribution $\beta^j$ attaches probability less than $\varepsilon$ to $\mathbf{b}$.
\end{proof}
Let $v_j^{\varepsilon}$ be the value of the optimization problem when
the objective function is (\ref{3.3}), subject to constraint (\ref{3.4}) and the $\varepsilon$-relaxed version of constraint (\ref{3.5}).
$v_j^{\varepsilon}$ converges to $v_j^*$
as $\varepsilon \rightarrow 0$.
\begin{Proposition}\label{Prop3}
$\lim_{\varepsilon \downarrow 0} v_j^{\varepsilon} = v_j^*$ for every $j \in \{1,2,...,n\}$.
\end{Proposition}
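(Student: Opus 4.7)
The plan is a standard continuity/compactness argument for the value of a parametric optimization problem whose feasible set varies upper hemicontinuously in the parameter $\varepsilon$.

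First, observe that relaxing constraint (\ref{3.5}) to an $\varepsilon$-best reply only enlarges the feasible set, so $\varepsilon \mapsto v_j^{\varepsilon}$ is monotonically non-decreasing with $v_j^{\varepsilon} \geq v_j^*$ for every $\varepsilon>0$. It therefore suffices to prove the reverse bound $\limsup_{\varepsilon \downarrow 0} v_j^{\varepsilon} \leq v_j^*$. To this end, pick any sequence $\varepsilon_k \downarrow 0$ and for each $k$ let $\gamma_k \in \Delta(\mathbf{A}\times \mathbf{B})$ attain (or approximately attain) the value $v_j^{\varepsilon_k}$. Since $\mathbf{A}$ and $\mathbf{B}$ are finite, $\Delta(\mathbf{A}\times \mathbf{B})$ is compact, so by passing to a subsequence I may assume $\gamma_k \to \gamma^{\dagger}$ for some $\gamma^{\dagger} \in \Delta(\mathbf{A}\times \mathbf{B})$. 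Since $\mathbf{B}$ is finite, I can refine the subsequence once more so that the support of the marginal $\beta_k$ of $\gamma_k$ on $\mathbf{B}$ is constant in $k$; call this common support $\mathbf{B}^{\star}$, which necessarily contains the support $\mathbf{B}^{\dagger}$ of the limit marginal $\beta^{\dagger}$.

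Next I verify that $\gamma^{\dagger}$ is feasible for the original program. Constraint (\ref{3.4}) passes to the limit because the left-hand side is linear, hence continuous, in $\gamma$. For constraint (\ref{3.5}), fix any $\mathbf{b} \in \mathbf{B}^{\dagger}$. Then $\mathbf{b} \in \mathbf{B}^{\star}$, so $\beta_k(\mathbf{b})>0$ for all $k$, and $\beta_k(\mathbf{b}) \to \beta^{\dagger}(\mathbf{b})>0$. Hence the conditionals $\gamma_k(\cdot|\mathbf{b}) = \gamma_k(\cdot,\mathbf{b})/\beta_k(\mathbf{b})$ are well-defined and converge to $\gamma^{\dagger}(\cdot|\mathbf{b})$. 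By construction, $\mathbf{b}$ is an $\varepsilon_k$-best reply against $\gamma_k(\cdot|\mathbf{b})$; since $u_r$ is continuous in the sender's mixed action and $\varepsilon_k \downarrow 0$, the upper hemicontinuity of the best-reply correspondence (equivalently, taking the limit of the inequality $u_r(\gamma_k(\cdot|\mathbf{b}),\mathbf{b}) \geq u_r(\gamma_k(\cdot|\mathbf{b}),\mathbf{b}') - \varepsilon_k$ for every $\mathbf{b}' \in \mathbf{B}$) yields that $\mathbf{b}$ is an exact best reply against $\gamma^{\dagger}(\cdot|\mathbf{b})$. Thus $\gamma^{\dagger}$ satisfies (\ref{3.5}).

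Finally, the objective (\ref{3.3}) is linear in $\gamma$, so the value at $\gamma_k$ converges to the value at $\gamma^{\dagger}$. Since $\gamma^{\dagger}$ is feasible for the original program, this limit is at most $v_j^*$, giving $\lim_{k\to\infty} v_j^{\varepsilon_k} \leq v_j^*$. As the sequence $\varepsilon_k \downarrow 0$ was arbitrary, $\limsup_{\varepsilon \downarrow 0} v_j^{\varepsilon} \leq v_j^*$, which combined with the monotonicity observation completes the proof.

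The only subtle step is the passage to the limit in constraint (\ref{3.5}), since conditionals are undefined where the marginal vanishes; I expect this to be the main (and only real) obstacle, which the preliminary subsequencing on the finite set $\mathbf{B}$ resolves cleanly.
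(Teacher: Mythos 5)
Your proof is correct, but it takes a genuinely different route from the paper's. The paper proves $\limsup_{\varepsilon\downarrow 0} v_j^{\varepsilon}\leq v_j^*$ constructively: given a near-optimizer $\gamma^{\varepsilon}$ of the relaxed program, it perturbs each conditional $\gamma^{\varepsilon}(\cdot|\mathbf{b})$ to a nearby distribution against which $\mathbf{b}$ is an exact best reply, discards the low-probability actions $\mathbf{b}$ that violate the best-reply condition, and then mixes with the point mass on $(\mathbf{a}^L,\mathbf{b}^N)$ to restore constraint (\ref{3.4}), tracking the payoff loss through an explicit modulus $X(\eta)$. You instead take a sequence $\varepsilon_k\downarrow 0$, extract a convergent subsequence of near-optimizers, and show the limit point is feasible for the unrelaxed program; this is cleaner and sidesteps the non-convexity issue the paper flags, at the cost of being non-quantitative (the paper's argument yields an explicit rate, though nothing downstream uses it). One imprecision to fix: the $\varepsilon$-relaxed constraint, as defined via Proposition \ref{Prop2}, only requires $\mathbf{b}$ to be an $\varepsilon_k$-best reply against $\gamma_k(\cdot|\mathbf{b})$ when $\beta_k(\mathbf{b})>\varepsilon_k$, not merely when $\beta_k(\mathbf{b})>0$, so your sentence ``by construction, $\mathbf{b}$ is an $\varepsilon_k$-best reply'' does not follow from positivity of $\beta_k(\mathbf{b})$ alone. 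The gap closes immediately because for $\mathbf{b}$ in the support of the limit marginal you have $\beta_k(\mathbf{b})\to\beta^{\dagger}(\mathbf{b})>0$ while $\varepsilon_k\to 0$, so $\beta_k(\mathbf{b})>\varepsilon_k$ for all large $k$ and the relaxed constraint does bind along the tail of the sequence; note also that this same observation makes your preliminary refinement to a constant support $\mathbf{B}^{\star}$ unnecessary. With that one sentence repaired, the argument is complete.
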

\begin{proof}[Proof of Proposition 3:]
Since the constraint in Proposition \ref{Prop2} relaxes constraint (\ref{3.5}), we have $v_j^{\varepsilon} \geq v_j^*$, which implies that:
\begin{equation}\label{3.14}
    \lim\inf_{\varepsilon \downarrow 0} v_j^{\varepsilon} \geq v_j^*.
\end{equation}
The rest of the proof establishes the following inequality:
\begin{equation}\label{3.15}
    \lim\sup_{\varepsilon \downarrow 0} v_j^{\varepsilon} \leq v_j^*,
\end{equation}

The challenge is that the subset of $\Delta (\mathbf{A} \times \mathbf{B})$ that satisfies the $\varepsilon$-relaxed constraint is \textit{not convex}. My proof
constructs a distribution $\gamma \in \Delta (\mathbf{A} \times \mathbf{B})$ that respects constraints (\ref{3.4}) and (\ref{3.5}), and furthermore, is $\varepsilon$-close to a joint distribution that solves the $\varepsilon$-relaxed problem in Proposition \ref{Prop2}.
This implies that type $c_j$ sender's expected payoff is close under the two distributions.

Let $\Gamma^{\varepsilon}$ be the set of $\gamma \in \Delta (\mathbf{A} \times \mathbf{B})$ that satisfies constraint (\ref{3.4}) and the $\varepsilon$-relaxed version of constraint (\ref{3.5}). Let $\Gamma$ be the set of $\gamma \in \Delta (\mathbf{A} \times \mathbf{B})$ that satisfies constraints (\ref{3.4}) and (\ref{3.5}).
I show that for every $\eta>0$, there exists $\varepsilon>0$, such that
for every
$\gamma^{\varepsilon} \in \Gamma^{\varepsilon}$, there exists $\gamma \in \Gamma$ that is within $\eta$ away from $\gamma^{\varepsilon}$. This implies
inequality (\ref{3.15}).

First, for every $\mathbf{b} \in \mathbf{B}$, there exists a nondegenerate subset of $\Delta (\mathbf{A})$
such that $\mathbf{b}$ best replies against. Since the number of pure stage-game actions is finite, for every $\eta>0$, there exists $\varepsilon>0$, such that for every $\alpha \in \Delta (\mathbf{A})$ and $\mathbf{b} \in \mathbf{B}$
satisfying $\mathbf{b}$
is an $\varepsilon$-best reply against $\alpha$, there exists $\alpha' \in \Delta (\mathbf{A})$ within $\eta$ away from $\alpha$ such that $\mathbf{b}$ best replies against $\alpha'$.

Second, for every $\gamma^{\varepsilon} \in \Gamma^{\varepsilon}$, let
\begin{equation*}
    \mathbf{B}^* \equiv \big\{
    \mathbf{b} \in \mathbf{B}
    \big| \textrm{ }
    \mathbf{b} \textrm{ best replies against } \gamma^{\varepsilon}(\cdot|\mathbf{b})
    \big\}.
\end{equation*}
By definition, the marginal distribution of $\gamma^{\varepsilon}$ on $B$, denoted by $\beta^{\varepsilon}$, attaches probability at most $\varepsilon$ to every $\mathbf{b} \notin \mathbf{B}^*$. Consider the following modified distribution  $\gamma' \in \Delta (\mathbf{A} \times \mathbf{B})$:
\begin{itemize}
\item[1.] For every $\mathbf{b} \in \mathbf{B}^*$, there exists $\gamma^*$ that is
  $\eta$ away from $\gamma^{\varepsilon}(\cdot|\mathbf{b})$, with
$\mathbf{b}$ best replies to $\gamma^*(\cdot|\mathbf{b})$.
\item[2.] The marginal distribution of $\gamma'$ on $\mathbf{B}$ attaches probability
$\frac{\beta^{\varepsilon}(\mathbf{b})}{\beta^{\varepsilon}(\mathbf{B}^*)}$
to $\mathbf{b}$, and the distribution over $\mathbf{a}$ conditional on $\mathbf{b}$ is $\gamma^*(\cdot|\mathbf{b})$.
\end{itemize}
Since
\begin{equation*}
\sum_{(\mathbf{a},\mathbf{b}) \in \mathbf{A} \times \mathbf{B}} \gamma^{\varepsilon}(\mathbf{a},\mathbf{b})   u_s(c_1,\mathbf{a},\mathbf{b})
\leq p_h,
\end{equation*}
and $\beta^{\varepsilon}$ attaches probability less than $\varepsilon$ to every $\mathbf{b} \notin B^*$,
there exists $X: [0,1] \rightarrow \mathbb{N}$ with $\lim_{\eta \rightarrow 0} X(\eta)=0$ such that
\begin{equation}\label{3.16}
    \sum_{(\mathbf{a},\mathbf{b}) \in \mathbf{A} \times \mathbf{B}} \gamma'(\mathbf{a},\mathbf{b})   u_s(c_1,\mathbf{a},\mathbf{b})
\leq p_h+X(\eta)
\end{equation}
and
\begin{equation}\label{3.17}
  v_j^*+X(\eta) \geq
    \sum_{(\mathbf{a},\mathbf{b}) \in \mathbf{A} \times \mathbf{B}} \gamma'(\mathbf{a},\mathbf{b})   u_s(c_j,\mathbf{a},\mathbf{b})+X(\eta)
\geq \sum_{(\mathbf{a},\mathbf{b}) \in \mathbf{A} \times \mathbf{B}} \gamma^{\varepsilon}(\mathbf{a},\mathbf{b})   u_s(c_j,\mathbf{a},\mathbf{b})=v_j^{\varepsilon}.
\end{equation}
Consider two cases,
\begin{enumerate}
  \item If $\sum_{(\mathbf{a},\mathbf{b}) \in \mathbf{A} \times \mathbf{B}} \gamma'(\mathbf{a},\mathbf{b})   u_s(c_1,\mathbf{a},\mathbf{b})\leq p^h$, then $\gamma'$ satisfies constraints (\ref{3.4})
  and (\ref{3.5}), and
  attains payoff within $X(\eta)$ of $v_j^{\varepsilon}$.
  \item If $\sum_{(\mathbf{a},\mathbf{b}) \in \mathbf{A} \times \mathbf{B}} \gamma'(\mathbf{a},\mathbf{b})   u_s(c_1,\mathbf{a},\mathbf{b})> p^h$, then
  let $\gamma'' \in \Delta (A \times B)$ be
 a convex combination of $\gamma'$ and the Dirac measure on $(\mathbf{a}^L,\mathbf{b}^N)$, with  the convex weight on $\gamma'$ equals
      \begin{equation*}
        \frac{p_h}{\sum_{(\mathbf{a},\mathbf{b}) \in \mathbf{A} \times \mathbf{B}} \gamma'(\mathbf{a},\mathbf{b})   u_s(c_1,\mathbf{a},\mathbf{b})},
      \end{equation*}
 Since all types of sender's stage-game payoff is $0$ under $(\mathbf{a}^L,\mathbf{b}^N)$, $\gamma''$
satisfies constraint (\ref{3.4}). Since $\mathbf{b}^N$ best replies against $\mathbf{a}^L$, $\gamma''$ satisfies constraint (\ref{3.5}).
According to the definition of $v_j^*$, we have:
\begin{equation}\label{3.18}
   v_j^*\geq \sum_{(\mathbf{a},\mathbf{b}) \in \mathbf{A} \times \mathbf{B}} \gamma''(\mathbf{a},\mathbf{b})   u_s(c_j,\mathbf{a},\mathbf{b}) .
\end{equation}
According to (\ref{3.16}) and (\ref{3.17}), we also have:
\begin{equation*}
\sum_{(\mathbf{a},\mathbf{b}) \in \mathbf{A} \times \mathbf{B}} \gamma''(\mathbf{a},\mathbf{b})   u_s(c_j,\mathbf{a},\mathbf{b}) \geq
  \frac{p_h}{p_h+X(\eta)} \Big( \sum_{(\mathbf{a},\mathbf{b}) \in \mathbf{A} \times \mathbf{B}} \gamma'(\mathbf{a},\mathbf{b})   u_s(c_j,\mathbf{a},\mathbf{b}) -X(\eta)\Big)
  \end{equation*}
        \begin{equation}\label{3.19}
\geq
\frac{p_h}{p_h+X(\eta)} \Big(v_j^{\varepsilon}-X(\eta)\Big).
      \end{equation}
The expression on the RHS of (\ref{3.19}) implies that for every $\rho>0$, there exists $\eta>0$ such that once we pick $\varepsilon$ according to $\eta$, we have:
\begin{equation*}
    v_j^* \geq \frac{p_h}{p_h+X(\eta)} \Big(v_j^{\varepsilon}-X(\eta)\Big)\geq v_j^{\varepsilon}-\rho.
\end{equation*}
\end{enumerate}
This leads to (\ref{3.15}). Inequalities (\ref{3.14}) and (\ref{3.15}) together imply Proposition \ref{Prop3}.
\end{proof}

\subsection{Tightness of Payoff Upper Bound: Equilibrium Construction}\label{sub4.3}
Let
\begin{equation*}
    v^N \equiv \Big(-c_1(1-p_h),-c_2(1-p_h),...,-c_n (1-p_h) \Big),
\end{equation*}
\begin{equation*}
 v^L \equiv \Big(p_h+(1-c_1)(1-p_h), p_h+(1-c_2)(1-p_h),...,p_h+(1-c_n)(1-p_h) \Big),
\end{equation*}
and $v^H \equiv (p_h,p_h,...,p_h)$, which are the sender's stage-game payoffs from pure stage-game action profiles $(\mathbf{a}^L,\mathbf{b}^N)$, $(\mathbf{a}^L,\mathbf{b}^T)$, and $(\mathbf{a}^H,\mathbf{b}^T)$, respectively.
The receiver has an incentive to play $\mathbf{b}^T$ against $\rho \mathbf{a}^L +(1-\rho) \mathbf{a}^H$ if and only if $\rho \leq \rho^*$.
For every $\rho \in [0,\rho^*]$, let
\begin{equation}\label{3.20}
    v(\rho) \equiv \frac{(1-\rho) c_1}{\rho (1-c_1) +c_1} v^H
    + \frac{\rho c_1}{\rho (1-c_1) +c_1} v^L
    +\frac{\rho (1-c_1)}{\rho (1-c_1) +c_1} v^N.
\end{equation}
One can verify that $v(0)=v^H$ and $v(\rho^*)=(v_1^*,...,v_n^*)$.
The second statement of Theorem \ref{Theorem1} is implied by the following proposition:
\begin{Proposition}\label{Prop4}
For every $\varepsilon>0$ and $\rho \in [0,\rho^*)$, there exists $\underline{\delta} \in (0,1)$ such that for every $\pi \in \Delta(\mathcal{C})$ with $\pi_1 \geq \varepsilon$ and $\delta>\underline{\delta}$, there exists an equilibrium in which the sender's payoff is $v(\rho)$.
\end{Proposition}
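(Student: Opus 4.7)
The plan is to exhibit a sequential equilibrium whose expected occupation measure over stage-game action profiles converges to
\begin{equation*}
\gamma^{\rho}(\mathbf{a}^H,\mathbf{b}^T) = \frac{(1-\rho)c_1}{\rho(1-c_1)+c_1}, \quad
\gamma^{\rho}(\mathbf{a}^L,\mathbf{b}^T) = \frac{\rho c_1}{\rho(1-c_1)+c_1}, \quad
\gamma^{\rho}(\mathbf{a}^L,\mathbf{b}^N) = \frac{\rho(1-c_1)}{\rho(1-c_1)+c_1}
\end{equation*}
as $\delta \uparrow 1$. Since the sender's payoff under $\gamma^{\rho}$ equals $v(\rho)$ by (\ref{3.20}), this delivers the conclusion. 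I organize the construction around the receiver's posterior probability $\mu_t$ on the highest-cost type $c_1$, and use two kinds of regimes: an \emph{active-learning regime} in which all types mix between $\mathbf{a}^H$ and $\mathbf{a}^L$ and the receiver plays $\mathbf{b}^T$, and two \emph{absorbing regimes} entered when $\mu_t$ hits the boundary of a prescribed interval $[\underline{\mu},\overline{\mu}]$ --- a ``babbling'' one in which the receiver plays $\mathbf{b}^N$ forever (delivering payoffs close to $v^N$), and a ``trust'' one calibrated to deliver an appropriate boundary payoff. The absorbing regimes serve to anchor the continuation-value system.

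In the active-learning regime, I prescribe mixing probabilities $\rho_1(\mu_t) < \rho_j(\mu_t)$ over $\mathbf{a}^L$ for type $c_1$ and for type $c_j$ ($j\geq 2$), chosen so that (i) the aggregate probability of $\mathbf{a}^L$, averaged over the current posterior on types, equals some $\tilde{\rho}(\mu_t) < \rho^*$, making $\mathbf{b}^T$ strictly optimal for the receiver after message $h$; and (ii) $\mu_t$ drifts up after $\mathbf{a}^H$ and down after $\mathbf{a}^L$, yielding a well-defined reputation process. Type $c_1$'s indifference between $\mathbf{a}^H$ and $\mathbf{a}^L$ in state $l$ pins down a Bellman-type recursion on the continuation-value function $V_1(\mu_t)$ with boundary conditions supplied by the absorbing regimes. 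Because each type's stage-game payoff is affine in the lying cost, once the continuation-value gap between the two pure actions is fixed by type $c_1$'s indifference, the corresponding indifference (or strict preference) for each lower-cost $c_j$ can be engineered by choosing $\rho_j(\mu_t)$.

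The verification then proceeds in four steps: (1) confirm receivers' sequential rationality in each regime; (2) solve the indifference equations for type $c_1$ together with the incentive conditions that determine $\rho_j(\mu_t)$ for $j\geq 2$, and check that the implied mixing probabilities lie in $(0,1)$ and respect $\rho_1 < \rho_j$; (3) compute the expected occupation measure using the hitting-time distribution of $\mu_t$ and show that it approaches $\gamma^{\rho}$ as $\delta\to 1$ by choosing $[\underline{\mu},\overline{\mu}]$ so that the expected time spent in the active-learning regime diverges faster than $1/(1-\delta)$; and (4) produce a trembling-hand sequence whose induced assessments converge to the prescribed one, confirming sequential equilibrium.

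The main obstacle is jointly satisfying (a) the target occupation frequencies, (b) type $c_1$'s indifference at every $\mu_t$ in the interior of $[\underline{\mu},\overline{\mu}]$, and (c) the requirement that $\mu_t$ remain in this interval long enough that the absorbing-regime payoffs contribute only a vanishing fraction to the discounted sum as $\delta\to 1$. The key device is \emph{reputation rebuilding}: after a low-cost type lies and $\mu_t$ drops, the subsequent active-learning dynamics push $\mu_t$ back up on average via type $c_1$'s own truth-telling behavior, so each type $c_j$ ($j\geq 2$) can extract information rent infinitely often without his reputation unraveling, while type $c_1$'s mixing keeps the aggregate lying rate below $\rho^*$. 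Calibrating the two boundary payoffs, the belief interval, and the mixing profile so that (a)--(c) are simultaneously compatible uniformly in $\pi_1 \geq \varepsilon$ is the delicate part, and is carried out in Appendix \ref{secA}.
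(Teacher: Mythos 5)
Your high-level intuition matches the paper's: an active-learning phase in which type $c_1$ tells the truth more often than the low-cost types, receivers' trust sustained by keeping the aggregate lying probability below $\rho^*$, reputation rebuilding so low-cost types can extract rent infinitely often, and absorbing play to anchor continuation values. But the architecture you propose has concrete flaws, and the step you defer as ``the delicate part'' is the actual content of the proof. Most immediately, your lower absorbing regime is infeasible: a babbling regime in which the receiver plays $\mathbf{b}^N$ forever while the sender plays $\mathbf{a}^L$ delivers payoff $-c_j(1-p_h)<0$ to every type with $c_j>0$, whereas every type can secure at least $0$ in any continuation game by always telling the truth. No equilibrium continuation delivers payoffs ``close to $v^N$.'' This is precisely why the paper uses a \emph{temporary} rebounding phase that returns to active learning, and an absorbing phase whose payoffs are convex combinations of $v^N$ and $v^H$ with enough weight on $v^H$ that even type $c_1$'s value stays nonnegative.

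Second, you track only the scalar value function $V_1(\mu_t)$ and claim that indifference for each $c_j$, $j\geq 2$, ``can be engineered by choosing $\rho_j(\mu_t)$.'' A type's own mixing probability does not enter his own indifference condition, so this is not an incentive argument. What makes all types simultaneously indifferent in the paper is that continuation values are confined to the family $p^H v^H + p^L v^L + p^N v^N$ and the adjustment after lying versus truth-telling shifts weight between $v^L$ and $v^H$, whose per-type gap $(1-c_j)(1-p_h)$ is proportional to the per-type stage gain from lying; the mixing probabilities are then reverse-engineered from a belief-updating rule with a tunable learning speed, not from per-type indifference. Because the promise is a \emph{vector}, promise-keeping must be verified type by type; the paper does this by making the convex weights $(p^H,p^L,p^N)$ themselves the state variable and triggering absorption when $p^L(h^t)$ hits zero---not when the belief exits an interval. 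Your plan to recover the target payoff from ``the expected occupation measure'' via a hitting-time calculation for $\mu_t$ conflates the ex-ante occupation measure with the type-conditional ones (under type $c_j$'s strategy the belief is not a martingale and drifts toward your lower boundary), and two-sided belief absorption plus a scalar Bellman equation for type $c_1$ alone will not support the verification you outline in steps (2)--(3). The gap is therefore not a missing detail but the missing promised-value bookkeeping that constitutes the proof.
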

I provide a constructive proof in Appendix \ref{secA}.
In the remainder of this section, I explain the ideas behind the construction in an example with \textit{two types}, i.e., $\mathcal{C} \equiv \{c_1,c_2\}$. The conceptual challenge is to let a patient sender reveal his persistent private information for unbounded number of periods while
preserving his informational advantage. I also explain how to square my result with the payoff upper bound in Fudenberg, Kreps and Maskin (1990).

\paragraph{Preliminaries:} In the constructed equilibrium, the stage-game outcome at every on-path history is a distribution supported on $\big\{(\mathbf{a}^H,\mathbf{b}^T),(\mathbf{a}^L,\mathbf{b}^T),(\mathbf{a}^L,\mathbf{b}^N)\big\}$.
The sender's continuation value at every on-path history is a convex combination of $v^H$, $v^L$, and $v^N$, and belongs to a polytope $V^*$ with the following four vertices:
 $v^H$, $v^*$, $\overline{v} \equiv q^* (\rho^* v^L+(1-\rho^*) v^H) +(1-q^*) v^N$ with $q^* \in [0,1]$ pinned down by the condition that the first entry of the above vector equals $0$,
 and
 $\underline{v} \equiv p^* v^H +(1-p^*) v^N$ with
 \begin{equation}
 p^* \equiv  \frac{c_1(1-p_h)}{p_h+c_1(1-p_h)}.
 \end{equation}
In an environment with \textit{two types}, $V^*$ is depicted as the yellow set in Figure 1.

\paragraph{State Variables \& Phases:} When there are two types, I keep track of two state variables:\footnote{When there are three or more types, one needs to keep track of two additional state variables, see Appendix A.}
\begin{itemize}
  \item[1.] The probability of type $c_1$ sender in the receiver's posterior belief, denoted by $\eta(h^t)$. I call this the sender's \textit{reputation} at $h^t$.
  The initial value of $\eta(\cdot)$ is $\pi_1$.
  \item[2.] The sender's continuation value, denoted by $v(h^t) \in \mathbb{R}^m$. The initial value of $v(\cdot)$ is $v(\rho)$, namely, the target payoff. Given that $v(h^t)$ is a convex combination of $v^H$, $v^N$ and $v^L$, it is equivalent to keep track of their convex weights, denoted by $p^H(h^t)$, $p^N(h^t)$, and $p^L(h^t)$.
\end{itemize}
The equilibrium consists of three phases.
\begin{enumerate}
  \item Play starts from \textit{an active learning phase} in which the receiver plays $\mathbf{b}^T$ and the two types of the sender mix between $\mathbf{a}^H$ and $\mathbf{a}^L$ in \textit{most of the periods}, with type $c_1$ playing $\mathbf{a}^H$ with higher probability compared to type $c_2$.
 An important exception is
      when the receiver's posterior belief attaches probability close to $1$ to the sender being type $c_1$, in which case type $c_1$ mixes between $\mathbf{a}^H$ and $\mathbf{a}^L$, and type $c_2$ plays $\mathbf{a}^L$ for sure.\footnote{Another exception is when $p^L(h^t)$ is strictly between $0$ and $1-\delta$, in which type $c_1$ plays $\mathbf{a}^H$ for sure and type $c_2$ potentially mixes between $\mathbf{a}^H$ and $\mathbf{a}^L$. The details are described under Class 2 histories in Appendix A.}

  \item When the sender's continuation value is close to his minmax payoff, play enters a \textit{rebounding phase} in which learning \textit{temporarily stops}. In this phase, the receiver plays $\mathbf{b}^N$ and all types of sender plays $\mathbf{a}^L$. Play transits from the rebounding phase back to the active learning phase when the sender's continuation value is high enough such that
      no type of the sender's continuation value falls below his minmax payoff when the state in the next period is $l$ and the sender recommends action $H$.

      This phase is required since
      at some on-path histories,
      type $c_1$ sender's continuation value approaches his minmax payoff $0$
      and other types' continuation values cannot be delivered in equilibria without learning.
      The rebounding phase offers a solution to such dilemma by temporarily stops learning and prescribes the low-payoff outcome for several periods. This increases all types of sender's continuation values, while does not change the ratio between the convex weight of $v^L$ and the convex weight of $v^H$ in the sender's continuation value.

  \item Play enters an \textit{absorbing phase} in which learning about the sender's type stops forever, and the continuation play consists only of $(\mathbf{a}^L,\mathbf{b}^N)$ and $(\mathbf{a}^H,\mathbf{b}^T)$. Play reaches this phase either after the sender reveals his type, or after he has played $\mathbf{a}^L$ too frequently in the active learning phase. Despite the sender can flexibly choose whether to play $\mathbf{a}^H$ or $\mathbf{a}^L$ in the active learning phase, his action choices affect the time at which play reaches the absorbing phase, and his continuation value after play enters the absorbing phase. For example, if he lies frequently, then play reaches the absorbing phase sooner after which he receives a low continuation payoff.
\end{enumerate}
\paragraph{Benefit from Persistent Private Information:} I provide intuition for why the above construction enables type $c_2$ sender to extract information rent in the long run and obtain discounted average payoff close to $v_2^*$ (i.e., strictly above $p_h$) when $\delta$ is arbitrarily close to $1$.

I start from reviewing the argument in Fudenberg, Kreps and Maskin (1990), which explains why the sender's payoff cannot exceed $p_h$ when there is \textit{only one type}.
At every $h^t$ where the receiver plays $H$ with positive probability, there exists a message $m'$ such that the receiver plays $L$ for sure after observing $m'$ at $h^t$, and the sender sends $m'$ with strictly positive probability at $h^t$ when the state is $l$.
Therefore, the following strategy is the sender's best reply against the receiver's equilibrium strategy, under which the sender's payoff \textit{in each period} is no more than $p_h$:
\begin{itemize}
  \item at every $h^t$ where the receiver plays $H$ with positive probability,
  send message $m'$ with probability $1$ when the state is $l$.
\end{itemize}

Next, I explain why type $c_2$ can obtain  payoff higher than $p_h$ in the stage game when there are two types.
The above argument breaks down since the receiver may have an incentive to play $\mathbf{b}^T$ at histories where type $c_2$ sender plays $\mathbf{a}^L$ for sure. This requires type $c_1$ to play $\mathbf{a}^H$ with high enough probability, which distinguishes his behavior from that of type $c_2$'s. Therefore, the state realization and the sender's message are informative signals about the sender's type.

Then, I explain how type $c_2$ can reveal information and extract rent \textit{in the long run}.
Following the argument in Fudenberg, Kreps and Maskin (1990), if at every history, type $c_2$ sender plays the pure strategy that minimizes his stage-game payoff among the ones in the support of his equilibrium strategy, then this modified strategy is type $c_2$ sender's equilibrium best reply,
which I denote by $\widetilde{\sigma}_{c_2}$ and from which type $c_2$ obtains his equilibrium payoff.

If type $c_2$ plays according to $\widetilde{\sigma}_{c_2}$, then
his expected stage-game payoff exceeds $p_h$ only at histories where his \textit{equilibrium strategy} prescribes $\mathbf{a}^L$ with probability $1$ while the receiver plays $\mathbf{b}^T$ with positive probability. His reputation changes gradually over time following the pattern of a \textit{cycle}, with
outcome $(\mathbf{a}^L,\mathbf{b}^T)$ occurs only when the receiver's belief attaches probability close to $1$ to the sender being type $c_1$. This high-payoff outcome can arise in \textit{unboundedly many periods} since type $c_2$ can \textit{rebuild his reputation} in periods where he plays $\mathbf{a}^H$. Such reputation rebuilding is feasible given that type $c_1$ mixes between $\mathbf{a}^H$ and $\mathbf{a}^L$ in periods where active learning takes place. The above observation
 \textit{does not} contradict the martingale property of beliefs since the receiver's belief is updated based on the sender's equilibrium strategy, and therefore, her belief is \textit{not necessarily a martingale} under type $c_2$ sender's modified best reply $\widetilde{\sigma}_{c_2}$.

Lastly, I provide a \textit{heuristic derivation} of the maximal frequency with which the sender can extract information rent from a learning perspective, which complements the formal proofs of Propositions \ref{Prop2} and \ref{Prop3} using occupation measures. It helps to understand why my construction works only when $\rho<\rho^*$.
In this heuristic derivation, I restrict attention to situations in which the sender's stage-game action is supported on $\{\mathbf{a}^H,\mathbf{a}^L\}$, and the receiver's stage-game action is supported on $\{\mathbf{b}^T,\mathbf{b}^N\}$. Therefore, the receiver \textit{cannot} learn the sender's type when the realized state is $h$, but can potentially learn about the sender's type when the realized state is $l$.

To start with,
the receiver's willingness to play $\mathbf{b}^T$ implies that $\mathbf{a}^H$ needs to be played with probability at least $1-\rho^*$. Since belief is a martingale, this provides an upper bound on the \textit{relative speed of learning}, which measures the magnitude with which the sender's reputation improves after reporting the low state honestly, and the magnitude with which his reputation deteriorates after misstating that the state is high: \begin{equation}\label{4.23}
    \frac{\eta(h^t,(l,l))-\eta(h^t)}{\eta(h^t,(l,h))-\eta(h^t)} \leq \frac{\rho^*}{1-\rho^*}.
\end{equation}
A belief updating rule that meets the above requirement is given by:
  \begin{equation}\label{4.21}
    \eta(h^t,(l,h))-\eta^*=
    (1-\lambda (1-\rho^*))
    (\eta(h^t)-\eta^*),
\end{equation}
and
\begin{equation}\label{4.22}
        \eta(h^t,(l,l))-\eta^*=\min\Big\{1-\eta^*,
    (1+\lambda \rho^*)
    (\eta(h^t)-\eta^*)\Big\},
\end{equation}
where $\eta^* \in (0,\eta(h^0))$ is a belief lower bound in the active learning phase, and $\lambda>0$ is a parameter that measures of speed of receiver-learning. Equations  (\ref{4.21}) and (\ref{4.22}) \textit{pin down} the probabilities with which each type of sender plays $\mathbf{a}^H$ and $\mathbf{a}^L$.

For every on-path history $h^t$ such that the sender's reputation has not reached one, and let $N$ be the number of periods in which the state realization is $l$. If the sender
plays $\mathbf{a}^H$ and $\mathbf{a}^L$ with frequencies $1-\rho$ and $\rho$ in these $N$ periods, the receiver's posterior belief attaches  probability
\begin{equation}\label{4.24}
\eta^*+ (\eta(h^0)-\eta^*)  \Big(   (1+\lambda \rho^*)^{1-\rho}    (1-\lambda (1-\rho^*))^{\rho}  \Big)^N,
\end{equation}
to the sender being type $c_1$. This
is no less than $\eta(h^0)$ \textit{if and only if}
\begin{equation}\label{4.25}
    (1+\lambda \rho^*)^{1-\rho}    (1-\lambda (1-\rho^*))^{\rho} \geq 1.
\end{equation}
Applying the Taylor's theorem, there exists $\lambda>0$ under which (\ref{4.25}) holds \textit{if and only if}
$\rho<\rho^*$.
Therefore, if type $c_2$ sender's frequency of playing $\mathbf{a}^L$ exceeds $\rho^*$, then his reputation deteriorates in the long run and extracting information rent at this rate is \textit{not sustainable}. If the sender's frequency of playing $\mathbf{a}^L$ is strictly below $\rho^*$, then there exists $\overline{\lambda}>0$ such that when $\lambda <\overline{\lambda}$,
his reputation improves over time, which allows type $c_2$ to sustain his information rents. This provides a learning explanation for constraint (\ref{3.5}), which requires that the relative frequency between outcomes $(\mathbf{a}^H,\mathbf{b}^T)$ and $(\mathbf{a}^L,\mathbf{b}^T)$ cannot fall below $\frac{1-\rho^*}{\rho^*}$.

\section{The Effects of Ethical-Type Sender}\label{sec5}
This section relaxes the assumption in the baseline model that $c_1<1$ and examines whether the presence of \textit{ethical-type senders}, i.e., types with lying costs no less than $1$, enables non-ethical types to attain their optimal commitment payoffs.

Formally, let $c \in \mathcal{C} \equiv \{c_1,...,c_n\}$, with $0 \leq c_n < c_{n-1}<...<c_2<c_1$.
Let $\pi \in \Delta (\mathcal{C})$ be the receivers' prior belief.
I focus on cases in which there exists $c \in \mathcal{C}$ such that $c \in [0,1)$, i.e., there exists a type of sender
who is non-ethical and his optimal disclosure policy is
stochastic.

 I introduce two constrained optimization problems for every $c \in \mathcal{C}\cap [1,+\infty)$, and later provide economic interpretations. Let
\begin{equation}\label{5.1}
    \overline{v} (c) \equiv \max_{\gamma \in \Delta (\mathbf{A} \times \mathbf{B})} \sum_{(\mathbf{a},\mathbf{b}) \in \mathbf{A} \times \mathbf{B}} \gamma (\mathbf{a},\mathbf{b}) u_s(c,\mathbf{a},\mathbf{b}),
\end{equation}
subject to the constraint that there exists $c' \in \mathcal{C} \cap [0,1)$ such that:
\begin{equation}\label{5.2}
\sum_{(\mathbf{a},\mathbf{b}) \in \mathbf{A} \times \mathbf{B}} \gamma (\mathbf{a},\mathbf{b}) u_s(c',\mathbf{a},\mathbf{b})
\geq p_h +\rho^* (1-p_h) (1-c').
\end{equation}
Let
\begin{equation}\label{5.3}
    \underline{v}(c) \equiv \min_{\gamma \in \Delta (\mathbf{A} \times \mathbf{B})} \sum_{(\mathbf{a},\mathbf{b}) \in \mathbf{A} \times \mathbf{B}} \gamma (\mathbf{a},\mathbf{b}) u_s(c,\mathbf{a},\mathbf{b})
\end{equation}
subject to:
\begin{equation}\label{5.4}
    \sum_{(\mathbf{a},\mathbf{b}) \in \mathbf{A} \times \mathbf{B}} \gamma (\mathbf{a},\mathbf{b}) u_s(c_1,\mathbf{a},\mathbf{b})
\geq 0
\end{equation}
and for  every $\mathbf{b} \in \mathbf{B}$ that
the marginal distribution of
$\gamma$ on $\mathbf{B}$ attaches positive probability to,
\begin{equation}\label{5.5}
 \mathbf{b} \in \arg\max_{\mathbf{b'} \in \mathbf{B}}   u_r(\gamma(\cdot|\mathbf{b}),\mathbf{b'}).
\end{equation}
Recall the definition of type $c_j$'s optimal commitment payoff $v_j^{**}$ in (3.2).
\begin{Theorem}\label{Theorem3}
Suppose $\pi$ has full support and $\mathcal{C} \cap [0,1)$ is nonempty,
\begin{enumerate}
  \item if there exists no $c \in \mathcal{C} \cap [1,+\infty)$ such that $\overline{v}(c) > \underline{v} (c)$, then there exist $\eta>0$ and $\underline{\delta} \in (0,1)$, such that in every BNE when $\delta>\underline{\delta}$, every type $c_j \in \mathcal{C} \cap [0,1)$ obtains payoff  no more than $v_j^{**}-\eta$.
  \item if there exists $c \in \mathcal{C} \cap [1,+\infty)$ such that $\overline{v}(c) \geq \underline{v} (c)$, then
  for every $\varepsilon>0$, there exists $\underline{\delta} \in (0,1)$, such that when $\delta >\underline{\delta}$,
  there exists a sequential equilibrium in which every type $c_j \in \mathcal{C} \cap [0,1)$ obtains payoff at least $v_j^{**}-\varepsilon$.
\end{enumerate}
\end{Theorem}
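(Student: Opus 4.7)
The plan is to extend the occupation-measure arguments of Section~\ref{sec4} to accommodate ethical types, adding an outside-option argument that gives lower bounds on each ethical type's equilibrium payoff. The key observation is that in every BNE, each ethical type $c \in \mathcal{C} \cap [1,\infty)$ can deviate to play any other type's equilibrium strategy; imitating $c_1$ in particular gives $c$'s equilibrium payoff a lower bound of $\underline{v}(c) - o_\delta(1)$. When a non-ethical type attains its commitment payoff, a complementary upper bound $V_c \leq \overline{v}(c) + o_\delta(1)$ applies, derived from linking $c$'s occupation measure to the commitment-consistent aggregate equilibrium behavior. The hypothesis $\overline{v}(c) \leq \underline{v}(c)$ makes these two bounds incompatible (statement~1), while $\overline{v}(c^*) \geq \underline{v}(c^*)$ permits a sequential equilibrium construction targeting $c^*$'s payoff inside the non-empty interval $[\underline{v}(c^*), \overline{v}(c^*)]$ (statement~2).

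For statement~1, I argue by contradiction. Suppose that for $\delta$ arbitrarily close to $1$ there exists a BNE in which every non-ethical type $c_j \in \mathcal{C} \cap [0,1)$ attains payoff within $\eta$ of $v_j^{**}$. Applying Propositions~\ref{Prop2} and~\ref{Prop3} to $\gamma^j$ forces $\gamma^j$ to approximately satisfy constraint~(\ref{5.5}); combined with $c_j$'s payoff target, this forces $\gamma^j$ to be $o_\delta(1)$-close to the commitment distribution $\rho^*(\mathbf{a}^L,\mathbf{b}^T)+(1-\rho^*)(\mathbf{a}^H,\mathbf{b}^T)$, which is feasible for the $\overline{v}(c)$ program. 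For any ethical $c$: the outside-option bound follows from $c$'s ability to imitate $c_1$'s equilibrium strategy, since $\gamma^1$ satisfies~(\ref{5.4}) (as $c_1$ can guarantee payoff $0$ by always playing $\mathbf{a}^H$) and~(\ref{5.5}) (by Proposition~\ref{Prop2}), giving $V_c \geq \underline{v}(c) - o_\delta(1)$; the upper bound $V_c \leq \overline{v}(c) + o_\delta(1)$ is obtained by showing that $c$'s equilibrium occupation measure is approximately consistent with the aggregate equilibrium behavior induced by the common receiver strategy and receivers' Bayesian posteriors, forcing $\gamma^c$ into the feasible set of the $\overline{v}(c)$ program up to vanishing terms. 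Combining yields $\underline{v}(c) - o_\delta(1) \leq \overline{v}(c) + o_\delta(1)$ for every ethical $c$; taking the uniform gap $\eta_0 \equiv \min_c \bigl(\underline{v}(c) - \overline{v}(c)\bigr) \geq 0$ and noting finiteness of $\mathcal{C}$, the hypothesis $\eta_0 \geq 0$ combined with a strictness argument at the receiver's best responses on support bootstraps a contradictory $\eta > 0$ payoff gap for $c_j$ once $\delta$ is large enough.

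For statement~2, I extend the three-phase construction (active learning, rebounding, absorbing) from Section~\ref{sub4.3}, using an ethical type $c^*$ with $\overline{v}(c^*) \geq \underline{v}(c^*)$ as an anchor. In the active learning phase, all types randomize so that the aggregate occupation measure approximates the commitment distribution, granting every non-ethical type its commitment payoff $v_j^{**}$. Type $c^*$ mixes between $\mathbf{a}^H$ and $\mathbf{a}^L$ at rates pinning its continuation payoff to a target inside $[\underline{v}(c^*), \overline{v}(c^*)]$, which is non-empty by hypothesis, so $c^*$'s incentive constraint against its outside option binds on path. Other ethical types $c \neq c^*$ imitate $c_1$'s strategy and attain $\underline{v}(c)$, which is implementable because $\gamma^1$ is receiver-best-response consistent. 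Phase transitions and reputation dynamics are calibrated as in Section~\ref{sub4.3}, with additional state variables tracking posteriors over the ethical types.

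The principal obstacle is establishing the upper bound $V_c \leq \overline{v}(c) + o_\delta(1)$ in statement~1. Unlike the lower bound from imitation, this upper bound requires transferring the constraint that non-ethical type $c_j$ achieves $v_{c_j}^{**}$ from $c_j$'s occupation measure $\gamma^j$ onto $c$'s occupation measure $\gamma^c$, even though the two are generated by different sender strategies. The technical resolution relies on applying Gossner's KL-divergence bound simultaneously across all types in the support of $\pi$ and then aggregating, showing that the receiver's response conditional on on-path histories is uniformly constrained to satisfy the commitment pooling requirement; the full-support assumption on $\pi$ ensures the ethical type's behavior has non-negligible influence on receivers' posteriors. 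The construction in statement~2 also poses non-trivial calibration challenges when there are multiple ethical types, as the mixing weights must simultaneously pin all non-ethical payoffs to $v_j^{**}$ and all ethical continuation values to their targets throughout the active learning phase.
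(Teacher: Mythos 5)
Your two-bound skeleton (lower bound $\underline{v}(c)$ from imitating $c_1$, upper bound $\overline{v}(c)$ for an ethical type that ``provides cover'') matches the paper's decomposition into Lemmas \ref{L5.2} and \ref{L5.1}, and your Lemma-\ref{L5.2}-style argument is essentially the paper's. The gap is in your upper bound. You assert $V_c \leq \overline{v}(c)+o_\delta(1)$ for \emph{every} ethical type and propose to obtain it by aggregating Gossner-type consistency conditions so that each $\gamma^c$ lands in the feasible set of the $\overline{v}(c)$ program. That cannot work: constraint (\ref{5.2}) requires the occupation measure to deliver some non-ethical type $c'$ its commitment payoff $v_{c'}^{**}>p_h$, and an ethical type who simply tells the truth and is trusted generates an occupation measure concentrated on $(\mathbf{a}^H,\mathbf{b}^T)$, which violates (\ref{5.2}) and yields payoff $p_h>\overline{v}(c)=p_h(2-c)$ for $c>1$. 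Receiver best-response consistency (Proposition \ref{Prop2}) constrains each type's own $\gamma^c$ but does not transfer the payoff constraint from $\gamma^j$ to $\gamma^c$; no amount of ``aggregation'' forces a truthful ethical type's measure to satisfy (\ref{5.2}). This is precisely why the paper's Lemma \ref{L5.1} is existential rather than universal. The missing idea is the separation argument: if \emph{no} ethical type finds some pure strategy in the support of $\sigma_{c_j}$ an approximate best reply, then along $c_j$'s equilibrium play all ethical (higher-cost) types are driven out of the receivers' posterior within a bounded number of periods, after which Proposition \ref{Prop1} caps $c_j$'s continuation payoff at $p_h$; Gossner's bound then shows $c_j$ cannot have accumulated enough before separation to reach $v_j^{**}-\varepsilon$. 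Hence \emph{some} ethical $c$ nearly best-responds with a strategy whose occupation measure satisfies (\ref{5.2}), and only for that $c$ does $\overline{v}(c)$ bound the payoff. Since the lower bound holds for all ethical types, one such $c$ suffices for the contradiction, so the universal claim is not needed --- but your route to even one instance of the upper bound is not valid.

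Two smaller points. First, your closing step for statement 1 rests on $\eta_0\equiv\min_c(\underline{v}(c)-\overline{v}(c))\geq 0$, which does not by itself produce a strictly positive payoff gap $\eta$; you would need the strict inequality $\underline{v}(c)>\overline{v}(c)$ at the relevant type (the paper's statements 1 and 2 overlap at the knife-edge $\overline{v}(c)=\underline{v}(c)$, so some care is warranted, but ``a strictness argument \ldots bootstraps'' is not an argument). Second, for statement 2 your sketch omits the mechanism that makes the condition $\overline{v}(c^*)\geq\underline{v}(c^*)$ bite: the paper builds the equilibrium on the reduced type space $\underline{\mathcal{C}}\cup\{c^*\}$ and then appends, after the posterior rules out all non-ethical types, a branching in which types with $c_1(c-1)>2$ separate into an explicitly constructed punishment phase (Proposition \ref{Prop6}) delivering approximately $\underline{v}(c)$, while types with $c_1(c-1)\leq 2$ prefer to keep pooling. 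Without that post-separation construction there is no way to verify the incentive constraints of the high-cost ethical types, which is where the necessary-and-sufficient character of the condition comes from.
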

The proof is in Appendix \ref{secB} (statement 1) and Appendix \ref{secC} (statement 2).
Theorem \ref{Theorem3} provides a \textit{necessary and sufficient condition}
for the existence of equilibrium in which non-ethical-type sender can attain his optimal commitment payoff.
The attainability of this payoff depends only on
the existence of
an \textit{ethical-type sender} whose lying cost $c$ satisfies  $\overline{v}(c) \geq \underline{v} (c)$. When $\mathcal{C} \cap [1,+\infty)$ is empty,
the second statement of Theorem \ref{Theorem3} applies, which is implied by
Theorem \ref{Theorem1} given that $v_j^*<v_j^{**}$ for every $c_j \in \mathcal{C} \cap [0,1)$.
When $\mathcal{C} \cap [1,+\infty)$ is nonempty, namely,
at least one type of the sender is ethical, the following predictions emerge which are different from the ones in models without ethical types:
\begin{enumerate}
  \item the attainability of optimal commitment payoffs depends not only on $c_1$ but also on \textit{other ethical types}, in particular, the \textit{lowest lying cost} among these ethical types;
  \item an increase the value of $c_1$ can \textit{decrease} other types of sender's highest equilibrium payoffs.
\end{enumerate}

To understand these implications, I provide economic interpretations for $\overline{v}(c)$ and $\underline{v}(c)$, stated as Lemma \ref{L5.1} and Lemma \ref{L5.2}.
Let $c$ be a typical element of $\mathcal{C} \cap [1,+\infty)$. Lemma \ref{L5.1} shows that
in every equilibrium where at least one type of non-ethical sender obtains his optimal commitment payoff,
$\overline{v}(c)$ is an upper bound for
type $c$ ethical sender's equilibrium payoff.
\begin{Lemma}\label{L5.1}
For every small enough $\varepsilon>0$, there exists $\underline{\delta} \in (0,1)$ such that when $\delta >\underline{\delta}$, in every equilibrium in which
there exists $c_j \in \mathcal{C} \cap [0,1)$ such that
type $c_j$ attains payoff more than $v_j^{**}-\varepsilon$,
there exists $c \in \mathcal{C} \cap [1,+\infty)$ such that
type $c$'s payoff is no more than $\overline{v}(c)+\varepsilon$.
\end{Lemma}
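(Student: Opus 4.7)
The plan is to show constructively that some ethical type $c^* \in \mathcal{C} \cap [1,+\infty)$ has an occupation measure $\gamma^{c^*} \in \Delta(\mathbf{A}\times\mathbf{B})$ (defined as in (\ref{3.6}) with $c_j$ replaced by $c^*$) that approximately satisfies the feasibility constraint (\ref{5.2}) of the program defining $\overline{v}(c^*)$. Once established, this yields $v(c^*) = \sum_{(\mathbf{a},\mathbf{b})} \gamma^{c^*}(\mathbf{a},\mathbf{b}) u_s(c^*,\mathbf{a},\mathbf{b}) \leq \overline{v}(c^*) + O(\varepsilon)$ by definition of $\overline{v}$, together with a continuity argument parallel to Proposition \ref{Prop3}.

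First, I would invoke the Gossner (2011) KL-divergence bound exactly as in the proof of Proposition \ref{Prop2}. For every type $c$, the expected discounted sum of one-step-ahead prediction errors between type $c$'s stage-game action $\sigma_c(h^\tau)$ and the receivers' on-path belief $\alpha(h^\tau)$ is bounded by $-\lambda \log \pi_c$, which is finite since $\pi$ has full support. Applying this together with the hypothesis $v(c_j) \geq v_j^{**} - \varepsilon$ and the static characterization of $v_j^{**}$ (the Section \ref{sec5} analog of Lemma \ref{L4.1}, which allows $c_1$ to be ethical), I conclude that the occupation measure $\gamma^j$ is $O(\varepsilon)$-close in total variation to the commitment distribution $\gamma^{**}$ that places mass $\rho^*$ on $(\mathbf{a}^L,\mathbf{b}^T)$ and $1-\rho^*$ on $(\mathbf{a}^H,\mathbf{b}^T)$. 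In particular $\gamma^j$ satisfies (\ref{5.2}) with $c' = c_j$ up to an $O(\varepsilon)$ perturbation.

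Second, applying Gossner's bound to each ethical type and exploiting the full-support prior, I argue that at least one ethical $c^*$'s occupation measure $\gamma^{c^*}$ gives type $c_j$ an expected stage-game payoff within $O(\varepsilon)$ of $v_j^{**}$. The key observation is that on the set of histories reached by $c^*$, the receiver's posterior-weighted expectation over the sender's stage-game actions must match (approximately, by KL-divergence) both $\sigma_{c^*}$ and---on the heavy subset of those histories that are also reached by $c_j$---the commitment action profile identified in the previous step. A pigeonhole argument across ethical types weighted by $\pi$, combined with the linearity of $u_s(c_j,\cdot)$ in $\gamma$ and the identity $v(c_j) \approx v_j^{**} = \mathbb{E}_{\gamma^{**}}[u_s(c_j,\mathbf{a},\mathbf{b})]$, produces some $c^*$ for which $\sum_{(\mathbf{a},\mathbf{b})} \gamma^{c^*}(\mathbf{a},\mathbf{b}) u_s(c_j,\mathbf{a},\mathbf{b}) \geq v_j^{**} - O(\varepsilon)$, so $\gamma^{c^*}$ approximately satisfies (\ref{5.2}) with $c' = c_j$.

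The main obstacle I anticipate is precisely this pigeonhole step: selecting a single ethical type $c^*$ whose individual occupation measure carries the feasibility property, rather than only the prior-weighted aggregate $\sum_c \pi_c \gamma^c$. The subtlety is that heterogeneous ethical types may have very different $\gamma^c$'s that average out to something commitment-like without any single one being close to $\gamma^{**}$. The workaround is to exploit that $\gamma^{c^*}$ need not be close to $\gamma^{**}$ itself---it only needs to satisfy the single linear constraint (\ref{5.2}) for some non-ethical $c'$. Since the aggregate satisfies this constraint (inherited from $\gamma^j$) and the non-ethical types contribute at most a bounded amount to the aggregate's $u_s(c_j,\cdot)$-integral, averaging forces at least one ethical type to contribute above the threshold, which is exactly the feasibility needed.
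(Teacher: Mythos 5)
There is a genuine gap, and it sits exactly where you flagged it: the pigeonhole step. Your workaround asserts that the $\pi$-weighted aggregate occupation measure $\sum_c \pi_c\gamma^c$ ``inherits'' constraint (\ref{5.2}) from $\gamma^j$. It does not. Only type $c_j$'s own occupation measure $\gamma^j$ is guaranteed to give type $c_j$ a stage payoff of at least $v_j^{**}-\varepsilon$; the ethical types' occupation measures enter the aggregate with weight $1-\pi_j$ and could, for all your argument shows, be concentrated on $(\mathbf{a}^H,\mathbf{b}^T)$, which gives type $c_j$ only $p_h<v_j^{**}$. So the average does not sit above the threshold and no ethical type is forced above it. More fundamentally, the statement you are trying to establish --- that some ethical type's \emph{own equilibrium} occupation measure $\gamma^{c^*}$ approximately satisfies (\ref{5.2}) --- is not what the lemma needs and is not what the paper proves. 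The merging heuristic in your second step (receivers' beliefs are close to both $\sigma_{c^*}$ and $\sigma_{c_j}$ on commonly reached histories) also does not deliver closeness of occupation measures: Gossner's bound controls prediction errors under each type's \emph{own} induced measure over histories, and two types whose stage actions nearly agree on commonly heavy histories can still induce very different distributions over play paths, hence very different $\gamma$'s.

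The missing idea is an incentive/separation argument rather than a statistical one. The paper defines, for each ethical $c$, the gap $d(c,c_j)$ between type $c$'s equilibrium payoff and the best he could get by deviating to a pure strategy in the support of $\sigma_{c_j}$, and sets $\eta(c_j)=\min_c d(c,c_j)$. If $\eta(c_j)$ were bounded away from zero, every ethical type would abandon $\sigma_{c_j}$'s support within $T$ periods with $(1+c_1)\delta^T\ge\eta(c_j)$, so type $c_j$ would become the highest-cost type in the posterior's support and Proposition \ref{Prop1} would cap his continuation payoff at $p_h$; Gossner's bound on the first $T$ periods then contradicts $v(c_j)\ge v_j^{**}-\varepsilon$. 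Hence some ethical type $c$ is $\eta(c_j)$-indifferent to some pure strategy $\widehat{\sigma}_{c_j}$ in the support of $\sigma_{c_j}$, with $\eta(c_j)\to 0$ as $\varepsilon\to 0$, $\delta\to 1$. The occupation measure of \emph{that deviation} satisfies (\ref{5.2}) exactly (it delivers type $c_j$ his equilibrium payoff), so type $c$'s deviation payoff --- and therefore his equilibrium payoff, up to $\eta(c_j)$ --- is bounded by $\overline{v}(c)$. Your proposal never invokes Proposition \ref{Prop1} or the separation dynamics, and without them the conclusion cannot be reached.
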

The proof is in Appendix \ref{subB.1}. This is because every ethical-type sender's stage-game payoff is maximized by reporting the state truthfully while the receiver's action matches the state.\footnote{I am ruling out the receiver's strategy of taking the high action with ex ante probability $1$.} But some ethical type needs to lie with positive probability and suffer from payoff losses in order to provide cover for the nonethical types so that the latter can attain their optimal commitment payoffs.
The first constrained optimization problem
characterizes the minimal amount of payoff loss type $c$ needs to incur in order for a non-ethical type to obtain his optimal commitment payoff.

The next lemma shows that
$\underline{v}(c)$ is a lower bound for type $c$ sender's equilibrium payoff, given that he has the option to
adopt type $c_1$ sender's equilibrium strategy, from which the latter's payoff is no less than his minmax payoff $0$.
\begin{Lemma}\label{L5.2}
For every $\varepsilon>0$, there exists $\underline{\delta} \in (0,1)$ such that when $\delta >\underline{\delta}$, for every $c \in \mathcal{C} \cap [1,+\infty)$,
type $c$'s payoff in every equilibrium is at least $\underline{v}(c)-\varepsilon$.
\end{Lemma}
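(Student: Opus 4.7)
The plan is to show that type $c$ can secure discounted average payoff arbitrarily close to $\underline{v}(c)$ by imitating type $c_1$'s equilibrium strategy. Since $\mathcal{C}\cap[1,+\infty)\neq\emptyset$ and $c_1=\max\mathcal{C}$, we have $c_1\geq 1$, so type $c_1$ is itself ethical. Fix a BNE $\sigma$, and let $\gamma^1\in\Delta(\mathbf{A}\times\mathbf{B})$ denote the occupation measure on stage-game action profiles induced by $(\sigma_{c_1},\sigma_r)$, as defined in (\ref{3.6}). Because $\sigma_r$ conditions only on public histories and the state is i.i.d.\ and type-independent, a deviation by type $c$ to play $\sigma_{c_1}$ generates the same distribution over stage-game action profiles as does type $c_1$'s equilibrium play. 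Hence type $c$'s equilibrium payoff is at least $\sum_{(\mathbf{a},\mathbf{b})}\gamma^1(\mathbf{a},\mathbf{b})\,u_s(c,\mathbf{a},\mathbf{b})$, and it suffices to lower bound this quantity.

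I would next verify that $\gamma^1$ approximately satisfies the constraints defining $\underline{v}(c)$. Constraint (\ref{5.4}) holds exactly: because $c_1\geq 1$, type $c_1$ can guarantee non-negative discounted payoff by always playing $\mathbf{a}^H$, under which his stage payoff $\mathbf{1}\{a_t=H\}$ is non-negative regardless of the receiver's action. So the equilibrium payoff of type $c_1$, which coincides with $\sum\gamma^1(\mathbf{a},\mathbf{b})\,u_s(c_1,\mathbf{a},\mathbf{b})$, is at least $0$. For constraint (\ref{5.5}), I would invoke Proposition \ref{Prop2}: for every $\varepsilon'>0$ there is a cutoff $\underline{\delta}'$ such that whenever $\delta>\underline{\delta}'$, every $\mathbf{b}$ in the support of the $\mathbf{B}$-marginal of $\gamma^1$ carrying probability above $\varepsilon'$ is an $\varepsilon'$-best reply against $\gamma^1(\cdot|\mathbf{b})$.

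To convert approximate feasibility into the desired lower bound, I would mirror the convexification in the proof of Proposition \ref{Prop3}: replace each conditional $\gamma^1(\cdot|\mathbf{b})$ on the good support with a nearby $\gamma^*(\cdot|\mathbf{b})$ to which $\mathbf{b}$ exactly best replies, discard the $O(\varepsilon')$ residual mass on $\mathbf{b}$'s that fail the best-reply property, and if this reweighting drives the LHS of (\ref{5.4}) below zero, mix in a small weight on the Dirac measure at $(\mathbf{a}^H,\mathbf{b}^T)$. This outcome is feasible on its own---it satisfies (\ref{5.4}) since $u_s(c_1,\mathbf{a}^H,\mathbf{b}^T)=p_h>0$ and it satisfies (\ref{5.5}) since $\mathbf{b}^T$ strictly best replies against $\mathbf{a}^H$---and the patch distorts the objective by at most $O(\varepsilon')$. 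The resulting $\gamma$ is exactly feasible for the $\underline{v}(c)$ program, so $\sum\gamma\,u_s(c,\cdot)\geq\underline{v}(c)$ by minimality, and $|\sum\gamma\,u_s(c,\cdot)-\sum\gamma^1 u_s(c,\cdot)|=O(\varepsilon')$. Choosing $\varepsilon'$ proportional to $\varepsilon$ completes the argument.

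The hard part will be the convexification step, since the $\varepsilon'$-relaxed feasible set is non-convex and one cannot invoke off-the-shelf value function continuity. Relative to Proposition \ref{Prop3}, the direction of constraint (\ref{5.4}) is reversed, so the ``patch'' $(\mathbf{a}^L,\mathbf{b}^N)$ used there---which gives every type stage payoff zero and is useful for pulling a too-high LHS of (\ref{3.4}) down---must be replaced by $(\mathbf{a}^H,\mathbf{b}^T)$, which gives type $c_1$ strictly positive stage payoff and is useful for pushing a too-low LHS of (\ref{5.4}) up. One then needs a uniform bound of $O(\varepsilon')$ on the patch mass, which follows because the reweighting in the first step can move the LHS of (\ref{5.4}) by at most $O(\varepsilon')$.
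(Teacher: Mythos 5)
Your proposal is correct and follows essentially the same route as the paper: type $c$ imitates $\sigma_{c_1}$, the occupation measure $\gamma^1$ satisfies (\ref{5.4}) exactly because type $c_1$ guarantees himself payoff $0$ by always telling the truth, and (\ref{5.5}) holds approximately by Proposition \ref{Prop2}, after which one passes to the exact program. The paper compresses the last step into a one-line citation of Propositions \ref{Prop2} and \ref{Prop3}; your explicit adaptation of the convexification — replacing the $(\mathbf{a}^L,\mathbf{b}^N)$ patch with $(\mathbf{a}^H,\mathbf{b}^T)$ because the direction of the payoff constraint is reversed in the minimization problem — is exactly the detail that citation leaves implicit, and your $O(\varepsilon')$ bound on the patch mass is right.
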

The proof is in Appendix \ref{subB.2}. Solving for the values of $\overline{v}(c)$ and $\underline{v}(c)$, and using the fact that
$\rho^* \equiv \frac{p_h}{1-p_h}$,
we obtain:
\begin{equation}\label{5.6}
    \overline{v}(c) = p_h-\rho^* (c-1) (1-p_h)=p_h(2-c),
\end{equation}
and
\begin{equation}\label{5.7}
    \underline{v}(c) = \frac{(c_1-c) \rho^* p_h (1-p_h)}{p_h +\rho^* c_1(1-p_h)}=\frac{p_h(c_1-c)}{1+c_1}.
\end{equation}
The necessary and sufficient condition in Theorem \ref{Theorem3} translates into:
\begin{equation}\label{5.8}
  c_1 (c-1) \leq 2.
\end{equation}
Let $c^* \equiv \min \Big\{
\mathcal{C} \cap [1,+\infty)
\Big\}$.
Theorem \ref{Theorem3} and the above derivations lead to the following corollary:
\begin{Corollary}
Suppose $\pi$ has full support and neither $\mathcal{C} \cap [0,1)$
nor
$\mathcal{C} \cap [1,+\infty)$ is empty,
\begin{enumerate}
  \item if $c_1(c^*-1) \leq 2$, then
  for every $\varepsilon>0$, there exists $\underline{\delta} \in (0,1)$, such that when $\delta >\underline{\delta}$,
  there exists a sequential equilibrium in which every type $c_j \in \mathcal{C} \cap [0,1)$ obtains payoff at least $v_j^{**}-\varepsilon$.
  \item if $c_1 (c^*-1)>2$, then there exist $\eta>0$ and $\underline{\delta} \in (0,1)$, such that in every BNE when $\delta>\underline{\delta}$, every type $c_j \in \mathcal{C} \cap [0,1)$ obtains payoff  no more than $v_j^{**}-\eta$.
\end{enumerate}
\end{Corollary}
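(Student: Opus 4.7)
The plan is to derive both statements as direct specializations of Theorem~\ref{Theorem3}, using the closed-form expressions (\ref{5.6}) and (\ref{5.7}) for $\overline{v}(c)$ and $\underline{v}(c)$. The core observation is that the gap $\overline{v}(c)-\underline{v}(c)$ is \emph{strictly decreasing} in $c$, so the question of whether \emph{any} ethical type satisfies Theorem~\ref{Theorem3}'s condition reduces to testing it at the \emph{smallest} ethical cost $c^*$. The inequality $c_1(c^*-1)\leq 2$ is simply the algebraic translation of that test.

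First, I would perform a short algebraic verification. Substituting (\ref{5.6}) and (\ref{5.7}), the weak inequality $\overline{v}(c)\geq\underline{v}(c)$ becomes
\[
p_h(2-c)\;\geq\;\frac{p_h(c_1-c)}{1+c_1}.
\]
Clearing the positive denominator $1+c_1$ and cancelling the common term $-c(1+c_1)$ with $c_1-c$, this collapses to $2+c_1-c_1 c\geq 0$, i.e.\ $c_1(c-1)\leq 2$; the strict version $\overline{v}(c)>\underline{v}(c)$ becomes $c_1(c-1)<2$.

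Second, I would check the monotonicity of the gap. Differentiating the explicit formulas gives $\tfrac{d}{dc}[\overline{v}(c)-\underline{v}(c)]=-p_h+\tfrac{p_h}{1+c_1}=-\tfrac{p_h c_1}{1+c_1}<0$, where positivity of $c_1$ follows from $c_1\geq c^*\geq 1$. Consequently, if $\overline{v}(c^*)<\underline{v}(c^*)$, then $\overline{v}(c)<\underline{v}(c)$ for every $c\in\mathcal{C}\cap[1,+\infty)\subseteq[c^*,+\infty)$; and if $\overline{v}(c^*)\geq\underline{v}(c^*)$, then $c=c^*$ already witnesses Theorem~\ref{Theorem3}'s condition.

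Putting the pieces together: under $c_1(c^*-1)\leq 2$, the first step gives $\overline{v}(c^*)\geq\underline{v}(c^*)$, so Theorem~\ref{Theorem3}(2) applies and yields the desired sequential equilibrium, proving statement~1. Under $c_1(c^*-1)>2$, the first step gives $\overline{v}(c^*)<\underline{v}(c^*)$, and the monotonicity step propagates this strict inequality to every $c\in\mathcal{C}\cap[1,+\infty)$; hence no ethical type satisfies $\overline{v}(c)>\underline{v}(c)$, Theorem~\ref{Theorem3}(1) applies, and the uniform deficit $\eta>0$ emerges. The proof is bookkeeping on top of Theorem~\ref{Theorem3}; no step is delicate, the only substantive content beyond the theorem being the monotonicity observation that reduces the existence check to $c^*$.
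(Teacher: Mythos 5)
Your proposal is correct and follows essentially the same route as the paper: the paper also derives the corollary by substituting the closed forms $\overline{v}(c)=p_h(2-c)$ and $\underline{v}(c)=\frac{p_h(c_1-c)}{1+c_1}$ into the condition of Theorem \ref{Theorem3}, obtaining the translation $c_1(c-1)\leq 2$, with the reduction to $c^*$ implicit in the fact that $c_1(c-1)$ is increasing in $c$. Your explicit monotonicity check (and your care in matching the weak inequality to statement 2 and the strict one to statement 1) is exactly the bookkeeping the paper leaves to the reader.
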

Corollary 2 suggests that the attainability of non-ethical types' optimal commitment payoffs depends only on
the \textit{highest} and the \textit{lowest lying cost among the ethical types}.
Aside from the knife-edge case in which $1 \in \mathcal{C}$,
the optimal commitment payoffs are attainable if and only if the highest cost of lying $c_1$ is below some cutoff.\footnote{If $1 \in \mathcal{C}$, namely, there exists a type whose lying cost is exactly $1$, then all types of non-ethical senders can approximately attain their optimal commitment payoffs, irrespective of what the highest lying cost is.} For example, if there is only one ethical type in the support of receivers' prior belief, then the optimal commitment payoffs are attainable if and only if $ c=c_1 \in [1,2]$.\footnote{When a type of sender's lying cost is $2$, his payoff from the non-ethical sender's optimal disclosure policy is $0$ upon receiving the receiver's trust.}

Different from models without ethical types, non-ethical type sender's highest attainable payoff
strictly \textit{decreases} with the highest lying cost when the latter is large enough.
In a model with at least one ethical type (call it the \textit{original ethical type}),
introducing an additional ethical type who has a high cost of lying can \textit{reduce} non-ethical types' highest equilibrium payoffs.
This contrasts to models \textit{without} ethical types in which every type's highest equilibrium payoff is non-decreasing after introducing an additional type, no matter whether this additional type is ethical or not.

Intuitively, this is driven by a novel \textit{outside option effect}, which is absent in models without ethical types or in models with commitment types. In particular, once receivers entertain the possibility that the sender can be a type who incurs a high lying cost, the original ethical type enjoys weakly better outside options since he can imitate the equilibrium strategy of the highest-cost type, from which the latter receives at least his minmax payoff. This improvement in his outside option restricts the frequency with which the original ethical type can lie on the equilibrium path, which in turn reduces the non-ethical types' long-term payoff gain from reputation building and milking, since the number of times with which he can lie and pool with some ethical type decreases.

\section{Conclusion}\label{sec6}
This paper studies repeated communication games in which the sender has persistent private information about his psychological cost of lying.
I characterize every type of patient sender's highest equilibrium payoff as the solution of a constrained optimization problem, and provide a necessary and sufficient condition for every type of non-ethical sender to attain his optimal commitment payoff. My results also clarify the distinction between a strategic-type sender who faces high lying cost, and a commitment-type sender
who mechanically adopts his Bayesian disclosure policy
in every period.
In addition, the possibility of being ethical and having a high-lying cost can hurt non-ethical senders due to a novel outside option effect.

\appendix
\section{Proof of Proposition 4}\label{secA}
For $\delta$ close enough to $1$, I construct an equilibrium such that at every on-path history, the distribution over
players' stage-game action profiles are supported on $\big\{(\mathbf{a}^H,\mathbf{b}^T),(\mathbf{a}^L,\mathbf{b}^T),(\mathbf{a}^L,\mathbf{b}^N)\big\}$, and the sender's continuation payoff is a convex combination of $v^H$, $v^N$ and $v^L$, belonging to a polytope $V^*$ with the following four vertices:
 $v^H$, $v^*$, $\overline{v} \equiv q^* (\rho^* v^L+(1-\rho^*) v^H) +(1-q^*) v^N$ with $q^* \in [0,1]$ pinned down by the condition that the first entry of the above vector equals $0$,
 and
 $\underline{v} \equiv p^* v^H +(1-p^*) v^N$ with
 \begin{equation}
 p^* \equiv  \frac{c_1(1-p_h)}{p_h+c_1(1-p_h)}.
 \end{equation}
Since $c_1>c_j$ for every $j \neq 1$, we know that $v^* > \overline{v} \vee v^H $, and $\overline{v} \wedge v^H \geq \underline{v}$, and the second inequality
 is strict for all entries of this $m$-dimensional vector except for the first entry.

In what follows,
I verify that the sender's continuation payoff vector at \textit{every history} belongs to $V^*$ after describing players' equilibrium strategies and the evolution of their continuation values.
\subsection{State Variables \& Useful Constants}
The constructed equilibrium keeps track of the following three sets of state variables:
\begin{itemize}
  \item[1.] The probability of type $c_1$ sender in the receiver's posterior belief, denoted by $\eta(h^t)$. I call this the sender's \textit{reputation} at $h^t$.
  \item[2.] The promised continuation payoff to the sender, denoted by $v(h^t) \in \mathbb{R}^n$. Given that it is a convex combination of $v^H$, $v^N$ and $v^L$, it is equivalent to keep track of the convex weights of $v^H$, $v^N$ and $v^L$ in $v(h^t)$, which I denote by $p^H(h^t)$, $p^N(h^t)$, and $p^L(h^t)$, respectively.
  \item[3.] The lowest-cost type in the support of receivers' posterior belief, denoted by $\underline{c}(h^t)$, as well as
  its probability according to the receiver's belief at $h^t$.
\end{itemize}
The initial values of these state variables are $\eta(h^0)=\pi_1$, $p^H(h^0)=\frac{(1-\rho) c_1}{\rho (1-c_1) +c_1}$, $p^L(h^0)=\frac{\rho c_1}{\rho (1-c_1) +c_1}$, $p^N(h^0) = \frac{\rho (1-c_1)}{\rho (1-c_1) +c_1}$, $\underline{c}(h^0)=c_n$, and the probability of type $\underline{c}(h^0)$ is $\pi_n$.

I construct some constants for future reference. First, let
\begin{equation*}
    \widehat{\delta} \equiv \delta \frac{1-p_h}{1-\delta p_h},
\end{equation*}
which is called the sender's \textit{effective discount factor}
and is strictly smaller than $\delta$.
One can verify that for any $p_h \in (0,1)$, the effective discount factor converges to $1$ as $\delta \rightarrow 1$.
I will replace $\delta$ with $\widehat{\delta}$ when specifying the evolution of the sender's continuation payoff vector.

Recall that $\pi \equiv (\pi_1,...,\pi_n)$ is the receiver's prior belief about the sender's lying cost.
For every $j \geq 3$, let
$k_j$ be the smallest integer $k \in \mathbb{N}$ such that:
\begin{equation}\label{A.16}
   \Big( 1-(1-\rho^*) \pi_1 \Big) \frac{(\pi_j/k)}{\sum_{\tau=2}^{j-1} \pi_{\tau} +(\pi_j/k)}
    \leq \rho^* \equiv \frac{p_h}{1-p_h}.
\end{equation}
Let $\eta^* \in [(1-\rho^*) \pi_1,\pi_1)$ be large enough such that for every $\eta \in [\eta^*,\pi_1]$, we have:
\begin{equation}\label{A.17}
    \frac{\pi_1-\eta}{\pi_1 (1-\eta)}
    \leq \min_{j \in \{3,...,n\}}
    \Big\{
    \frac{\pi_j/k_j }{\pi_2+...+\pi_j}
    \Big\}
\end{equation}
The ensuing construction implies that $\eta^*$ is a uniform lower bound on the probability of type $c_1$
that applies to all
histories where active learning takes place.

For every fixed
$\rho \in (0,\rho^*)$, there exists a rational number $\widehat{n}/\widehat{l} \in (\rho,\rho^*)$ with $\widehat{n},\widehat{l} \in \mathbb{N}$. Moreover, there exists an integer $j \in \mathbb{N}$ such that
\begin{equation*}
  \frac{\widehat{n}}{\widehat{l}} =\frac{\widehat{n}j}{\widehat{l}j} >  \frac{\widehat{n}j}{\widehat{l}j+1} > \rho.
\end{equation*}
Let $n \equiv \widehat{n}j$ and $l \equiv \widehat{l}j$. Let
\begin{equation}\label{A.2}
    \widetilde{\rho} \equiv
       \frac{1}{2} \Big(
        \frac{n}{l}+\frac{n}{l+1}
        \Big),
\end{equation}
and
\begin{equation}\label{A.3}
    \widehat{\rho} \equiv       \frac{1}{2} \Big(
        \frac{n}{l}+\rho^*
        \Big).
\end{equation}
Let $\underline{\delta} \in (0,1)$ to be large enough such that for every $\delta >\underline{\delta}$,
\begin{equation}\label{A.4}
        \frac{\delta+\delta^2+...+\delta^n}{\delta+\delta^2+...+\delta^l} >
        \widetilde{\rho}
         >\frac{\delta^{l-n+1}(\delta+\delta^2+...+\delta^n)}{\delta+\delta^2+...+\delta^{l+1}}.
\end{equation}
I require $\delta$ to be large enough such that its corresponding $\widehat{\delta}>\underline{\delta}$. In the subsequent proof, I will introduce additional requirements on $\delta$, which are compatible given that all of them require $\delta$ to be large enough (but strictly smaller than $1$).
By construction, $\rho^* >\widehat{\rho}>\frac{n}{l}> \widetilde{\rho} >\frac{n}{l-1}>\rho$.
Let $\lambda>0$ to be small enough such that:
\begin{equation}\label{A.4}
 (1+\lambda \rho^*)^{1-\widehat{\rho}}    (1-\lambda (1-\rho^*))^{\widehat{\rho}} > 1.
\end{equation}
The existence of such $\lambda>0$ is implied by the Taylor's expansion given that $\rho^*>\widehat{\rho}$.
This will later be related to the speed of receiver-learning.

\subsection{Equilibrium Strategies \& Continuation Values}
Play consists of three phases: an active learning phase, an absorbing phase, and a rebounding phase.
I partition the set of active learning phase histories into two subsets: Class 1 histories and Class 2 histories, depending on the magnitude of $p^L(h^t)$.

\paragraph{Class 1 Active Learning  History:}
Play starts from a Class 1 history of the \textit{active learning phase}. A history belongs to this class if and only if:
\begin{enumerate}
  \item $p^L(h^t) \geq 1-\widehat{\delta}$,
  \item The first entry of the following $m$-dimensional vector is non-negative:
\begin{equation}\label{A.5}
\frac{p^L(h^t)-(1-\widehat{\delta})}{\widehat{\delta}} v^L
+\frac{p^H(h^t)}{\widehat{\delta}} v^H
+\frac{p^N(h^t)}{\widehat{\delta}} v^N.
\end{equation}
\end{enumerate}
In this phase, the receiver plays $\mathbf{b}^T$ (i.e., trust the sender's recommendation). The sender of types $c_2$ to $c_n$ play the same mixed action, and type $c_1$ sender plays differently. All types send message $h$ with probability $1$ if $\omega_t=h$, i.e., $\eta(h^t,(h,h))=\eta(h^t)$.
The mixing probabilities of type $c_1$ as well as other types when the state is $l$ is pinned down by the following belief-updating formulas:
\begin{equation}\label{A.2}
    \eta(h^t,(l,h))-\eta^*=
    (1-\lambda (1-\rho^*))
    (\eta(h^t)-\eta^*),
\end{equation}
and
\begin{equation}\label{A.3}
        \eta(h^t,(l,l))-\eta^*=\min\Big\{1-\eta^*,
    (1+\lambda \rho^*)
    (\eta(h^t)-\eta^*)\Big\},
\end{equation}
with
\begin{itemize}
  \item $\eta^*$ is the constant defined in (\ref{A.17}), which is the lower bound on the receiver's belief in the active learning phase,
  \item $\lambda>0$ is a constant that measures the speed of learning and satisfies (\ref{A.4}),
  \item $\eta(h^t,(h,h))$ is the receiver's belief in period $t+1$ when $\omega_t=m_t=h$,
  \item $\eta(h^t,(l,h))$ is the receiver's belief in period $t+1$ when $\omega_t=l$ and $m_t=h$,
  \item $\eta(h^t,(l,l))$ is the receiver's belief in period $t+1$ when $\omega_t=m_t=l$.
\end{itemize}
Given that on the equilibrium path, the sender does not send message $l$ when $\omega_t=h$, $\eta(h^t,(h,l))$ can be arbitrary. I do not specify the receiver's belief since the sender's continuation value at $\eta(h^t,(h,l))$ equals $\underline{v}$, which can be implemented under \textit{any belief} of the receiver's.

According to  (\ref{A.2}) and (\ref{A.3}) as well as the martingale property of the receiver's beliefs, the sender plays $\mathbf{a}^H$ with probability at least $1-\rho^*$ in every period, which implies that conditional on observing message $h$, the receiver's posterior belief attaches probability at least $1/2$ to $\omega_t=h$, and hence,
has an incentive to play $\mathbf{b}^T$.

Next, I construct the evolution of the sender's continuation value.
If $h^t$ is such that $\eta(h^t,(l,l))<1$, then
the sender's continuation payoff is given by:
\begin{equation}\label{A.6}
    v(h^t,(h,h))=v(h^t),
\end{equation}
\begin{equation}\label{A.7}
    v(h^t,(l,h))= \frac{p^H(h^t)}{\widehat{\delta}} v^H
    +\frac{p^L(h^t)-(1-\widehat{\delta})}{\widehat{\delta}} v^L
    +\frac{p^N(h^t)}{\widehat{\delta}} v^N,
\end{equation}
\begin{equation}\label{A.8}
    v(h^t,(l,l))=\frac{p^H(h^t)-(1-\widehat{\delta})}{\widehat{\delta}} v^H
    +\frac{p^L(h^t)}{\widehat{\delta}} v^L
    +\frac{p^N(h^t)}{\widehat{\delta}} v^N,
\end{equation}
and $v(h^t,(h,l))=\underline{v}$.
Under the continuation values specified in (\ref{A.6}), (\ref{A.7}) and (\ref{A.8}), each type of sender is indifferent between sending message $h$ and sending message $l$ in period $t$ when $\omega_t=l$, and strictly prefers to send message $h$ when $\omega_t=h$.

If $h^t$ is such that $\eta(h^t,(l,l))=1$, then
the sender's continuation payoff is $v(h^t,(h,h))=v(h^t)$, $v(h^t,(h,l))=\underline{v}$,
\begin{equation*}
    v(h^t,(l,h))= \frac{p^H(h^t)}{\widehat{\delta}} v^H
    +\frac{p^L(h^t)-(1-\widehat{\delta})}{\widehat{\delta}} v^L
    +\frac{p^N(h^t)}{\widehat{\delta}} v^N,
\end{equation*}
and
\begin{equation}\label{A.10}
    v(h^t,(l,l))= q(h^t) v^H +(1-q(h^t)) v^N,
\end{equation}
where $q(h^t) \in [0,1]$ is such that the first entry of $v(h^t,(l,l))$ equals the first entry of vector:
\begin{equation*}
    \frac{p^H(h^t)-(1-\widehat{\delta})}{\widehat{\delta}} v^H
    +\frac{p^L(h^t)}{\widehat{\delta}} v^L
    +\frac{p^N(h^t)}{\widehat{\delta}} v^N.
\end{equation*}
Under these continuation values, type $c_1$ sender is indifferent between sending message $h$ and $l$ when $\omega_t=l$, and strictly prefers message $h$ when $\omega_t=h$. Types $c_2$ to $c_n$ strictly prefer to send message $h$ in both states.

\paragraph{Class 2 Active Learning History:} Play reaches a Class 2 history  if:
\begin{enumerate}
  \item $p^L(h^t) \in (0, 1-\widehat{\delta})$,
  \item The first entry of (\ref{A.5}) is non-negative.
\end{enumerate}
At those histories, the receiver plays $\mathbf{b}^T$.
All types (in the support of receiver's belief) except for type $\underline{c}(h^t)$ play $\mathbf{a}^H$ with probability $1$ while type $\underline{c}(h^t)$ (potentially) mixes between $\mathbf{a}^L$ and $\mathbf{a}^H$. To specify his mixing probabilities,
let
\begin{equation}\label{A.21}
    l(h^t) \equiv \# \Big\{
    h^s \Big| h^s\prec h^t, h^s \textrm{ belongs to Class 2}, \omega_{s}=l, \textrm{ and } \underline{c}(h^s)=\underline{c}(h^t)
    \Big\}
\end{equation}
be the number of histories that (1) strictly precede $h^t$, and (2) the lowest lying cost type in the support of the receiver's belief is
$\underline{c}(h^t)$, and (3) the state realized at history $h^s$ is $l$.
\begin{enumerate}
  \item If $\underline{c}(h^t)=c_j$ with $j \geq 3$, then
type $\underline{c}(h^t)$ plays $\mathbf{a}^L$ at $h^t$ with probability
\begin{equation}\label{A.22}
    \frac{1}{k_j-l(h^t)},
\end{equation}
and $\mathbf{a}^H$ with complementary probability, with
$k_j$ the integer defined in (\ref{A.5}).
  \item If $\underline{c}(h^t)=c_2$, then
type $c_2$ sender plays $\mathbf{a}^L$ at $h^t$ with probability
\begin{equation}\label{A.23}
    \min\{1, \frac{\rho^*}{1-\eta(h^t)}\},
\end{equation}
and $\mathbf{a}^H$ with complementary probability.
\end{enumerate}
The sender's continuation value is given by $v(h^t,(h,h))=v(h^t)$, $v(h^t,(h,l))=p^* v^H +(1-p^*) v^L$,
\begin{equation}\label{A.24}
v(h^t,(l,h)) \equiv   \frac{Q(h^t)}{\widehat{\delta}} v^H + \frac{\widehat{\delta}-Q(h^t)}{\widehat{\delta}} v^N,
  \end{equation}
  where
  \begin{equation}\label{A.25}
    Q(h^t) \equiv p^H(h^t) - (1-\widehat{\delta}) + \frac{p^L(h^t)}{p_h+(1-p_h)\underline{c}(h^t)}.
  \end{equation}
Therefore, type $\underline{c}(h^t)$ sender weakly prefers to send message $h$ when the realized state is $l$, while other types of senders in the support of receiver's belief at $h^t$ strictly prefers to send message $l$ when the realized state is $l$.
The sender's continuation value after sending message $l$ when $\omega_t=l$
   depends on whether $\eta(h^t,(l,l))$ equals $1$ or not, with
$\eta(h^t,(l,l))$ computed via Bayes Rule given the receiver's belief at $h^t$ and type $\underline{c}(h^t)$'s mixing probability at $h^t$:
\begin{enumerate}
  \item If $\eta(h^t,(l,l))<1$, then the sender's continuation payoff at $(h^t,(l,l))$ is:
  \begin{equation*}
        v(h^t,(l,l))= \frac{p^H(h^t)-(1-\widehat{\delta})}{\widehat{\delta}} v^H
    +\frac{p^L(h^t)}{\widehat{\delta}} v^L
    +\frac{p^N(h^t)}{\widehat{\delta}} v^N.
  \end{equation*}
  \item If $\eta(h^t,(l,l))=1$, then the sender's continuation payoff at $(h^t,(l,l))$ is:
  \begin{equation*}
    v(h^t,(l,l))= q(h^t) v^H +(1-q(h^t)) v^N,
\end{equation*}
where $q(h^t) \in [0,1]$ is such that the first entry of $v(h^t,(l,l))$ equals the first entry of payoff vector:
\begin{equation*}
    \frac{p^H(h^t)-(1-\widehat{\delta})}{\widehat{\delta}} v^H
    +\frac{p^L(h^t)}{\widehat{\delta}} v^L
    +\frac{p^N(h^t)}{\widehat{\delta}} v^N.
\end{equation*}
\end{enumerate}
Notice that  $\eta(h^t,(l,l))=1$ can only happen when $\underline{c}(h^t)=c_2$. This is because if $\underline{c}(h^t)<c_2$, then type $c_2$ sends message $l$ with probability $1$ when the state is $l$ at history $h^t$, which implies that the sender's reputation $\eta(h^t,(l,l))$ cannot equal $1$.

\paragraph{Rebounding Phase:} Play reaches the \textit{rebounding phase} if
\begin{enumerate}
  \item $p^L(h^t) \neq 0$,
  \item The first entry of (\ref{A.5}) is negative.
\end{enumerate}
At those histories, the receiver plays $\mathbf{b}^N$, all types of sender plays $\mathbf{a}^L$ (i.e., sending message $h$ regardless of the state), and therefore, the sender's message reveals no information about his type and the receiver's incentive constraints are satisfied.
Given the sender's continuation value at $h^t$, denoted by $v(h^t)$,
his continuation value following $(\omega_t,m_t)=(h,h)$ equals $v(h^t)$,
his continuation value after $(\omega_t,m_t)=(h,l)$ equals $v(h^t)$,
his continuation value after $(\omega_t,m_t)=(l,l)$ equals $\underline{v}$, and
his continuation value after $(\omega_t,m_t)=(l,h)$ equals
\begin{equation}\label{A.26}
v(h^t,(l,h)) \equiv \frac{p^H(h^t)}{\widehat{\delta}} v^H+\frac{p^L(h^t)}{\widehat{\delta}} v^L +\frac{p^N(h^t)-(1-\widehat{\delta})}{\widehat{\delta}} v^N.
\end{equation}
Under these continuation values, every type of sender has an incentive to send message $h$ when the state is $h$. According to (\ref{A.26}),  we have:
\begin{equation*}
    v(h^t)=(1-p_h) \Big\{
    (1-\delta) \mathbf{0} + \delta v(h^t)
    \Big\}
    +p_h \Big\{
   - (1-\delta) \mathbf{c} +\delta v(h^t,(l,l))
    \Big\},
\end{equation*}
with $\mathbf{0} \equiv (0,...,0)$ and $\mathbf{c} \equiv (c_1,...,c_n)$. Since $v(h^t) \geq 0$,
we have $v(h^t) > (1-\delta) \mathbf{0} + \delta v(h^t)$, and therefore,
\begin{equation*}
    - (1-\delta) \mathbf{c} +\delta v(h^t,(l,h)) > v(h^t) \geq \underline{v}.
\end{equation*}
This implies that the sender has an incentive to send message $h$ when the state is $l$.
\paragraph{Absorbing Phase:} Play reaches the absorbing phase
if $p^L(h^t)=0$, i.e.,
the sender's continuation value is a convex combination of $v^N$ and $v^H$, after which play stays in the absorbing phase forever and learning about the sender's type stops. By construction of the previous phases, type $c_1$'s continuation value when play first reaches the absorbing phase is non-negative, i.e., it is weakly greater than his minmax payoff $0$.

I construct, for any $p^H \geq p^*$ and under any belief of the receiver's, an equilibrium in which (1) learning does not happen on the equilibrium path, and (2) the sender attains payoff vector $v=p^Hv^H+p^N v^N$.
Recall the definition of $p^*$ in the beginning of this appendix.
If $h^t$ is such that
\begin{equation*}
    \frac{p^H(h^t)-(1-\delta)}{\delta} \geq \frac{p^*+1}{2},
\end{equation*}
then all types of senders adopt the honest strategy $\mathbf{a}^H$, and players' continuation payoffs are given by the following functions of the current period outcome:
\begin{equation}\label{A.11}
    v(h^t,(h,h))=v(h^t), \textrm{ }  v(h^t,(l,l))= \frac{p^H(h^t)-(1-\delta)}{\delta} v^H
    +\frac{p^N(h^t)}{\delta} v^N,
\end{equation}
and
\begin{equation}\label{A.12}
v(h^t,(l,h))=v(h^t,(h,l))=\underline{v}.
\end{equation}
I use $\delta$ instead of $\widehat{\delta}$ in (\ref{A.11}) since all types of senders play a pure action at such a history, so his equilibrium action can be perfectly monitored.
When $\delta$ is large enough, every type of sender strictly prefers to conform at those histories.
If $h^t$ is such that:
\begin{equation*}
   \frac{p^H(h^t)-(1-\delta)}{\delta} <\frac{p^*+1}{2},
\end{equation*}
then all types of senders play $\mathbf{a}^L$, and the sender's continuation payoff after sending message $h$ is:
\begin{equation}\label{A.14}
\frac{p^H(h^t)}{\delta} v^H
    +\frac{p^N(h^t)-(1-\delta)}{\delta} v^N,
\end{equation}
and after sending message $l$ is $p^* v^H +(1-p^*) v^L$.
One can verify that at each of these histories,
type $c_1$ sender weakly prefers to conform and types $c_2$ to $c_m$ strictly prefer to conform. Moreover, this continuation equilibrium is incentive compatible for players regardless of the receiver's belief about the sender's lying costs.

\subsection{Incentive Constraints \& Promise Keeping Constraints}
I verify players' incentive constraints and the sender's promise keeping conditions. The latter requires that the continuation play delivers all types of senders their respective continuation values. Due to the nature of incomplete information, I use a different approach compared to  Abreu, Pearce and Stacchetti (1990) by showing that
as $t \rightarrow \infty$, players' on-path play enters the absorbing phase with probability $1$. Given that in the previous subsection,
I have constructed equilibrium strategies \textit{after} play reaches the absorbing phase that are (1)
incentive compatible for all types of senders and the receivers, and (2) can deliver the promised continuation values to all types of senders,
the eventual convergence to the absorbing phase implies the promise keeping condition.

I state several results which implies that first, players' incentive constraints are satisfied at every history, and second, the promised continuation value at each history \textit{can be delivered} in the continuation equilibrium. A \textit{outcome path} in period $t$ consists of the states and the sender's messages from period $0$ to period $t-1$, with:
\begin{equation*}
    o (h^t) \equiv \Big((\omega_0,m_0),...,(\omega_{t-1},m_{t-1})\Big).
\end{equation*}
For every on-path history $h^t$, $o(h^t)$ consists only of $(h,h)$, $(l,l)$, and $(l,h)$. A \textit{reduced outcome path}, denoted by $r(h^t)$, is derived from $o(h^t)$ by ignoring periods in which the outcome is $(h,h)$. As a result, every component of $r(h^t)$
can be summarized by the sender's message alone. Abusing notation, $r(h^t) \in \bigcup_{k=0}^{\infty}\{h,l\}^k $, with the $n$th element of $r(h^t)$ be the sender's message when state $l$ occurs for the $n$th time. For every $h^t \succ h^s$, let $r(h^t\backslash h^s)$ be the reduced outcome path between $h^s$ and $h^t$.
According to the specifications of posterior beliefs and continuation values in the previous section, for every pair of histories $h^s$ and $h^t$ with $h^t \succ h^s$, suppose all histories between $h^s$ and $h^t$ belong to the active learning phase, then $\eta(h^t)$ only depends on $\eta(h^s)$ and the sequence $r(h^t\backslash h^s)$.
\begin{itemize}
  \item Let
$N_l(r(h^t\backslash h^s))$ be the number of message $l$ in
$r(h^t\backslash h^s)$.
  \item Let $N_h(r(h^t\backslash h^s))$ be the number of message $h$ in
$r(h^t\backslash h^s)$.
  \item Let $|r(h^t\backslash h^s)|$ be the number of elements in $r(h^t\backslash h^s)$.
  \item Let
  \begin{equation*}
    r(h^t\backslash h^s) \equiv \Big(m_0,m_1,...,m_{|r(h^t\backslash h^s)|-1}\Big).
  \end{equation*}
\end{itemize}
For every $h^t \succ h^s$ with all histories from $h^s$ to $h^t$ belonging to Class 1,
the receiver's posterior belief attaches the following probability to type $c_1$ sender
\begin{equation}\label{A.27}
 \eta(h^t)=   \eta^*+ (\eta(h^s)-\eta^*) \Big(1-\lambda(1-\rho^*) \Big)^{N_h(r(h^t\backslash h^s))}
    \Big( 1+\lambda \rho^* \Big)^{N_l(r(h^t\backslash h^s))},
\end{equation}
and the convex weights on the sender's continuation value are given by:
\begin{equation*}
    p^H(h^t)= \frac{\displaystyle p^H(h^s)-(1-\widehat{\delta})\sum_{\tau=1}^{|r(h^t\backslash h^s)|} \widehat{\delta}^{\tau}\mathbf{1}\{m_{\tau}=l\}}{\displaystyle \widehat{\delta}^{|r(h^t\backslash h^s)|}},
\end{equation*}
\begin{equation*}
    p^L(h^t)= \frac{\displaystyle p^L(h^s)-(1-\widehat{\delta})\sum_{\tau=1}^{|r(h^t\backslash h^s)|} \widehat{\delta}^{\tau}\mathbf{1}\{m_{\tau}=h\}}{\displaystyle \widehat{\delta}^{|r(h^t\backslash h^s)|}},
\end{equation*}
\begin{equation*}
    p^N(h^t)= \frac{\displaystyle p^N(h^s)}{\displaystyle \widehat{\delta}^{|r(h^t\backslash h^s)|}}.
\end{equation*}
The following lemma shows that conditional play remains in Class 1 histories, the discounted average frequency of message $l$ divided by the discounted average frequency of message $h$ in the reduced outcome  (i.e., counting only periods where the state realization is $l$) must be below some cutoff. This provides an upper bound on the sender's continuation value, which is directly implied by Lemma A.1 in Pei (2019):
\begin{Lemma1}
For every $\underline{\eta} \in (\eta^*,1)$,
there exist $T \in \mathbb{N}$ and $\underline{\delta} \in (0,1)$,
s.t. when $\eta (h^s) \geq \underline{\eta}$ and $\widehat{\delta}>\underline{\delta}$,
if $h^t \equiv (y_0,...,y_{t-1}) \succ h^s$ and all histories between $h^s$ and $h^t$ belong to Class 1, then:
\begin{equation}\label{A.28}
    (1-\widehat{\delta}) \sum_{\tau=1}^{|r(h^t\backslash h^s)|} \widehat{\delta}^{\tau-1} \mathbf{1}\{m_{\tau}=l\}
    \leq
    (1-\widehat{\delta}^T)
    +
    (1-\widehat{\delta})\sum_{\tau=1}^{|r(h^t\backslash h^s)|} \widehat{\delta}^{\tau-1} \mathbf{1}\{m_{\tau}=h\}
    \cdot
    \frac{1-\widetilde{\rho}}{\widetilde{\rho}}.
\end{equation}
\end{Lemma1}
The proof is in Pei (2019). I use this result and my construction in the previous section to establish the following four lemmas, which imply incentive compatibility and promise keeping, and are shown in the follow-up subsections.
Lemma \ref{L1} establishes a lower bound on the receiver's posterior belief after observing message $l$ in state $l$ at any history that belongs to Class 2.
\begin{Lemma}\label{L1}
For any history $h^t$ belonging to Class 2,
\begin{itemize}
  \item If $\underline{c}(h^t) \leq c_3$, then $\eta(h^t,(l,l)) \geq \eta(h^0)$  and $\eta(h^t,(l,h)) =0$.
  \item If $\underline{c}(h^t)=c_2$, then $\eta(h^t,(l,l)) = \min \{1,\frac{\eta(h^t)}{1-\rho^*}\}$ and $\eta(h^t,(l,h)) =0$.
\end{itemize}
\end{Lemma}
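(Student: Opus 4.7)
The plan is to derive both statements by direct Bayesian updating from the Class 2 equilibrium strategies constructed earlier in this appendix, and then use the specific parameters $k_j$ from (\ref{A.16}) and $\eta^*$ from (\ref{A.17}) to verify the quantitative bounds. The claim $\eta(h^t,(l,h))=0$ is immediate in both cases: only $\underline{c}(h^t)$ randomizes on $\mathbf{a}^L$ at a Class 2 history, so observing state $l$ together with message $h$ identifies the sender as type $\underline{c}(h^t)$, and since $\underline{c}(h^t)\in\{c_2,\ldots,c_n\}$ is distinct from $c_1$, Bayes' rule assigns zero posterior mass to type $c_1$.

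For the case $\underline{c}(h^t)=c_2$, the support of the posterior at $h^t$ is contained in $\{c_1,c_2\}$, so Bayes' rule yields
\[
\eta(h^t,(l,l)) \;=\; \frac{\eta(h^t)}{1-(1-\eta(h^t))p_L}, \qquad p_L \;=\; \min\Big\{1,\,\frac{\rho^*}{1-\eta(h^t)}\Big\}.
\]
A two-case split finishes the argument: if $1-\eta(h^t)\leq\rho^*$, then $p_L=1$ and the ratio collapses to $1$; if $1-\eta(h^t)>\rho^*$, then $(1-\eta(h^t))p_L=\rho^*$ and the ratio becomes $\eta(h^t)/(1-\rho^*)$. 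Both possibilities coincide with $\min\{1,\eta(h^t)/(1-\rho^*)\}$.

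The substantive work is the case $\underline{c}(h^t)=c_j$ with $j\geq 3$. Here Bayes' rule gives $\eta(h^t,(l,l))=\eta(h^t)/(1-\beta_j(h^t)p_L(h^t))$, where $\beta_j(h^t)$ denotes the posterior probability of type $c_j$ and $p_L(h^t)=1/(k_j-l(h^t))$. The desired bound $\eta(h^t,(l,l))\geq\pi_1$ rearranges to the invariant $\eta(h^t)\geq\pi_1\bigl(1-\beta_j(h^t)p_L(h^t)\bigr)$, which I would establish by induction on $l(h^t)$. At the base case $l(h^t)=0$, Class 1 updates preserve cross-ratios among $\{c_2,\ldots,c_j\}$, so $\beta_j(h^t)/(1-\eta(h^t))=\pi_j/(\pi_2+\cdots+\pi_j)$ and $p_L(h^t)=1/k_j$; after substitution the inequality reduces exactly to (\ref{A.17}) evaluated at $\eta(h^t)\in[\eta^*,\pi_1]$, with $k_j$ chosen in (\ref{A.16}) precisely so that the right-hand side of (\ref{A.17}) matches up. For the inductive step, a Class 2 $(l,l)$-update scales $\beta_j$ by $(1-p_L)/(1-\beta_j p_L)$, scales $\eta$ by $1/(1-\beta_j p_L)$, and increments $l$ by one, producing a new $p_L=1/(k_j-l-1)$, while intervening Class 1 periods preserve $\beta_j/(1-\eta)$ and keep $\eta$ within $[\eta^*,1]$. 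The main obstacle is verifying that these coupled rescalings propagate the invariant $\eta\geq\pi_1(1-\beta_j p_L)$ from one $(l,l)$-update to the next; I expect the algebra, after cross-multiplication and cancellation of the common $1-\beta_j p_L$ factors, to reduce to a telescoping form of the same structural inequality encoded by (\ref{A.16}) and (\ref{A.17}), so that the design of $k_j$ is exactly what keeps the product $\beta_j p_L$ large enough relative to $\pi_1-\eta$ at every step.
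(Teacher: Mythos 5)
Your treatment of $\eta(h^t,(l,h))=0$ and of the case $\underline{c}(h^t)=c_2$ is correct and is essentially the paper's argument: only $\underline{c}(h^t)$ attaches positive probability to $\mathbf{a}^L$ at a Class 2 history, so $(l,h)$ reveals the type, and the two-point Bayes computation with $p_L=\min\{1,\rho^*/(1-\eta(h^t))\}$ gives exactly $\min\{1,\eta(h^t)/(1-\rho^*)\}$.

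The gap is in the case $\underline{c}(h^t)=c_j$, $j\geq 3$. You set up the right identity $\eta(h^t,(l,l))=\eta(h^t)/(1-\beta_j p_L)$ and correctly reduce the base case $l(h^t)=0$ to (\ref{A.17}), but the inductive step — which you yourself flag as ``the main obstacle'' and only conjecture will ``telescope'' — is never carried out, and the induction is in any case the wrong frame: between consecutive Class 2 histories, Class 1 updates can move $\eta$ \emph{down} (after $(l,h)$ outcomes, per (\ref{A.2})), so the invariant $\eta\geq\pi_1(1-\beta_j p_L)$ at the next Class 2 history is not the image of the previous one under the $(l,l)$-update map, and propagating it requires re-invoking the uniform bound $\eta\geq\eta^*$ anyway. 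The missing observation, which is how the paper argues, is that the construction pins down the conditional distribution over $\{c_2,\dots,c_j\}$ in closed form at \emph{every} Class 2 history: types $c_2,\dots,c_{j-1}$ always act identically, and each prior $(l,l)$-update scales the $c_j$ mass by $\frac{k_j-l-1}{k_j-l}$, so after $l(h^t)$ such updates the conditional weights are proportional to $\big(\pi_2,\dots,\pi_{j-1},\frac{k_j-l(h^t)}{k_j}\pi_j\big)$. Hence
\begin{equation*}
\frac{\beta_j p_L}{1-\eta(h^t)}=\frac{\pi_j/k_j}{\pi_2+\cdots+\pi_{j-1}+\frac{k_j-l(h^t)}{k_j}\pi_j},
\end{equation*}
which is explicitly computable, increasing in $l(h^t)$, and therefore minimized at $l(h^t)=0$; combining this monotonicity with $\eta(h^t)\geq\eta^*$ and (\ref{A.17}) (noting the target inequality is vacuous when $\eta(h^t)\geq\pi_1$) finishes the proof at every $l(h^t)$ with no induction. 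Your argument as written establishes the claim only for the first $(l,l)$-update and leaves all subsequent ones unproved; you also implicitly use, without justification, that $\eta(h^t)\geq\eta^*$ at every Class 2 history, which is a property of the belief-updating rules (\ref{A.2})--(\ref{A.3}) that should at least be cited.
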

The next lemma establishes a uniform upper bound on the number histories that belong to Class 2 and the realized state in the previous period is $l$, which applies along every on-path play.
\begin{Lemma}\label{L2}
There exist $\underline{\delta} \in (0,1)$ and $M \in \mathbb{N}$,
  such that when $\delta >\underline{\delta}$ and along every on-path play, the number of histories that belong to Class 2 while the previous period state is $l$ is $\leq M$.
\end{Lemma}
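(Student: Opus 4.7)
The plan is to bound the total count by summing, over the possible values of $\underline{c}(h^t)$, an upper bound on how many Class 2 state-$l$ histories can occur while $\underline{c}$ takes that value. Since $\underline{c}$ can only shift toward higher-cost types as play progresses (types get ruled out, never reinstated), it takes at most $n-1$ distinct values in $\{c_2,\ldots,c_n\}$ before play enters the absorbing phase, so the sum has finitely many terms.

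The key observation is that at a Class 2 history with $\omega_t = l$, only $\underline{c}(h^t)$ plays $\mathbf{a}^L$ with positive probability; every other type in the support plays $\mathbf{a}^H$ and therefore sends $m_t = l$. Thus each such visit resolves in one of three ways: (i) $m_t = h$ is observed, which fully reveals $\underline{c}(h^t)$ and sends play to the continuation $v(h^t,(l,h))$ with $p^L = 0$, i.e., into the absorbing phase from which no further Class 2 history is reachable; (ii) $m_t = l$ is observed with the current mixing probability strictly below one, which increments the counter $l(h^t)$ and updates $\eta$; or (iii) $m_t = l$ is observed with mixing probability equal to one, which rules out $\underline{c}(h^t)$ and shifts $\underline{c}$ upward with the counter reset.

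For $\underline{c}(h^t) = c_j$ with $j \geq 3$, the formula (\ref{A.22}) makes the mixing probability reach one exactly when $l(h^t) = k_j - 1$, so at most $k_j$ state-$l$ visits can elapse before outcome (i) or (iii) occurs. For $\underline{c}(h^t) = c_2$, a Bayes-rule computation using the fact that $c_1$ sends $l$ for sure in state $l$ shows that when $\eta(h^t) < 1-\rho^*$, observing $m_t = l$ updates the posterior on $c_1$ to $\eta(h^t)/(1-\rho^*)$, a multiplicative growth by a factor exceeding one. Since $\eta(h^t) \geq \eta^* > 0$ throughout the active learning phase by (\ref{A.17}), only a bounded (logarithmic in $\eta^*$) number of such updates can occur before $\eta \geq 1-\rho^*$, after which one further state-$l$ visit triggers separation. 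Summing these type-by-type bounds yields a finite $M$ depending only on $\pi$ and $\rho^*$, independent of $\delta$.

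The main technical obstacle I anticipate is careful bookkeeping across the three phases: verifying that $\underline{c}$ can change only through a Class 2 state-$l$ event (or through support collapse in Class 1 when $\eta$ reaches $1$, which itself sends play into absorption since the resulting continuation has $p^L=0$), and that intervening periods in Class 1 or in the rebounding phase neither reset the counter $l(h^t)$ nor alter $\underline{c}(h^t)$, since $c_2,\ldots,c_n$ are pooled there and no separating information about them is revealed. The cutoff $\underline{\delta}$ comes entirely from the background construction, ensuring $\widehat{\delta}$, $\eta^*$, and the auxiliary constants $k_j$ are well defined.
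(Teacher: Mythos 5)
Your decomposition by the value of $\underline{c}(h^t)$ and your treatment of the types $c_j$ with $j\geq 3$ match the paper's Step 1: each such type exhausts its counter after at most $k_j$ state-$l$ visits, giving the finite contribution $k_3+\dots+k_n$. The gap is in the $\underline{c}(h^t)=c_2$ case. You argue that each Class 2 state-$l$ event with $m=l$ multiplies the posterior on $c_1$ by $1/(1-\rho^*)>1$, and conclude that only logarithmically many such events can occur before $\eta$ reaches $1-\rho^*$. That conclusion requires the posterior to compound across consecutive events, but $\eta$ is \emph{not} monotone between them: at intervening Class 1 histories type $c_1$ is not pooled with $c_2,\dots,c_n$ (only the latter are pooled with each other), and by the updating rule (\ref{A.2}) the posterior $\eta$ strictly \emph{decreases} every time the sender lies in state $l$ at a Class 1 history. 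If $\eta$ could decay back toward $\eta^*$ between consecutive Class 2 $c_2$-events, each event would again lift it only to roughly $\eta^*/(1-\rho^*)$, and the number of such events would not be bounded by your argument. Your closing remark that the intervening Class 1 periods do not ``alter $\underline{c}(h^t)$'' is true but does not address this, since it is the level of $\eta$, not the identity of $\underline{c}$, that drives your logarithmic count.

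The paper's proof spends most of its effort closing exactly this hole. After a Class 2 event the weight $p^L$ drops below $(1-\widehat{\delta})/\widehat{\delta}$, and combining this budget constraint with Lemma A.0 (the bound on the discounted frequency of message $l$ in Class 1) under the additional requirements (\ref{A.29}) on $\delta$, one shows that (i) play returns to a Class 2 or absorbing history within $T+N+1$ reduced-outcome periods, and (ii) at most \emph{one} of those periods has the sender lying in state $l$. Hence the net growth of $\eta$ across consecutive Class 2 $c_2$-events is at least $\frac{1-\lambda(1-\rho^*)}{1-\rho^*}\geq\sqrt{1/(1-\rho^*)}>1$ for $\lambda$ small, which is what actually delivers the bound $\widehat{M}$. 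Note also that this step is where a genuinely new lower bound $\underline{\delta}$ enters (via (\ref{A.29})); it is not inherited ``entirely from the background construction'' as you suggest. You need to supply this intermediate argument, or some substitute that controls the downward drift of $\eta$ between Class 2 events, before your $c_2$ count is valid.
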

Lemma \ref{L2} implies that for every on-path history $h^t$, the number of periods that belong to Class 2 in the reduced outcome $r(h^t)$
is no more than $M$.
Lemma \ref{L3} establishes a uniform lower bound on $p^H(h^t)$ for all histories belonging to the active learning phase (Class 1 and Class 2).
\begin{Lemma}\label{L3}
There exist $\underline{\delta} \in (0,1)$ and $\underline{Q}>0$, such that when $\delta > \underline{\delta}$, we have $p^H(h^t) \geq \underline{Q}$ for all $h^t$ belonging to the active learning phase.
\end{Lemma}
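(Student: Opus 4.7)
My plan is to reduce the lower bound on $p^H$ to a geometric fact about the polytope $V^*$. I will establish the invariant that at every on-path history $h^t$ belonging to the active learning phase, the continuation value $v(h^t)$ lies in $V^*$. Once this is done the conclusion follows by pure linear algebra: $p^H$ is a linear coordinate on the affine simplex $\{(p^H,p^L,p^N)\geq 0 : p^H+p^L+p^N=1\}$, so its minimum over the polytope $V^*$ is attained at one of the four vertices $v^H,v^*,\overline{v},\underline{v}$. A direct computation gives the $p^H$-values at these vertices as $1$, $(1-\rho^*)c_1/[\rho^*(1-c_1)+c_1]$, $c_1(1-2p_h)/[2p_h+c_1(1-2p_h)]$ (strictly positive because $p_h<1/2$), and $p^*=c_1(1-p_h)/[p_h+c_1(1-p_h)]$; all four are strictly positive and depend only on the primitives $c_1$ and $p_h$, not on $\delta$. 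I will then set $\underline{Q}$ to their minimum.

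The invariant itself I will prove by induction on $t$. The base case uses the fact that $v(h^0)=v(\rho)$ lies on the segment from $v^H$ to $v^*$, which is an edge of $V^*$, for every $\rho\in[0,\rho^*)$. For the inductive step I will check each transition prescribed in Section~\ref{sub4.3}: (i)~after $(h,h)$ in any phase $v$ is unchanged; (ii)~after the off-path outcome $(h,l)$ in Class~1 the continuation is $\underline{v}\in V^*$ by fiat; (iii)~after $(l,l)$ in Class~1, equation (\ref{A.8}) extrapolates $v(h^t)$ away from $v^H$ by the factor $1/\widehat{\delta}$, and the defining Class~1 condition---non-negativity of the first entry of (\ref{A.5})---is exactly what keeps this extrapolation on the correct side of the face $\{v_1=0\}$ of $V^*$; (iv)~after $(l,h)$ in Class~1 equation (\ref{A.7}) extrapolates away from $v^L$, and the requirement $p^L\geq 1-\widehat{\delta}$ keeps the coefficients non-negative; (v)~the Class~2 transitions (\ref{A.24})--(\ref{A.25}), combined with Lemma~\ref{L1} which controls the receiver's posterior after $(l,l)$, place $v(h^t,(l,h))$ on the segment between $v^H$ and $v^N$ inside $V^*$; (vi)~in the rebounding phase the update (\ref{A.26}) moves $v$ toward $v^H$, so $v(h^t,(l,h))$ lies between $v(h^t)$ and $v^H$ and hence in $V^*$ by the inductive hypothesis; (vii)~the absorbing phase confines play to the edge from $\underline{v}$ to $v^H$, which is an edge of $V^*$, as guaranteed by the last paragraph of Section~\ref{sub4.3}.

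The main technical obstacle will be case (iii) and its Class~2 analogue: the extrapolation from $v^H$ would leak across the face $\{v_1=0\}$ of $V^*$ if it were ever applied at a history where the first entry of (\ref{A.5}) had already gone negative. The key point to verify is the equivalence ``first entry of (\ref{A.5}) is non-negative at $h^t$'' $\Longleftrightarrow$ ``the Class~1/Class~2 extrapolation keeps $v(h^t,(l,l))$ inside $V^*$.'' This equivalence is precisely what the phase classification is engineered to enforce, with the rebounding phase taking over exactly when the extrapolation would exit $V^*$. Once I have checked this equivalence and the routine coefficient inequalities case by case for $\widehat{\delta}$ close enough to $1$, and combined them with Lemmas~\ref{L1} and \ref{L2} together with Lemma~A.0 to handle the Class~2 book-keeping and the discounted frequency of state-$l$ periods, the invariant $v(h^t)\in V^*$ follows, and the lemma follows with the constant $\underline{Q}$ identified above.
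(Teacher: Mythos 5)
Your reduction of the lemma to the invariant $v(h^t)\in V^*$, followed by minimizing the affine coordinate $p^H$ over the four vertices of $V^*$, is a clean reformulation, and your vertex computations are fine. The gap is in the proof of the invariant, specifically your case (iii) and the ``equivalence'' you propose to verify in your final paragraph. The Class 1 membership condition is the non-negativity of the first entry of (\ref{A.5}), which is the continuation value after the outcome $(l,h)$ --- the branch that erodes $p^L$. It does guarantee that the first entry of the $(l,l)$ continuation (\ref{A.8}) stays positive for one step, but it says nothing about the branch that actually threatens the lemma: (\ref{A.8}) rescales $(p^H,p^L,p^N)$ into $\bigl(\frac{p^H-(1-\widehat{\delta})}{\widehat{\delta}},\frac{p^L}{\widehat{\delta}},\frac{p^N}{\widehat{\delta}}\bigr)$, so a long run of $(l,l)$ outcomes preserves the ratio $p^N/p^L$ while monotonically eroding $p^H$, driving the point across the facet $[v^*,\overline{v}]$ (where $p^L/(p^H+p^L)=\rho^*$) and eventually to $p^H<0$. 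No condition local to $h^t$ --- neither $p^L\geq 1-\widehat{\delta}$ nor the sign of the first entry of (\ref{A.5}) --- rules this out, so the proposed equivalence is false and your induction does not close at exactly the step where the lemma has content.

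What prevents the drift is global, not local: the belief updates push $\eta$ up after each $(l,l)$ and down after each $(l,h)$, and since $\eta\leq 1$, Lemma A.0 caps the discounted frequency of $(l,l)$ along any Class 1 stretch by $(1-\widehat{\delta}^T)$ plus $\frac{1-\widetilde{\rho}}{\widetilde{\rho}}$ times the discounted frequency of $(l,h)$; since the total $(l,h)$ budget is bounded by $p^L(h^0)$, which corresponds to $\rho<\widetilde{\rho}$, the cumulative erosion of $p^H$ is bounded with slack surviving as $\delta\to 1$. The paper's proof is precisely this accounting: it bounds the $p^H$-loss over each Class 1 stretch via Lemma A.0, bounds the loss at each Class 2 episode by terms of order $1-\widehat{\delta}$, and invokes Lemma \ref{L2}'s uniform bound $M$ on the number of Class 2 episodes to sum finitely many such losses. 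You relegate Lemmas A.0--A.2 to ``book-keeping'' in your closing sentence, but they carry the entire weight of the argument; without the frequency bound the invariant $v(h^t)\in V^*$ cannot be established transition by transition.
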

Lemma \ref{L3} also implies a lower bound on $p^H(h^t)$ if $h^t$ is the \textit{first history} that reaches the absorbing phase, i.e., $h^t$ is such that $p^L(h^t)=0$ and $p^L(h^s)>0$ for all $h^s \prec h^t$.
\begin{Lemma}
There exist $\underline{\delta} \in (0,1)$ and $M \in \mathbb{N}$,
  such that when $\delta >\underline{\delta}$ and along every on-path play, the number of histories that belong to the rebounding phase is at most $M$.
\end{Lemma}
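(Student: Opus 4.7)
The plan is to bound the count by a two-step decomposition: (i) bound the number of state-$l$ outcomes inside any single maximal rebounding episode; (ii) bound the total number of such episodes along any on-path realization, noting that state-$h$ outcomes within an episode leave every state variable unchanged and contribute at most a geometric tail.

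For step (i), I would use the update rule (\ref{A.26}). Inside a rebounding episode, every $(l,h)$ outcome raises the first coordinate of the continuation payoff as
\[
v_1(h^{t+1}) - v_1(h^t) \;=\; \frac{1-\widehat{\delta}}{\widehat{\delta}}\bigl(v_1(h^t) - v^N_1\bigr) \;\geq\; \frac{1-\widehat{\delta}}{\widehat{\delta}}\,c_1(1-p_h),
\]
and exit occurs the first time $v_1(h^t) \geq (1-\widehat{\delta})v^L_1$. Iterating this increment shows that at most $K := \lceil v^L_1/(c_1(1-p_h))\rceil + 1$ state-$l$ outcomes are consumed per episode, a bound independent of $\delta$.

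For step (ii), I would couple the entry-into-rebounding events with a monotone change in a $\delta$-independent quantity. Every rebounding entry is triggered by a Class 1 active-learning $(l,h)$ outcome that drives $v_1(h^t)$ below the threshold $(1-\widehat{\delta})v^L_1$. Using the uniform lower bound on $p^H$ from Lemma \ref{L3}, the Class 2 bound from Lemma \ref{L2}, and the discounted-frequency inequality of Lemma A.0, I would argue that the receiver's posterior belief $\eta(h^t)$ must make strict progress toward $1$---or, equivalently, the lowest-cost index $\underline{c}(h^t)$ must advance to a higher-cost type---between consecutive rebounding entries. Since there are only $n$ types in the support of the prior and each advancement of $\underline{c}(h^t)$ is uniformly bounded by the integers $k_j$ defined in (\ref{A.16}), this yields a uniform bound $N$ on the number of episodes, and hence the desired $M := NK$ (plus the geometric tail of state-$h$ periods absorbed at the end).

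The main obstacle will be step (ii): the natural monotone quantities $Q^L := p^L\widehat{\delta}^{\tau}$ and $Q^N := p^N\widehat{\delta}^{\tau}$ decrement by amounts of order $1 - \widehat{\delta}$ per state-$l$ outcome, which on its own cannot yield a $\delta$-independent bound on the episode count. The argument therefore requires identifying an integer-valued "ratchet" that advances at every rebounding entry---most plausibly the index $\underline{c}(h^t)$ of the lowest-cost type still in the support of the receiver's posterior, whose advancement mechanism is built into the Class 2 construction via the counter $l(h^t)$ defined in (\ref{A.21}). Verifying that each rebounding entry necessarily consumes one unit of this ratchet, rather than allowing repeated oscillations within a fixed type regime, is the delicate step of the argument and the place where Lemma A.0 together with the belief updating formulas (\ref{A.2})--(\ref{A.3}) must be applied most carefully.
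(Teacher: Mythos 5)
Your step (i) is exactly the paper's argument: the paper also notes that in the rebounding phase type $c_1$'s continuation value is at least $0$, that each state-$l$ period increases it by at least $\frac{1-\widehat{\delta}}{\widehat{\delta}}c_1(1-p_h)$ (iterated, this gives the paper's $\frac{1-\widehat{\delta}^K}{\widehat{\delta}^K}c_1(1-p_h)$), and that exit occurs once the value exceeds $(1-\widehat{\delta})\big(p_h+(1-c_1)(1-p_h)\big)=(1-\widehat{\delta})v_1^L$, yielding a per-episode bound $K=\lceil (p_h+(1-c_1)p_h)/(c_1(1-p_h)^2)\rceil$ independent of $\delta$. In fact this per-episode bound, together with a one-sentence appeal to Lemma A.0, is the \emph{entire} explicit content of the paper's proof; so you have reproduced the substantive part of the argument and correctly recognized that a bound on the number of episodes is the remaining issue.

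The gap is in your step (ii), which you describe but do not carry out, and the specific ratchet you propose is unlikely to be the right one. Entry into the rebounding phase is triggered by type $c_1$'s continuation value falling below $(1-\widehat{\delta})v_1^L$, i.e., by the accumulated discounted frequency of lying along the preceding active-learning stretch; it is not tied to a visit to a Class 2 history or to an advancement of $\underline{c}(h^t)$. In principle play could cycle between Class 1 and the rebounding phase repeatedly while $\underline{c}(h^t)$ and the counter $l(h^t)$ never move, so "each rebounding entry consumes one unit of the ratchet" would fail as stated. The quantity that actually controls re-entry is the slack term $1-\widehat{\delta}^{T}$ in Lemma A.0: each drop of $v_1$ from its post-exit level (roughly $(1-\widehat{\delta})v_1^L$) down to near $0$ requires an excess discounted frequency of message $h$ in state $l$ beyond the ratio $\frac{1-\widetilde{\rho}}{\widetilde{\rho}}$, and Lemma A.0 caps the total such excess along any Class 1 stretch by $1-\widehat{\delta}^T$. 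A correct step (ii) must convert that cap into a $\delta$-independent count of threshold crossings (using that both the per-crossing cost and the cap scale like $1-\widehat{\delta}$), combined with Lemma A.2's bound on Class 2 visits to handle stretches interrupted by Class 2 histories. As written, your proposal asserts the conclusion of this accounting without performing it, and redirects the burden onto a monotone quantity ($\underline{c}$) that need not move at rebounding entries.
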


\subsection{Proof of Lemma A.1}
\paragraph{Case 1:} Consider the case in which  $\underline{c}(h^t) \leq c_3$. First, suppose $\eta(h^t) \geq \eta(h^0)$, then
the conclusion of
Lemma A.1 follows since
$\eta(h^t,(l,l))>\eta(h^t) \geq \eta(h^0)$.
Second, suppose $\eta(h^t)< \eta(h^0)$, then given the value of $l(h^t)$
and the lowest-cost type at $h^t$ being $c_j$,
the posterior probability of type $c_1$ is bounded from below by:
\begin{eqnarray*}
 & &   \frac{\eta(h^t)}{\displaystyle \eta(h^t) +(1-\eta(h^t)) \frac{\pi_2+...+\pi_{j-1} + \frac{k_j-l(h^t)-1}{k_j} \pi_j}{\pi_2+...+\pi_{j-1} + \frac{k_j-l(h^t)}{k_j} \pi_j}}
\geq
\frac{\eta(h^t)}{\displaystyle \eta(h^t) +(1-\eta(h^t)) \frac{\pi_2+...+\pi_{j-1} + \frac{k_j-1}{k_j} \pi_j}{\pi_2+...+\pi_{j-1} +\pi_j}}
\end{eqnarray*}
Let
\begin{equation*}
X \equiv 1- \frac{\pi_2+...+\pi_{j-1} + \frac{k_j-1}{k_j} \pi_j}{\pi_2+...+\pi_{j-1} +\pi_j}=
\frac{\pi_j}{k_j (\pi_2+...+\pi_{j-1} +\pi_j)}.
\end{equation*}
The lower bound on posterior belief $\frac{\eta(h^t)}{\eta(h^t)+(1-\eta(h^t)) (1-X)}$ is greater than $\pi_1$ if and only if:
\begin{equation*}
    X \geq 1-\frac{(1-\pi_1)\eta(h^t)}{\pi_1 (1-\eta(h^t))}
    =\frac{\pi_1-\eta(h^t)}{\pi_1 (1-\eta(h^t))}.
\end{equation*}
Given that $\eta(h^t)\geq \eta^*$ at every history $h^t$ that belongs to Class 2,  the above inequality is implied by (\ref{A.17}).

\paragraph{Case 2:}
Consider the case in which $\underline{c}(h^t)=c_2$. If $\eta(h^t) \geq 1-\rho^*$, then type $c_2$
plays $\mathbf{a}^L$ with probability $\min \{1,\frac{\rho^*}{1-\eta(h^t)}\}=1$, which implies that $\eta(h^t,(l,l))=1$.
If $\eta(h^t) < 1-\rho^*$, then type $c_2$
 lies in state $l$ with probability $\min \{1,\frac{\rho^*}{1-\eta(h^t)}\}=\frac{\rho^*}{1-\eta(h^t)}$, which implies that $\eta(h^t,(l,l))=\eta(h^t)/(1-\rho^*) \geq (1-\rho^*) \eta(h^0)/(1-\rho^*)=\eta(h^0)$.

\subsection{Proof of Lemma A.2}
\paragraph{Step 1:} If $h^t$ belongs to Class 2 and $\underline{c}(h^t) =c_j \leq c_3$, then
type $\underline{c}(h^t)$ sends message $h$ in state $l$ with probability $1$
when $l(h^t)=k_j-1$, after which play reaches the absorbing phase. Therefore, along every path of play, there are at most $k_j$ Class 2 histories satisfying $\underline{c}(h^t)=c_j$ and the state in the previous period (i.e. period $t-1$) is $l$.
This further implies that
there are at most
\begin{equation*}
K \equiv k_3+...+k_n
\end{equation*}
Class 2 histories that has $\underline{c}(h^t) \leq c_3$ and the previous period state being $l$.

\paragraph{Step 2:} Consider the number of Class 2 histories such that (1)
$\underline{c}(h^t)=c_2$, and (2) the state in the previous period is $l$.
Let $N \equiv \lceil \frac{1}{1-\gamma}\rceil$, and recall the integer constant $T$ in Lemma A.1. In addition to the requirements on $\delta$ specified before, I need require $\delta$ to be large enough such that $\widehat{\delta}$ satisfies
\begin{equation}\label{A.29}
    \widehat{\delta}^{T+1} (1+\widehat{\delta}+...+\widehat{\delta}^N) > N \textrm{ and } 2 \widehat{\delta}^{T+N+2} > 1.
\end{equation}
First, I show that after the sender sends message $l$ when the state is $l$ at $h^t$, it takes at most $T+N$ such periods for play to reach a history that belongs to either to the absorbing phase or to another Class 2 history.
According to the continuation value at $(h^t,(l,l))$, we have:
\begin{equation}\label{A.30}
p^L(h^t,(l,l))=\frac{p^L(h^t)}{\widehat{\delta}} < \frac{1-\widehat{\delta}}{\widehat{\delta}}.
\end{equation}
The last inequality comes from $h^t$ belonging to Class 2, so that $p^L(h^t)<1-\widehat{\delta}$ by definition.
According to Lemma A.1, for every Class 1 history $h^s$ such that $h^s \succ (h^t,(l,l)) \equiv h^{t+1}$ and all histories between $(h^t,(l,l))$ and $h^s$ belong to Class 1, we have:
\begin{equation}\label{A.31}
  (1-\widehat{\delta}) \sum_{\tau=1}^{|r(h^s\backslash h^{t+1})|} \widehat{\delta}^{\tau-1} \mathbf{1}\{m_{\tau}=l\}
    \leq
    (1-\widehat{\delta}^T)
    +
    (1-\widehat{\delta})\sum_{\tau=1}^{|r(h^s\backslash h^{t+1})|} \widehat{\delta}^{\tau-1} \mathbf{1}\{m_{\tau}=h\}
    \cdot
    \frac{1-\widetilde{\rho}}{\widetilde{\rho}}.
\end{equation}
Moreover, (\ref{A.30}) and the requirement that all histories between $(h^t,(l,l))$ and $h^s$ belong to Class 1 imply that
\begin{equation}\label{A.32}
   (1-\widehat{\delta})\sum_{\tau=1}^{|r(h^s\backslash h^{t+1})|} \widehat{\delta}^{\tau-1} \mathbf{1}\{m_{\tau}=h\}
    < \frac{1-\widehat{\delta}}{\widehat{\delta}}.
\end{equation}
Given that only $(\mathbf{a}^L,\mathbf{b}^T)$ and $(\mathbf{a}^H,\mathbf{b}^T)$ occur at active learning phase histories (Class 1 and 2):
\begin{equation*}
    1-\widehat{\delta}^{|r(h^s\backslash h^{t+1})|-1}
    =   (1-\widehat{\delta}) \sum_{\tau=1}^{|r(h^s\backslash h^{t+1})|} \widehat{\delta}^{\tau-1} \mathbf{1}\{m_{\tau}=h\}
    + (1-\widehat{\delta}) \sum_{\tau=1}^{|r(h^s\backslash h^{t+1})|} \widehat{\delta}^{\tau-1} \mathbf{1}\{m_{\tau}=l\}
\end{equation*}
\begin{equation}\label{A.33}
    \leq (1-\widehat{\delta}^T)+ \frac{1-\widehat{\delta}}{\widehat{\delta}}+ \frac{1-\widehat{\delta}}{\widehat{\delta}} \frac{1-\widetilde{\rho}}{\widetilde{\rho}}
    \leq (1-\widehat{\delta}^T) +\frac{1-\widehat{\delta}}{\widehat{\delta} \widetilde{\rho}} \leq (1-\widehat{\delta}^T) +\frac{1-\widehat{\delta}}{\widehat{\delta}\rho}
\end{equation}
Next, I show that $|r(h^s\backslash h^{t+1})| \leq T+N+1$. Suppose toward a contradiction that $|r(h^s\backslash h^{t+1})| \geq T+N+2$, then
\begin{equation*}
 (1-\widehat{\delta}^T) +\frac{1-\widehat{\delta}}{\widehat{\delta}}N \geq (1-\widehat{\delta}^T) +\frac{1-\widehat{\delta}}{\widehat{\delta}\rho} \geq  1-\widehat{\delta}^{|r(h^s\backslash h^{t+1})|-1} \geq 1-\widehat{\delta}^{T+N+1},
\end{equation*}
which yields:
 \begin{equation*}
\frac{1-\widehat{\delta}}{\widehat{\delta}}N \geq  \widehat{\delta}^T(1-  \widehat{\delta}^{N+1}).
 \end{equation*}
Dividing both sides by $\frac{1-\widehat{\delta}}{\widehat{\delta}}$, we have:
\begin{equation*}
 N \geq    \widehat{\delta}^{T+1} (1+\widehat{\delta}+...+\widehat{\delta}^N),
\end{equation*}
which contradicts the first inequality of (\ref{A.29}).

Second, I focus on history $h^s$ that has the following two features:
\begin{itemize}
  \item[1.] $h^s$ belongs to Class 2, and the state in period $s-1$ is $l$,
  \item[2.] $h^s \succeq (h^t,(l,l))$ and
  all histories between $(h^t,(l,l))$ and $h^s$, excluding $h^s$, belong to Class 1.
\end{itemize}
I show that there exists at most one period from $(h^t,(l,l))$ to $h^s$ such that the stage-game outcome is such that the sender sends message $h$ while the state is $l$. Suppose toward a contradiction that there exist two or more such periods, then
\begin{equation*}
   (1-\widehat{\delta})\sum_{\tau=1}^{|r(h^s\backslash h^{t+1})|} \widehat{\delta}^{\tau-1} \mathbf{1}\{m_{\tau}=h\} \geq 2(1-\widehat{\delta})\widehat{\delta}^{T+N+1}.
\end{equation*}
The last inequality comes from the previous conclusion that $|r(h^s\backslash h^{t+1})| \leq T+N+1$.
According to (\ref{A.32}),
\begin{equation}\label{A.34}
2(1-\widehat{\delta})\widehat{\delta}^{T+N+1}  <  (1-\widehat{\delta})\sum_{\tau=1}^{|r(h^s\backslash h^{t+1})|} \widehat{\delta}^{\tau-1} \mathbf{1}\{m_{\tau}=h\}< \frac{1-\widehat{\delta}}{\widehat{\delta}}.
\end{equation}
The above inequality contradicts the second inequality of (\ref{A.29}) that $2 \widehat{\delta}^{T+N+2} > 1$.

Let $h^t$ be the first time play reaches a history that belongs to Class 2 with $\underline{c}(h^t)=c_2$. According to Lemma A.1, $\eta(h^t,(l,l)) \geq \frac{\eta^*}{1-\rho^*} \geq \eta(h^0)=\pi_1$.
Let $h^s$ be the next history that belongs to Class 2 with $\omega_{s-1}=l$. Since we have shown that the receiver takes the wrong action
at most once between $(h^t,(l,l))$ and $h^s$, we know that
\begin{equation*}
    \eta(h^s, (l,l) =\min\{1, \frac{\eta(h^s)}{1-\rho^*} \}
    \geq \min \{1, \frac{\eta(h^t,(l,l))}{1-\rho^*} (1-\lambda (1-\rho^*))\}
\end{equation*}
Therefore, conditional on $(h^s, (l,l))$ is not an absorbing phase history, the receiver's belief at $(h^s, (l,l))$ attaches
probability at least:
\begin{equation}\label{A.35}
 \eta(h^s, (l,l)) \geq   \eta(h^t,(l,l)) \frac{1-\lambda (1-\rho^*)}{1-\rho^*} \geq  \eta(h^t,(l,l)) \sqrt{\frac{1}{1-\rho^*}}
\end{equation}
to type $c_1$,
where the last inequality comes from $\lambda \in (0,\frac{1-\sqrt{1-\rho^*}}{1-\rho^*})$.
Let
\begin{equation*}
    \widehat{M} \equiv \frac{\log (1/\pi_1)}{\log \sqrt{\frac{1}{1-\rho^*}}}+1.
\end{equation*}
Since $\eta(h^t,(l,l)) \geq \pi_1$ for the first Class 2 history $h^t$ satisfying $\underline{c}(h^t)=c_2$, there can be at most $\widehat{M}$ Class 2 histories
with $c_2$ being the highest-cost type along every path of play. This is because otherwise,
the receiver's posterior belief attaches probability greater than
\begin{equation*}
    \pi_1 \Big(\frac{1}{\sqrt{1-\rho^*}}\Big)^{\widehat{M}} >1
\end{equation*}
at the $(\widehat{M}+1)$th such history, which leads to a contradiction.
Summarizing the conclusions of the two parts, we know that
along every path of equilibrium play,
there exist at most $M \equiv K+\widehat{M}$  histories  that belong to Class 2 and the state in the previous period is $l$.

\subsection{Proof of Lemma A.3}
Consider any given Class 2 history $h^t$ such that no predecessor of $h^t$ belongs to Class 2, in another word, all predecessors of $h^t$ belong to Class 1 or the rebounding phase. Therefore, $p^H(h^{t-1}) \geq Y$, which implies that $p^H(h^{t}) \geq Y-(1-\widehat{\delta})$. As a result
\begin{equation*}
    Q(h^t)=p^H(h^t) - \frac{1-\widehat{\delta}-p^L(h^t)}{\underline{c}(h^t)} \geq Y-(1-\widehat{\delta}) >0.
\end{equation*}
If play remains at the active learning phase (Class 1 or 2) after $h^t$, then player $1$ must be sending message $l$ when the state is $l$ at $h^t$, after which
\begin{equation*}
    p^H(h^t,(l,l)) \geq p^H(h^t)-(1-\widehat{\delta}) \geq Y-2(1-\widehat{\delta}) \textrm{ and } p^L(h^t,(l,l)) \leq \frac{1-\widehat{\delta}}{\widehat{\delta}}.
\end{equation*}
According to Lemma A.1, $\eta(h^t,(l,l)) \geq \eta(h^0)=\pi_1$. One can then apply Lemma A.0 again, which implies that at every Class 1 history $h^s$ such that only one predecessor of $h^s$ (1) belongs to Class 2 and (2) $\omega_{s-1}=l$, we have:
\begin{equation*}
    p^H(h^s) \geq Z \equiv Y-2(1-\widehat{\delta})-\frac{1-\widehat{\delta}}{\widehat{\delta}}\frac{1-\widetilde{\rho}}{\widetilde{\rho}}-(1-\widehat{\delta}^T).
\end{equation*}
 When $\widehat{\delta}$ is large enough, $Z \geq Y/2$. Therefore, for every Class 2 history $h^s$ such that there is only one strict predecessor history belongs to Class 2,
\begin{equation*}
        Q(h^s)=p^H(h^s) - \frac{1-\widehat{\delta}-p^L(h^s)}{\underline{c}(h^s)} \geq Z-(1-\widehat{\delta}) >0.
\end{equation*}
Iterative this process. Since
\begin{enumerate}
  \item the number of Class 2 histories with the state in the previous period being $l$  is bounded from above by $M$ along every path of play,
  \item for every Class 2 history $h^t$, $p^L(h^t,(l,l)) =\frac{1-\widehat{\delta}}{\widehat{\delta}}$ and
 $\eta(h^t,(l,l)) \geq \eta(h^0)$,
\end{enumerate}
there exist $\underline{\delta} \in (0,1)$ and $\underline{Q}>0$ such that when $\delta >\underline{\delta}$, $p^H(h^t) \geq \underline{Q}$ for every Class 1 history $h^t$.
\subsection{Proof of Lemma A.4}
 I construct a constant $K \in \mathbb{N}$ that is independent of the discount factor $\delta$ such that once play enters the rebounding phase,
it will go back to the active learning phase after \textit{at most} $K$ periods with the realized state $\omega$ being $l$.
First, type $c_1$'s continuation value in the rebounding phase is at least $0$. Second, play goes back to the active learning phase whenever his continuation value is above $(1-\widehat{\delta}) (p_h+(1-c_1)(1-p_h))$. After $K$ periods in the rebounding phase with the realized state being $l$, type $c_1$'s continuation value is at least:
\begin{equation}\label{A.33}
    \frac{1-\widehat{\delta}^K}{\widehat{\delta}^K} c(1-p_h),
\end{equation}
which is more than $(1-\widehat{\delta}) (p_h+(1-c_1)(1-p_h))$ if
\begin{equation}\label{A.34}
    K \geq \Big\lceil
    \frac{p_h+(1-c_1)p_h}{c_1 (1-p_h)^2}
    \Big\rceil.
\end{equation}
Lemma A.4 is obtained by setting $\delta$ to be close enough to $1$ such that the corresponding $\widehat{\delta}^K$ satisfies the requirement for $\delta$ in Lemma A.0.
\section{Proof of Theorem 3: Statement 1}\label{secB}
I show Lemma \ref{L5.1} in section \ref{subB.1}, and show Lemma \ref{L5.2} in section \ref{subB.2}.
The two lemmas together imply the first statement of Theorem \ref{Theorem3}.
\subsection{Proof of Lemma 5.1}\label{subB.1}
\paragraph{Step 1:} Given that $\overline{v}(c)=p_h(2-c)$, the optimal value is attained by a distribution that attaches probability $\rho^*$ to action profile $(\mathbf{a}^L,\mathbf{b}^T)$ and probability $1-\rho^*$ to action profile $(\mathbf{a}^H,\mathbf{b}^T)$. I denote this distribution by $\gamma^* \in \Delta (\mathbf{A} \times \mathbf{B})$, under which type $c_j$'s expected stage-game payoff is $p_h(2-c_j)$ for every $c_j \in \mathcal{C} \cap [0,1)$. Therefore, $\overline{v}(c)$ is also the value of the following constrained optimization problem, which shares the same objective function but faces a larger set of constraints:
\begin{equation*}
    \overline{v} (c) = \max_{\gamma \in \Delta (\mathbf{A} \times \mathbf{B})} \sum_{(\mathbf{a},\mathbf{b}) \in \mathbf{A} \times \mathbf{B}} \gamma (\mathbf{a},\mathbf{b}) u_s(c,\mathbf{a},\mathbf{b})
\end{equation*}
subject to the constraint that for \textit{every} $c_j \in \mathcal{C} \cap [0,1)$,
\begin{equation*}
\sum_{(\mathbf{a},\mathbf{b}) \in \mathbf{A} \times \mathbf{B}} \gamma (\mathbf{a},\mathbf{b}) u_s(c_j,\mathbf{a},\mathbf{b})
\geq p_h (2-c_j).
\end{equation*}
\paragraph{Step 2:} Fix the value of $\delta$.
Let $\sigma_{c_j}: \mathcal{H} \rightarrow \Delta (\mathbf{A})$ be type $c_j$ sender's equilibrium strategy, and let $\sigma_r: \mathcal{H} \rightarrow \Delta (\mathbf{B})$ be the receiver's equilibrium strategy. Let $\gamma^c \in \Delta (\mathbf{A} \times \mathbf{B})$ be defined as:
\begin{equation*}
     \gamma^c(\mathbf{a},\mathbf{b}) \equiv  \mathbb{E}^{(\sigma_{c},\sigma_r)} \Big[
    \sum_{t=0}^{\infty} (1-\delta)\delta^t \mathbf{1} \{
(\mathbf{a}_t,\mathbf{b}_t) =   (\mathbf{a},\mathbf{b})
     \}
    \Big], \textrm{ for every } (\mathbf{a},\mathbf{b}) \in \mathbf{A} \times \mathbf{B}.
\end{equation*}
Let $\Sigma_{c_j}$ be the set of pure strategies in the support of $\sigma_{c_j}$, with $\widehat{\sigma}_{c_j} : \mathcal{H} \rightarrow \mathbf{A}$ a typical element. Let
\begin{equation}\label{B.1}
 d(c,c_j) \equiv   \sum_{(\mathbf{a},\mathbf{b}) \in \mathbf{A} \times \mathbf{B}} \gamma^c (\mathbf{a},\mathbf{b}) u_s(c,\mathbf{a},\mathbf{b})
    -\sup_{\widehat{\sigma}_{c_j} \in \Sigma_{c_j}}\mathbb{E}^{(\widehat{\sigma}_{c_j},\sigma_2)} \Big[
    \sum_{t=0}^{\infty} (1-\delta)\delta^t u_s(c,\mathbf{a},\mathbf{b})
    \Big].
\end{equation}
Let $c_j \equiv \max \Big\{ \mathcal{C} \cap [0,1) \Big\}$.
I derive an upper bound for
\begin{equation}\label{B.2}
  \eta(c_j) \equiv \min_{c \in \mathcal{C} \cap [1,+\infty) } d(c,c_j).
\end{equation}
For every $c \in  \mathcal{C} \cap [1,+\infty)$ and every
pure strategy $\widehat{\sigma}_{c_j}$ in the support of $\sigma_{c_j}$,
$\widehat{\sigma}_{c_j}$ is not
an
$\eta(c_j)$-best reply against $\sigma_r$ for type $c$ sender, which implies that for all on-path histories after period $T \in \mathbb{N}$, type $c_j$ sender is separated from all types who have strictly higher lying costs, with $T$ the largest integer satisfying:
\begin{equation}\label{B.3}
    (1+c_1) \delta^T \geq \eta(c_j),
\end{equation}
where $1+c_1$ is the largest difference in stage-game payoff for any type of sender. According to Proposition \ref{Prop1}, type $c_j$'s continuation value after period $T$ is at most $p_h$. Given that his equilibrium payoff is at least $v_j^{**}-\varepsilon$, we have:
\begin{equation*}
\mathbb{E}^{(\sigma_{c_j},\sigma_2)} \Big[ \sum_{t=0}^T (1-\delta)\delta^t u_1(c_j,\mathbf{a},\mathbf{b})  \Big]
+\delta^T p_h
\geq
 v_j^{**}-\varepsilon,
\end{equation*}
or equivalently,
\begin{equation}\label{B.4}
\mathbb{E}^{(\sigma_{c_j},\sigma_2)} \Big[ \sum_{t=0}^T (1-\delta)\delta^t u_1(c_j,\mathbf{a},\mathbf{b})  \Big] \geq    v_j^{**}-\varepsilon -\delta^T p_h.
\end{equation}
According to Gossner (2011), there exists $S \in \mathbb{N}$ that depends only on $\pi(c_j)$ such that:
\begin{equation}\label{B.5}
    \mathbb{E}^{(\sigma_{c_j},\sigma_2)} \Big[ \sum_{t=0}^T (1-\delta)\delta^t u_1(c_j,\mathbf{a},\mathbf{b})  \Big]
    \leq (1-\delta^T) v_j^{**} + (1-\delta^{S}) (1-v_j^{**})
\end{equation}
This together with (\ref{B.4}) implies that:
\begin{equation*}
    (1-\delta^T) v_j^{**} + (1-\delta^{S}) (1-v_j^{**}) \geq v_j^{**}-\varepsilon -\delta^T p_h,
\end{equation*}
or equivalently,
\begin{equation}\label{B.6}
    \delta^T \leq \frac{\varepsilon +(1-\delta^S)(1-v_j^{**})}{v_j^{**}-p_h}.
\end{equation}
This together with (\ref{B.3}) implies the following upper bound on $\eta(c_j)$:
\begin{equation*}
    \eta(c_j) \leq \frac{1+c_1}{v_j^{**}-p_h} \big(\varepsilon+ (1-\delta^S) (1-v_j^{**})\big),
\end{equation*}
which vanishes to $0$ as $\varepsilon \rightarrow 0$ and $\delta \rightarrow 1$.
\paragraph{Step 3:} Step 1 and Step 2 imply that in every equilibrium in which some non-ethical type $c_j$ attains payoff more than $v_j^{**}-\varepsilon$, there exists an ethical type $c$ and a pure strategy in the support of $\sigma_{c_j}$ such that this pure strategy is type $c$ sender's $\eta(c_j)$-best reply, with $\eta(c_j)$ vanishes as $\varepsilon \rightarrow 0$ and $\delta \rightarrow 1$. Let $\gamma \in \Delta (\mathbf{A} \times \mathbf{B})$ be the occupation measure induced by this pure strategy. Constraint (\ref{5.2}) is necessary given that type $c_j$ obtains his equilibrium payoff from this pure strategy which is greater than $v_j^{**}-\varepsilon$, and the definition of $\overline{v}(c)$ in (\ref{5.1}) suggests that type $c$ cannot obtain payoff strictly more than $\overline{v}(c)+\eta(c_j)$ in equilibrium. This establishes Lemma \ref{L5.1}.
\subsection{Proof of Lemma 5.2}\label{subB.2}
Recall that $\gamma^1 \in \Delta (\mathbf{A} \times \mathbf{B})$ is defined as:
\begin{equation*}
     \gamma^1(\mathbf{a},\mathbf{b}) \equiv  \mathbb{E}^{(\sigma_{c_1},\sigma_r)} \Big[
    \sum_{t=0}^{\infty} (1-\delta)\delta^t \mathbf{1} \{
(\mathbf{a}_t,\mathbf{b}_t) =   (\mathbf{a},\mathbf{b})
     \}
    \Big], \textrm{ for every } (\mathbf{a},\mathbf{b}) \in \mathbf{A} \times \mathbf{B}.
\end{equation*}
Propositions \ref{Prop2} and \ref{Prop3} apply to settings with ethical types and therefore, $\gamma^1$ satisfies constraint (\ref{5.5}) in the $\delta \rightarrow 1$ limit. For type $c_1$, he can guarantee stage-game payoff $0$ by telling the truth in both states, and therefore, $\gamma^1$ satisfies constraint (\ref{5.4}). Therefore, $\underline{v}(c)$ is type $c$'s lowest possible payoff by imitating the equilibrium strategy of type $c_1$. Since type $c$'s equilibrium payoff is weakly higher, his equilibrium payoff is no less than $\underline{v}(c)$.

\section{Proof of Theorem 3: Statement 2}\label{secC}
Let $c^* \equiv \min \{\mathcal{C} \cap [1,+\infty)\}$. Let $\overline{\mathcal{C}}$ be the set of ethical types with lying costs \textit{strictly} greater than $c^*$, which can be empty. Let $\underline{\mathcal{C}} \equiv \mathcal{C} \cap [0,1)$.
Recall the definitions of $v^H$, $v^L$, and $v^N$. The second statement of Theorem \ref{Theorem3} is implied by Proposition \ref{Prop5}:
\begin{Proposition}\label{Prop5}
Suppose $c_1(c^*-1) \leq 2$.
For every $\varepsilon>0$ and $\rho \in [0,\rho^*)$, there exists $\underline{\delta} \in (0,1)$ such that for every
$\delta>\underline{\delta}$ and
$\pi$ with $\pi(c^*) \geq \varepsilon$, there exists a sequential equilibrium in which
\begin{itemize}
  \item type $c_j$ attains payoff $\rho v_j^L +(1-\rho) v_j^H$
for every $c_j \in \underline{\mathcal{C}} \cup \{c^*\}$.
\end{itemize}
\end{Proposition}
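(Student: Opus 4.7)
The plan is to extend the construction in Appendix~\ref{secA} (from the proof of Proposition~\ref{Prop4}) by letting the ethical type $c^*$ play the role formerly played by the highest-cost non-ethical type $c_1$, while the strictly more ethical types in $\overline{\mathcal{C}}$ remain passive. The key observation enabling this extension is that in Appendix~\ref{secA}, continuation payoffs after $(l,h)$ place extra weight on $v^H$ and continuation payoffs after $(l,l)$ place extra weight on $v^L$; since for any type $c_j$ the payoff differential $v^H_j - v^L_j = (c_j-1)(1-p_h)$ has the same sign as the static cost-benefit of lying, the same continuation-value structure delivers indifference for ethical and non-ethical types alike. In particular, ethical type $c^*$, who statically strictly prefers honesty, is compensated exactly by the dynamic reward attached to lying.

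The construction has all types in $\underline{\mathcal{C}} \cup \{c^*\}$ follow mixing strategies analogous to Appendix~\ref{secA}, with $c^*$ in the role of $c_1$, and types in $\overline{\mathcal{C}}$ playing $\mathbf{a}^H$ at every history, which is strictly optimal for them. The target payoff polytope is redefined by replacing the vertex constraint ``$c_1$-entry non-negative'' with ``$c^*$-entry non-negative''. A direct calculation gives the $c^*$-entry of $\rho v^L + (1-\rho) v^H$ as $p_h - \rho(c^*-1)(1-p_h)$, which is strictly positive for every $\rho<\rho^*$ whenever $c^* \leq 2$; the hypothesis $c_1(c^*-1)\leq 2$, combined with $c_1 \geq c^*$, forces $c^*(c^*-1)\leq 2$, hence $c^*\leq 2$. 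Thus the target lies in the interior of the feasibility region, and the same balance equations as in Appendix~\ref{secA} provide indifference for $c^*$ (and for types in $\underline{\mathcal{C}}$); the receiver trusts since the overall lying frequency stays bounded by $\rho^*$ by construction.

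The hardest step will be handling the belief dynamics over types in $\overline{\mathcal{C}}$, which have no counterpart in Appendix~\ref{secA}. Because these types never lie, the outcome $(l,l)$ raises their posterior while the outcome $(l,h)$ reveals that the sender is not in $\overline{\mathcal{C}}$, so the effective type space evolves stochastically. The plan is to track the conditional probability that the sender belongs to $\underline{\mathcal{C}} \cup \{c^*\}$: once a lie is observed this probability jumps to one and the continuation reduces to a pure Appendix~\ref{secA} subgame on the smaller type space $\underline{\mathcal{C}}\cup\{c^*\}$, whereas before any lie the presence of $\overline{\mathcal{C}}$ only strengthens the receiver's incentive to trust. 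Iterating the bookkeeping of Appendix~\ref{secA}---Lemmas A.0--A.4 on the evolution of reputation, lower bounds on $p^H(h^t)$, and eventual convergence to the absorbing phase---and applying the same perturbation argument used there to establish sequential rationality, completes the construction.
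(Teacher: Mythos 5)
Your first step---porting the Appendix~\ref{secA} machinery with $c^*$ in the role of $c_1$, using the separability observation that the same continuation-value increments deliver indifference for any $c$ whether above or below $1$---is exactly the paper's first step (the ``auxiliary game'' in which the types in $\overline{\mathcal{C}}$ occur with probability zero). The gap is that everything hard about this proposition lives in the types in $\overline{\mathcal{C}}$, and your treatment of them does not work. A symptom: you use the hypothesis $c_1(c^*-1)\leq 2$ only to deduce $c^*\leq 2$, so if your argument were valid it would prove the proposition whenever $c^*\leq 2$ regardless of $c_1$. Take $c^*=1.5$ and $c_1=5$: then $c^*\leq 2$ but $c_1(c^*-1)=2.5>2$, and statement 1 of Theorem~\ref{Theorem3} (via Lemmas \ref{L5.1}--\ref{L5.2}) shows no such equilibrium exists. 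The mechanism your construction misses is the outside option created by $\overline{\mathcal{C}}$: every type can guarantee $0$ by always telling the truth, and every type can imitate the always-honest strategy of the highest-cost type $c_1$, whose occupation measure must satisfy the receiver's best-reply constraint and give $c_1$ at least $0$; this forces every type $c$ to receive at least $\underline{v}(c)=\frac{p_h(c_1-c)}{1+c_1}>0$ in \emph{any} equilibrium. Your construction violates these bounds in two places. First, for $c\in\overline{\mathcal{C}}$ large, the $c$-entry of the promised vectors (and of the punishment point $\underline{v}$ once it is renormalized so that the $c^*$-entry is zero) is strictly negative, which is infeasible. Second, for intermediate $c\in\overline{\mathcal{C}}$ the assigned value $p_h-\rho(c-1)(1-p_h)\approx\overline{v}(c)$ falls below $\underline{v}(c)$ exactly when $c_1(c-1)>2$, so those types profitably deviate to permanent honesty. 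Relatedly, your claim that always playing $\mathbf{a}^H$ is ``strictly optimal'' for $\overline{\mathcal{C}}$ is unsubstantiated: under the ported continuation values they are indifferent at Class 1 histories, and the always-honest deviation must be actively \emph{deterred} (its value must be driven down to $\underline{v}(c)$), not merely accommodated.

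This is why the paper's proof contains a second, genuinely different component that your plan omits. The types in $\overline{\mathcal{C}}$ play $\mathbf{a}^H$ only until the receivers' posterior rules out all of $\underline{\mathcal{C}}$ (which happens in boundedly many periods along the honest path); at that history the types with $c_1(c-1)>2$ separate by sending message $l$ and enter a dedicated \emph{punishment phase} (Proposition \ref{Prop6}), built from the outcome $(\mathbf{a}^H,\mathbf{b}^N)$ whose payoff $0$ is type-independent, and calibrated so that type $c_1$ receives exactly his minmax $0$ while each other ethical type receives exactly $\underline{v}(c)$; the types with $c_1(c-1)\leq 2$ pool instead. The full hypothesis $c_1(c^*-1)\leq 2$, i.e.\ $\overline{v}(c^*)\geq\underline{v}(c^*)$, is precisely what makes type $c^*$ willing to keep pooling with the non-ethical types rather than take this outside option. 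Without constructing this phase and verifying the self-selection at the separating history, the proposition is not established.
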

The proof consists of two steps.
Let us start from considering an \textit{auxiliary game} in which all types in $\overline{\mathcal{C}}$ occur with probability $0$. Using a similar construction as that in Appendix \ref{secA}, one can obtain an equilibrium that attains payoff vector $\rho v^L + (1-\rho) v^H$ for all types except for the ones in
$\overline{\mathcal{C}}$. I omit this construction to avoid repetition.

Let $\sigma_{c}$ be the equilibrium strategy of type $c \in \mathcal{C} \backslash \overline{\mathcal{C}}$ and let $\sigma_r$ be the receiver's equilibrium strategy. Let $\mathcal{H}^c$ be the set of histories that occur with positive probability under $(\sigma_c,\sigma_r)$. Let
$\mathcal{H}^*$ be the set of histories that
\begin{enumerate}
  \item occur with positive probability under $(\sigma_{c^*},\sigma_r)$
  \item the sender has lied at least once before the receivers' posterior belief ruling out the possibility of all types in $\underline{\mathcal{C}}$.
\end{enumerate}
Let $\mathcal{H}^{**} \equiv \Big\{ \bigcup_{c \in \underline{\mathcal{C}}} \mathcal{H}^c \Big\} \bigcup \mathcal{H}^*$.

Next, I specify the equilibrium strategies of types in $\overline{\mathcal{C}}$. I also modify $\sigma_r$ and $\{\sigma_c\}_{c \in \mathcal{C} \backslash \overline{\mathcal{C}}}$ at histories that do not belong to $\mathcal{H}^{**}$.
Every type in $\overline{\mathcal{C}}$ plays $\mathbf{a}^H$ with probability $1$ when the receiver's posterior belief attaches positive probability to types in $\underline{\mathcal{C}}$. Upon reaching $h^t$ which is the first history such that the receivers' posterior belief rules out all types in $\underline{\mathcal{C}}$, then for every type $c \in \underline{\mathcal{C}} \cup \{c^*\}$,
\begin{itemize}
  \item if $c_1 (c-1) \leq 2$, then type $c$ sends message $h$ at $h^t$ regardless of $\omega_t$,
  \item if $c_1 (c-1)>2$, then type $c$ sends message $l$ at $h^t$ regardless of $\omega_t$,
  \item the receiver plays action $L$ at $h^t$ regardless of the sender's message.
\end{itemize}
At history $h^{t+1} \succ h^t$, if message $h$ is sent in period $t$, then the continuation equilibrium consists only of outcomes $(\mathbf{a}^H,\mathbf{b}^T)$ and $(\mathbf{a}^L,\mathbf{b}^N)$, with the discounted average frequency of outcome $(\mathbf{a}^H,\mathbf{b}^T)$ equals $\widetilde{\rho}(h^t)$, which can be computed via:
\begin{equation}\label{C.1}
    v_{c^*}(h^t)= - (1-\delta)(1-p_h) c^*
    +\delta \Big(
    \widetilde{\rho}(h^t) p_h -(1-\widetilde{\rho}(h^t)) (1-p_h)c^*
    \Big),
\end{equation}
where $v_{c^*}(h^t)$ is type $c^*$ sender's continuation payoff at $h^t$.

At history $h^{t+1} \succ h^t$, if message $l$ is sent in period $t$, then the continuation equilibrium
enters a \textit{punishment phase} which is constructed in Appendix C.1 according to the proof of Proposition \ref{Prop6}. This continuation equilibrium delivers  payoff approximately $\underline{v}(c)$ to type $c$.

The receivers' incentives at every history in $\mathcal{H}^{**}$ remain intact since types in $\overline{\mathcal{C}}$ occur with zero probability at those histories. Similarly, at histories where the sender has never lied before, and type $c^*$ and types in $\underline{\mathcal{C}}$ coexist, all types in
$\overline{\mathcal{C}}$ play $\mathbf{a}^H$ with probability $1$, which strengthens the receivers' incentives to play $\mathbf{b}^T$.
For every $c \in \overline{\mathcal{C}}$, type $c$ sender's incentive to play $\mathbf{a}^H$ at the active learning phase histories follows from a supermodularity argument, in particular, among all pure strategies in the support of $\sigma_{c^*}$, playing
$\mathbf{a}^H$ at the active learning phase history minimizes the discounted average frequency of lying. Given type $c^*$ sender's indifference between these pure strategies, type $c$ sender strictly prefers pure strategies that prescribe $\mathbf{a}^H$ at those histories.

\subsection{Constructing Punishment Phase Strategy}\label{subC.1}
Recall the definitions of $v^N$ and $v^H$ in section 4.
Let $v^O \equiv (0,...,0) \in \mathbb{R}^n$ be the sender's payoff from action profile $(\mathbf{a}^H,\mathbf{b}^N)$.
For every $\rho' \in [ \rho^*, 1]$, let
\begin{equation}\label{C.2}
    w(\rho') \equiv \frac{\rho' (1-p_h) c_1}{p_h +\rho' (1-p_h) c_1}v^H
    + \frac{(1-\rho') p_h}{p_h +\rho' (1-p_h) c_1}v^O
    + \frac{\rho' p_h}{p_h +\rho' (1-p_h) c_1}v^N \in \mathbb{R}^n.
\end{equation}
One can verify that the first entry of $w(\cdot)$ equals $0$, and the other entries are strictly increasing in $\rho$. According to (\ref{5.7}),
\begin{equation}\label{C.3}
w(\rho^*)= \Big(
\underline{v}(c_1),...,\underline{v}(c_n)
\Big).
\end{equation}
\begin{Proposition}\label{Prop6}
Suppose $c_1(c^*-1) \leq 2$.
For every $\varepsilon>0$ and $\rho' \in (\rho^*,1]$, there exists $\underline{\delta} \in (0,1)$ such that for every $\pi \in \Delta(\mathcal{C})$ with $\pi_n \geq \varepsilon$ and $\delta>\underline{\delta}$, there exists an equilibrium in which the sender's payoff is $w(\rho')$.
\end{Proposition}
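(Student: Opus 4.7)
The plan is to mirror the equilibrium construction from Appendix \ref{secA} that establishes Proposition \ref{Prop4}, substituting the role of the payoff vector $v^L$ by $v^O$ and the target payoff $v(\rho)$ by $w(\rho')$. I would track along every on-path history $h^t$ three pieces of data: the receiver's posterior probability $\eta(h^t)$ on type $c_1$, the lowest-cost type $\underline{c}(h^t)$ still in the support of the receiver's belief, and convex weights $p^H(h^t), p^O(h^t), p^N(h^t)$ of $v^H, v^O, v^N$ in the promised continuation payoff, with initial values read off from (\ref{C.2}). The continuation payoff at every on-path history would be constrained to lie in a polytope whose vertices are analogues of those used in Appendix \ref{secA}, chosen so that its first coordinate is always nonnegative and equals zero at $v^O$, $v^N$, and $w(\rho')$; this built-in feature is what delivers type $c_1$ exactly his minmax payoff on the equilibrium path.

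The equilibrium would again split into an active learning phase (Class 1 and Class 2 histories), a rebounding phase to recover continuation values that drift to the polytope boundary, and an absorbing phase of play consisting of mixtures of $(\mathbf{a}^H,\mathbf{b}^T)$ and $(\mathbf{a}^L,\mathbf{b}^N)$. In the active learning phase, after message $h$ the receiver mixes between $\mathbf{b}^T$ and $\mathbf{b}^N$, with the mixing probability chosen so that the per-period realized distribution over $\{(\mathbf{a}^H,\mathbf{b}^T),(\mathbf{a}^H,\mathbf{b}^N),(\mathbf{a}^L,\mathbf{b}^N)\}$ is consistent with $w(\rho')$ and the laws of motion for $p^H,p^O,p^N$ analogous to (\ref{A.6})--(\ref{A.8}). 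The belief dynamics retain the form (\ref{A.2})--(\ref{A.3}) with learning rate $\lambda$ small enough that (\ref{A.4}) ensures $\eta(h^t)$ does not collapse on path. Incentive compatibility for types in $\underline{\mathcal{C}}$ follows from the single-crossing property of the sender's stage-game payoff in lying cost, and promise-keeping is obtained by adapting Lemma A.0 and Lemmas \ref{L1}--\ref{L3} to the new polytope, which yields eventual almost-sure absorption into the phase where learning has stopped.

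The main obstacle is verifying type $c^*$'s incentive constraints at the interface between the active learning and absorbing phases. Type $c^*$ is ethical and strictly prefers honesty pointwise, so his mixing at Class 1 histories requires a linear indifference condition that is solvable whenever the target continuation value lies in the polytope; but at the first absorbing-phase history where only $c^*$ remains in the receiver's support, the continuation payoff promised by (\ref{C.1}) requires $\widetilde{\rho}(h^t)\in[0,1]$, and this feasibility condition is precisely where the hypothesis $c_1(c^*-1)\leq 2$ is needed. Concretely, I expect the bound $c_1(c^*-1)\leq 2$ to be equivalent to $v_{c^*}(h^t)\in[\underline{v}(c^*),\overline{v}(c^*)]$ along the on-path trajectory, which is exactly the range in which (\ref{C.1}) admits a solution with $\widetilde{\rho}(h^t)\in[0,1]$. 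Once this incentive is verified and the receiver's indifference between $\mathbf{b}^T$ and $\mathbf{b}^N$ is pinned down by the standard Bayesian computation on the type distribution, sequential rationality off-path is secured by the same trembling argument as in Appendix \ref{secA}, taking completely mixed strategies whose off-path deviation rates are scaled to respect the ordering of types so that every off-path history has a unique vanishing explanation.
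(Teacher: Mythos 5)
Your high-level plan (a recursive construction with reputation, remaining-support, and promised-value weights as state variables, driving play into an absorbing phase) is the right template, but the punishment equilibrium is the \emph{mirror image} of the Appendix \ref{secA} construction, not a copy of it, and two of your structural choices would break it. First, you have the receiver mix between $\mathbf{b}^T$ and $\mathbf{b}^N$ after message $h$ in the active learning phase. A myopic receiver can mix only if she is exactly indifferent, i.e., only if the conditional probability of $\mathbf{a}^L$ at that history is exactly $\rho^*$; but the target $w(\rho')$ with $\rho'>\rho^*$ requires lying with conditional frequency strictly above $\rho^*$ wherever the sender is punished, which makes $\mathbf{b}^N$ a strict best reply. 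Worse, if the receiver trusts with positive probability while the sender lies with positive probability, the outcome $(\mathbf{a}^L,\mathbf{b}^T)$ — which you correctly exclude from the support, and which pays the sender $v^L\gg 0$ — occurs with positive probability, since the receiver cannot condition on $\mathbf{a}$, only on $m$. This would push type $c_1$ above his minmax of $0$ and defeat the purpose of the phase. The paper instead has the receiver play $\mathbf{b}^N$ \emph{purely} at all Class 1 and Class 2 histories and reserves $\mathbf{b}^T$ for absorbing-phase histories at which the sender is purely honest.

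Second, your state variables point the wrong way. To sustain $\mathbf{b}^N$ the receiver must believe the sender lies with probability at least $\rho^*$, so the reputational anchor is the \emph{lowest}-cost (most dishonest) type $c_n$: the paper tracks $\zeta(h^t)=\Pr(c_n)$, which \emph{rises} after a lie and \emph{falls} after an honest report of the low state (equations (\ref{C.4})--(\ref{C.5})), and in Class 2 it is the \emph{highest}-cost type remaining, $\overline{c}(h^t)$, who mixes toward honesty while all other types lie. You instead keep $\eta(h^t)=\Pr(c_1)$, the lowest remaining cost $\underline{c}(h^t)$, and the dynamics (\ref{A.2})--(\ref{A.3}) under which honesty builds reputation — that machinery sustains trust, not distrust, and the analogue of Lemma A.0 you need bounds the frequency of \emph{honesty}, not of lying. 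Two smaller points: $v^N$ has first coordinate $-c_1(1-p_h)<0$, so it is not a zero-first-coordinate vertex of your polytope; and the hypothesis $c_1(c^*-1)\leq 2$ is not where you place it — equation (\ref{C.1}) and the feasibility of $\widetilde{\rho}(h^t)$ belong to the outer construction of Proposition \ref{Prop5}, while within Proposition \ref{Prop6} the condition plays no role in the recursion itself; it is used afterwards to compare $v_i(\rho)$ with $w_i(\rho')$ and sort types into those who pool with $c^*$ and those who separate.
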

To understand how Proposition \ref{Prop6} completes the proof, notice that
\begin{itemize}
  \item $v_i(\rho^*) \geq w_i(\rho^*)$ for every $i \in \{1,2,...,n\}$ such that $c_1(c_i-1) \in [0, 2]$.
  \item $v_i(\rho^*) < w_i(\rho^*)$ for every $i \in \{1,2,...,n\}$ such that $c_1(c_i-1) >2$.
\end{itemize}
For every $i$ such that $c_i >1$, $w_i(\rho')$ is strictly increasing in $\rho'$ and $v_i(\rho)$ is strictly decreasing in $\rho$, with both functions being continuous. This suggests that
for every $\rho \in [0,\rho^*)$,
suppose
the sender attains payoff $v(\rho)$ in the auxiliary game, then
there exists $\rho' \in (\rho^*,1]$
such that
\begin{itemize}
  \item $w_i(\rho') < v_i(\rho)$ for every $i \in \{1,2,...,n\}$ with $c_1(c_i-1) \in [0, 2]$.
\end{itemize}
When the chosen $\rho$ and $\rho'$ in the previous step are both close enough to $\rho^*$,
\begin{itemize}
  \item $w_k(\rho') > v_k(\rho)$ for every $k \in \{1,2,...,n\}$ with $c_1(c_k-1) >2$.
\end{itemize}
Provided that $\delta$ is close enough to $1$,
\begin{itemize}
  \item every type with  $\overline{v}(c_i) \geq \underline{v}(c_i)$, i.e.,
 $c_1(c_i-1) \in [0, 2]$
 strictly prefers to follow the equilibrium strategy of type $c^*$
  \item every type with $\overline{v}(c_i) < \underline{v}(c_i)$ strictly prefers to play $\mathbf{a}^H$ in the active learning phase and strictly prefers to send message $l$ at the first history where the sender has never lied before and the receivers have ruled out types in $\underline{\mathcal{C}}$.
\end{itemize}
\begin{proof}[Proof of Proposition 6:]
I keep track of the following state variables:
\begin{itemize}
  \item[1.] The probability of type $c_n$ sender in the receiver's posterior belief, denoted by $\zeta(h^t)$. I call this the sender's \textit{reputation} at $h^t$.
  \item[2.] The promised continuation value to the sender, denoted by $v(h^t) \in \mathbb{R}^n$. Given that it is a convex combination of $v^H$, $v^N$ and $v^O$, it is equivalent to keep track of the convex weights of $v^H$, $v^N$ and $v^O$ in $v(h^t)$, which I denote by $p^H(h^t)$, $p^N(h^t)$, and $p^O(h^t)$, respectively.
  \item[3.] The highest-cost type in the support of receivers' posterior belief, denoted by $\overline{c}(h^t)$, as well as
  its probability according to the receiver's belief at $h^t$.
\end{itemize}
The initial values of these state variables are $\zeta(h^0)=\pi_n$, $p^H(h^0)=\frac{\rho' (1-p_h) c_1}{p_h +\rho' (1-p_h) c_1}$, $p^O(h^0)=\frac{(1-\rho') p_h}{p_h +\rho (1-p_h) c_1}$, $p^N(h^0) = \frac{\rho' p_h}{p_h +\rho' (1-p_h) c_1}$, $\overline{c}(h^0)=c_1$, and the probability of type $\overline{c}(h^0)$ is $\pi_1$. Recall the definition of the sender's effective discount factor:
\begin{equation*}
    \widehat{\delta} \equiv \delta \frac{1-p_h}{1-\delta p_h}.
\end{equation*}
Let $\zeta^* \in (0,\pi_n)$ and $\lambda>0$ be constants, defined similarly as the proof of Proposition \ref{Prop4}. I partition the set of on-path histories into three subsets, depending on the value of $p^O(h^t)$.
\begin{itemize}
  \item Class 1 histories: $p^O(h^t) \geq 1-\widehat{\delta}$,
  \item Class 2 histories: $p^O(h^t) \in (0, 1-\widehat{\delta})$,
  \item Class 3 histories: $p^O(h^t) =0$.
\end{itemize}

\paragraph{Class 1 Histories:} For every $h^t$ such that $p^O(h^t) \geq 1-\widehat{\delta}$,
\begin{itemize}
  \item The receiver plays $\mathbf{b}^N$.
  \item All types of sender only play $\mathbf{a}^H$ and $\mathbf{a}^L$ with positive probability.
  The sender of types $c_1$ to $c_{n-1}$ play the same action, and type $c_n$ plays differently, with probabilities pinned down by:
  \begin{equation}\label{C.4}
    \zeta (h^t,(l,h))-\zeta^*=\min \{ \big(1+\lambda (1-\rho^*)\big) (\zeta(h^t)-\zeta^*),1-\zeta^* \},
  \end{equation}
  \begin{equation}\label{C.5}
    \zeta (h^t,(l,l))-\zeta^*=\big(1-\lambda \rho^*\big)(\zeta(h^t)-\zeta^*),
  \end{equation}
  and $\zeta(h^t,(h,h))=\zeta(h^t,(h,l))=\zeta(h^t)$.
\end{itemize}
One can verify that according to the receiver's belief, $\mathbf{a}^L$ is played with probability at least $\rho^*$, and therefore, she has an incentive to play $\mathbf{b}^N$. The sender's continuation value is defined recursively. In particular, $v(h^t,(h,h))=v(h^t,(h,l))=v(h^t)$,
\begin{equation}\label{C.6}
    v(h^t,(l,l)) = \frac{p^H(h^t)}{\widehat{\delta}}v^H
    + \frac{p^N(h^t)}{\widehat{\delta}}v^N
    + \frac{p^O(h^t)-(1-\widehat{\delta})}{\widehat{\delta}}v^O,
\end{equation}
and if $\zeta(h^t,(l,h))<1$, then
\begin{equation}\label{C.7}
    v(h^t,(l,h)) = \frac{p^H(h^t)}{\widehat{\delta}}v^H
    + \frac{p^N(h^t)-(1-\widehat{\delta})}{\widehat{\delta}}v^N
    + \frac{p^O(h^t)}{\widehat{\delta}}v^O;
\end{equation}
if $\zeta(h^t,(l,h))=1$, then
\begin{equation}\label{C.8}
    v(h^t,(l,h)) = q(h^t) v^H +(1-q(h^t)) v^N,
\end{equation}
where $q(h^t)$ is pinned down by:
\begin{equation}\label{C.9}
    q(h^t) p_h -(1-q(h^t)) c_n(1-p_h)
    =\frac{p^H(h^t)}{\widehat{\delta}} p_h - \frac{p^N(h^t)-(1-\widehat{\delta})}{\widehat{\delta}} c_n(1-p_h).
\end{equation}
\paragraph{Class 2 Histories:} For every $h^t$ such that $p^O(h^t) \in (0, 1-\widehat{\delta})$, the receiver plays $\mathbf{b}^N$.
All types of sender in the support of receiver's belief at $h^t$ plays $\mathbf{a}^L$ except for type $\overline{c}(h^t)$, who mixes between $\mathbf{a}^H$ and $\mathbf{a}^L$. To specify type $\overline{c}(h^t)$'s
 mixing probabilities,
let
\begin{equation}\label{C.10}
    l(h^t) \equiv \# \Big\{
    h^s \Big| h^s\prec h^t, h^s \textrm{ belongs to Class 2}, \omega_{s}=l, \textrm{ and } \overline{c}(h^s)=\overline{c}(h^t)
    \Big\}
\end{equation}
be the number of histories that (1) strictly precede $h^t$, and (2) the highest lying cost type in the support of the receiver's belief is
$\overline{c}(h^t)$, and (3) the state realized at history $h^s$ is $l$.
\begin{enumerate}
  \item If $\overline{c}(h^t)=c_j$ with $j \leq n-2$, then
type $\overline{c}(h^t)$ plays $\mathbf{a}^H$ at $h^t$ with probability
\begin{equation}\label{C.11}
\frac{1}{k_j-l(h^t)}
\end{equation}
and $\mathbf{a}^L$ with complementary probability, with
$k_j$ being the smallest integer $k \in \mathbb{N}$ such that:
\begin{equation}\label{C.12}
    \frac{\pi_j/k}{\pi_j/k +\pi_{j+1}+...+\pi_n} \leq 1-\rho^*.
\end{equation}
  \item If $\overline{c}(h^t)=c_{n-1}$, then
type $c_{n-1}$ sender plays $\mathbf{a}^H$ at $h^t$ with probability
\begin{equation}\label{C.13}
    \min\{1, \frac{1-\rho^*}{1-\zeta(h^t)} \},
\end{equation}
and $\mathbf{a}^L$ with complementary probability.
\end{enumerate}
The sender's continuation value is given by $v(h^t,(h,h))=v(h^t)$, $v(h^t,(h,l))=v(h^t)$,
\begin{equation}\label{C.14}
v(h^t,(l,l)) \equiv  Q(h^t) v^H +(1-Q(h^t)) v^N,
  \end{equation}
  where $Q(h^t)$ is pinned down by the following equation:
  \begin{equation}\label{C.15}
    \frac{p^H(h^t)}{\widehat{\delta}} p_h
    -\frac{p^N(h^t)}{\widehat{\delta}} \overline{c}(h^t) (1-p_h)
    =Q(h^t) p_h -(1-Q(h^t)) \overline{c}(h^t) (1-p_h).
  \end{equation}
Therefore, type $\overline{c}(h^t)$ sender weakly prefers to send message $l$ when the realized state is $l$, while other types of senders in the support of receiver's belief at $h^t$ strictly prefers to send message $h$ when the realized state is $l$.
The sender's continuation value after sending message $h$ when $\omega_t=l$
   depends on whether $\zeta(h^t,(l,h))$ equals $1$ or not, with
$\zeta(h^t,(l,l))$ computed via Bayes Rule given the receiver's belief at $h^t$ and type $\overline{c}(h^t)$'s mixing probability at $h^t$:
\begin{enumerate}
  \item If $\zeta(h^t,(l,h))<1$, then the sender's continuation payoff at $(h^t,(l,h))$ is:
  \begin{equation*}
        v(h^t,(l,h))= \frac{p^H(h^t)}{\widehat{\delta}} v^H
    +\frac{p^O(h^t)}{\widehat{\delta}} v^O
    +\frac{p^N(h^t)-(1-\widehat{\delta})}{\widehat{\delta}} v^N.
  \end{equation*}
  \item If $\zeta(h^t,(l,h))=1$, then the sender's continuation payoff at $(h^t,(l,h))$ is:
\begin{equation*}
    v(h^t,(l,h)) = q(h^t) v^H +(1-q(h^t)) v^N,
\end{equation*}
where $q(h^t)$ is pinned down by:
\begin{equation*}
    q(h^t) p_h -(1-q(h^t)) c_n(1-p_h)
    =\frac{p^H(h^t)}{\widehat{\delta}} p_h - \frac{p^N(h^t)-(1-\widehat{\delta})}{\widehat{\delta}} c_n(1-p_h).
\end{equation*}
\end{enumerate}

\paragraph{Class 3 Histories:} For every $h^t$ such that $p^O(h^t) =0$, players' strategies are the same as in the absorbing phase in the proof of Proposition \ref{Prop4}. Namely, learning stops on the equilibrium path, all types of the sender in the support of receiver's belief plays the same action, the continuation play on the equilibrium path consists only of outcomes $(\mathbf{a}^H,\mathbf{b}^T)$
and $(\mathbf{a}^L,\mathbf{b}^N)$. Moreover, all types in the support of receiver's belief receives payoff at least zero. Such a construction is presented in Appendix A,
which I omit to avoid repetition.

\paragraph{Promise Keeping Constraint:} Verifying the promise keeping constraint uses the same argument as the proof of Theorem 1 in Pei (2019). In particular, recall the definitions of \textit{reduced outcome path} in Appendix \ref{secA}. The formulas for posterior beliefs in (\ref{C.4}) and (\ref{C.5}) suggests a result similar to Lemma A.1 in Pei (2019) that along every reduced outcome path, for example from $h^0$ to $h^t$, if the discounted average frequency of message $h$ divided by the discounted average frequency of message $l$ exceeds $\frac{\rho^*}{1-\rho^*}$, then play will reach a Class 2 or Class 3 history before play reaching $h^t$. Using this lemma and the sender's strategy at Class 2 histories, one can establish that along every reduced outcome path, the number of periods in which play stays at Class 2 histories (while the realized state is $l$) is uniformly bounded from above. As $t \rightarrow \infty$, play reaches a Class 3 history with probability $1$ and the continuation payoff at those histories can be delivered via an equilibrium in which learning about the sender's lying cost does not happen on the equilibrium path.

\end{proof}
\section{Consequentialism View on Lying}\label{secD}
My baseline model embodies the view that the sender suffers a psychological cost of lying regardless of the harm it causes on the receiver. I adopt a different view in this section, in which the sender suffers from the lying costs if and only if he is trusted by the receiver and that his lie has caused negative payoff consequences to the receiver. Formally, let $\boldsymbol{\beta}: M \rightarrow \Delta (A)$ be the sender's belief about receiver's reaction to his messages. The sender's stage-game lying cost is:
\begin{equation}\label{4.1}
    c \cdot \mathbf{1}\{m \neq \omega\} \cdot
    \Big(
    \max_{m' \in M} u_r(\omega, \boldsymbol{\beta}(m'))
    -u_r(\omega, \boldsymbol{\beta}(m))
    \Big).
\end{equation}
Players' stage-game payoffs under the four pure action profiles defined in section \ref{sec3} are given by the following matrix:
\begin{center}
\begin{tabular}{| c | c | c |}
  \hline
  $-$ & $\mathbf{b}^T$ & $\mathbf{b}^N$ \\
  \hline
  $\mathbf{a}^H$ & $p_h,p_h$ & $0,0$ \\
  \hline
  $\mathbf{a}^L$ & $p_h+(1-c)(1-p_h),2p_h-1$ & $0,0$ \\
  \hline
\end{tabular}
\end{center}
For every $j \in \{1,2,...,n\}$, let
\begin{equation}\label{4.2}
    v_j^{\dagger} \equiv p_h \frac{2-c_j}{2-c_1}.
\end{equation}
Note that $v_1^{\dagger}=p_h=v_1^*$, $v_j^*> v_j^{\dagger}>p_h$ for every $j \geq 2$.
Let $v^{\dagger} \equiv (v_1^{\dagger},...,v_n^{\dagger})$.
Player $1$'s highest equilibrium payoff is characterized in the following theorem:
\begin{Theorem1}
There is no BNE such that type $c_1$ attains payoff strictly more than $v_1^{\dagger}$. For every $\varepsilon>0$, there exists $\underline{\delta} \in (0,1)$ such that when $\delta >\underline{\delta}$:
\begin{enumerate}
\item There is no BNE such that type $c_j$ attains payoff more than $v_j^{\dagger}+\varepsilon$ for some $j \in \{2,3,...,n\}$.
\item There exists a sequential equilibrium in which the sender attains payoff within $\varepsilon$ of $v^{\dagger}$.
\end{enumerate}
\end{Theorem1}
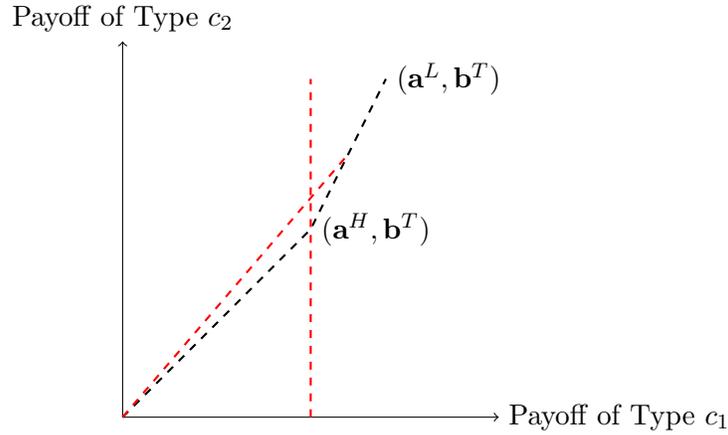
\begin{figure}\label{Figure2}
\begin{center}
\begin{tikzpicture}[scale=0.5]
\draw [->] (0,0)--(10,0)node[right]{Payoff of Type $c_1$};
\draw [->] (0,0)--(0,10)node[above]{Payoff of Type $c_2$};
\draw [dashed, thick] (0,0)--(5,5)node[right]{$(\mathbf{a}^H,\mathbf{b}^T)$}--(7,9)node[right]{$(\mathbf{a}^L,\mathbf{b}^T)$};
\draw [dashed, red, thick] (0,0)--(6,7);
\draw [dashed, red, thick] (5,0)--(5,9);
\end{tikzpicture}
\caption{Intersection between the two red lines is the sender's highest equilibrium payoff under consequentialism lying cost.}
\end{center}
\end{figure}
The highest payoff vector $(v_1^{\dagger},...,v_n^{\dagger})$ is depicted in Figure 2.
The proof of the necessity part is similar to that of Propositions \ref{Prop1}, \ref{Prop2} and \ref{Prop3}, which I omit to avoid repetition.
Since
players' stage-game payoffs are monotone-supermodular, the sufficiency part of the proof is similar to the proof of Theorem 1 in Pei (2019), with the exception that $\delta$ needs to be replaced by a strictly lower number:
\begin{equation}\label{4.3}
    \widehat{\delta} \equiv \delta \frac{1-p_h}{1-\delta p_h},
\end{equation}
and the receiver's belief as well as the sender's continuation value change in the active learning phase only when the realized state is $l$ but not when the realized state is $h$.
My result on the sender's equilibrium behavior in equilibria that are approximately optimal for the sender also generalizes, which is stated as Theorem 2'.
\begin{Theorem2}
Suppose $n \geq 2$. For every small enough $\varepsilon>0$, there exists $\underline{\delta} \in (0,1)$ such that when $\delta >\underline{\delta}$, for every equilibrium in which the sender attains payoff within $\varepsilon$ of $v^{\dagger}$, no type of the sender plays $\mathbf{a}^H$ and $\mathbf{a}^L$ with positive probability at all on-path histories.
\end{Theorem2}
\begin{proof}[Proof of Theorem 2':] Suppose toward a contradiction that in some BNE $\sigma$ that attains payoff within $\varepsilon$ of $v^{\dagger}$, and
there exists a type $c_j$ that has a completely mixed best reply against $\sigma_r$. Therefore, playing $\mathbf{a}^H$ at every on-path history and playing $\mathbf{a}^L$ at every on-path history are both his best replies against $\sigma_r$. Theorem 1 in Liu and Pei (2020) implies that:
\begin{itemize}
  \item For every $i<j$, type $c_i$ plays $\mathbf{a}^H$ with probability $1$ at every on-path history.\footnote{Different from the binary action game studied in Pei (2019), it is \textit{not} true that for every $k<j$, type $c_k$ plays $\mathbf{a}^L$ with probability $1$ at every on-path history. This is because the sender has other stage-game actions, such as lying in every state, in which case he suffers strictly higher cost of lying compared to $\mathbf{a}^L$.}
\end{itemize}
I consider two cases separately.
First, if $j \geq 2$, then type $c_1$ plays $\mathbf{a}^H$ with probability $1$ at every on-path history. For type $c_2$, he fully separates from type $c_1$ the first time he sends message $h$ when the state is $l$, after which he is the type that has the highest lying cost and his continuation payoff is
no more than $p_h$. As a result, his payoff in period $0$ is no more than $(1-\delta)+\delta p_h$, which
is strictly lower than $v_2^{\dagger}$ as $\delta \rightarrow 1$. This leads to a contradiction.

Second, if $j=1$, then type $c_1$ finds it optimal to play $\mathbf{a}^H$ in every period, and is also optimal to play $\mathbf{a}^L$ in every period. Since the sender's equilibrium payoff is within $\varepsilon$ of $v^{\dagger}$, type $c_1$'s payoff is at least $v_1^*-\varepsilon$ by playing $\mathbf{a}^L$ in every period, and type $c_2$'s payoff from doing so is no more than $v_2^*+\varepsilon$.
Since $p_h<1/2$, the receiver's stage-game action of playing $H$ following every message is strictly suboptimal, and cannot be played at any on-path history. Among the remaining three stage-game actions of the receiver's, the sender's stage-game payoff is $1-(1-p_h)c$ under $(\mathbf{a}^L,\mathbf{b}^T)$, $0$
under $(\mathbf{a}^L,\mathbf{b}^N)$.
Let $Q_L$ be the occupation measure of outcome
$(\mathbf{a}^L,\mathbf{b}^T)$ when the sender plays $\mathbf{a}^L$ in every period and the receiver plays according to $\sigma_r$.
Type $c_1$'s payoff by playing $\mathbf{a}^L$ in every period equals:
\begin{equation*}
    Q_L \big( 1-(1-p_h)c_1 \big),
\end{equation*}
which is no less than $p_h-\varepsilon$ since playing $\mathbf{a}^L$ in every period is his equilibrium best reply, and his equilibrium payoff is at least $p_h-\varepsilon$. This gives:
\begin{equation*}
    Q_L \geq \frac{p_h-\varepsilon}{1-(1-p_h)c_1}.
\end{equation*}
By playing $\mathbf{a}^L$ in every period, type $c_2$'s payoff is:
\begin{equation*}
    Q_L \big( 1-(1-p_h)c_2 \big).
\end{equation*}
The inequality on $Q_L$ leads to a lower bound on the above expression, which is
$(p_h-\varepsilon) \frac{1-(1-p_h)c_2}{1-(1-p_h)c_1}$.
Given that $p_h<1/2$, this is strictly greater than $v_2^{\dagger}+\varepsilon$ when $\varepsilon$ is small enough.
This contradicts the conclusion of Theorem 1' that type $c_2$'s equilibrium payoff when $\delta$ is large enough cannot exceed $v_2^{\dagger}+\varepsilon$.
\end{proof}

\end{spacing}

\end{document}